\newcommand{\cmark}{\text{\ding{51}}}%
\newcommand{\xmark}{\text{\ding{55}}}%
\newcommand{\ltl}{\text{LTL}\xspace}
\newcommand{\rltl}{\text{rLTL}\xspace}
\newcommand{\ltlformulas}{\ensuremath{\Phi_{\kern-1pt\mathit{LTL}}}}
\newcommand{\rltlformulas}{\ensuremath{\Phi_{\kern-1pt\mathit{rLTL}}}}
\newcommand{\ltleval}{V}
\newcommand{\rltleval}{V_r}
\newcommand{\rltlalteval}{\rltleval}
\newcommand{\ltlmoneval}{V^m }
\newcommand{\rltlmoneval}{V_r^m }
\newcommand{\rltlsem}{\mathrm{ltl}}
\newcommand{\suff}[2]{#1[#2,\infty)}
\newcommand{\rot}[2]{\rho(#1,#2)}
\newcommand{\set}[1]{\{ #1 \}}
\newcommand{\card}[1]{| #1 |}
\newcommand{\size}[1]{| #1 |}
\newcommand{\lang}{{\mathcal L}}
\newcommand{\bigo}{{\mathcal O}}
\newcommand\ukno{\mathord{?}}
\newcommand{\bools}{\mathbb{B}}
\newcommand{\ternaries}{\mathbb{B}^{\ukno}}
\newcommand{\aut}{\ensuremath{\mathcal A}}
\newcommand{\autB}{\ensuremath{\mathcal B}}
\newcommand{\autC}{\ensuremath{\mathcal C}}
\newcommand{\mach}{\ensuremath{\mathcal M}}
\newcommand{\tsys}{\ensuremath{\mathcal T}}
\newcommand{\init}{I}
\newcommand{\true}{\ensuremath{\top}}
\newcommand{\false}{\ensuremath{\bot}}
\renewcommand{\implies}{\rightarrow}
\newcommand{\ltlmon}{\texttt{LTL\textsubscript{3}~tools}\xspace}
\newcommand{\rltlmon}{\texttt{rLTL-mon}\xspace}
\newcolumntype{L}{>{$}l<{$}}
\newcolumntype{R}{>{$}r<{$}}
\newcolumntype{C}{>{$}c<{$}}
\tikzset{robust/.style={line width=.16ex,line join=round}}
\let\Box\relax
\DeclareMathOperator{\Box}{%
	\text{%
		\tikz[baseline]{%
    			\draw[robust] (0ex,-.1ex) -- (0ex, 1.4ex) -- (1.5ex, 1.4ex) -- (1.5ex, -.1ex) -- cycle;%
		}%
	}%
}
\DeclareMathOperator{\Boxdot}{%
	\text{%
		\tikz[baseline]{%
    			\draw[robust] (0ex, -.1ex) -- (0ex, 1.4ex) -- (1.5ex, 1.4ex) -- (1.5ex, -.1ex) -- cycle;%
	    		\fill (.75ex, .65ex) circle (.15ex);%
    		}%
	}%
}
\let\Diamond\relax
\DeclareMathOperator{\Diamond}{%
	\text{%
		\tikz[baseline]{%
			\draw[robust] (0ex,.6ex) -- (.95ex, 1.55ex) -- (1.9ex, .6ex) -- (.95ex, -.35ex) -- cycle;%
		}%
	}%
}
\let\Diamonddot\relax
\DeclareMathOperator{\Diamonddot}{%
	\text{%
		\tikz[baseline]{%
			\draw[robust] (0ex,.6ex) -- (.95ex, 1.55ex) -- (1.9ex, .6ex) -- (.95ex, -.35ex) -- cycle;%
			\fill (.95ex, .6ex) circle (.15ex);%
		}%
	}%
}
\DeclareMathOperator{\R}{%
	\text{%
		\tikz[baseline]{%
			\node[inner sep=0pt, anchor=base, font=\normalfont\bfseries] {R};%
		}%
	}%
}
\DeclareMathOperator{\Rdot}{%
	\text{%
		\tikz[baseline]{%
			\node[inner sep=0pt, anchor=base, font=\normalfont\bfseries] {R};%
			\fill (-.025ex, 1.175ex) circle (.15ex);%
		}%
	}%
}
\DeclareMathOperator{\Rimplies}{%
  \implies
}
\let\U\relax
\DeclareMathOperator{\U}{%
	\text{%
		\tikz[baseline]{%
			\node[inner sep=0pt, anchor=base, font=\normalfont\bfseries] {U};%
		}%
	}%
}
\DeclareMathOperator{\Udot}{%
	\text{%
		\tikz[baseline]{%
			\node[inner sep=0pt, anchor=base, font=\normalfont\bfseries] {U};%
			\fill (.1ex, .9ex) circle (.15ex);%
		}%
	}%
}
\DeclareMathOperator{\X}{%
	\text{%
		\tikz[baseline]{%
   			\draw[robust] (.75ex, .65ex) circle (.75ex);%
    		}%
	}%
}
\DeclareMathOperator{\Xdot}{%
	\text{%
		\tikz[baseline]{%
    			\draw[robust] (.75ex, .65ex) circle (.75ex);%
	    		\fill (.75ex, .65ex) circle (.15ex);%
    		}%
	}%
}
\NewDocumentCommand{\threepartdef}{m m m m m O{\text{otherwise}}}{
  \left\{
    \begin{array}{lll}
      #1 & \quad\mbox{if } #2 \\
      #3 & \quad\mbox{if } #4 \\
      #5 & \quad\mbox{if } #6
    \end{array}
  \right.
}
\newcommand*{\etal}{%
    \@ifnextchar{.}%
        {et~al}%
        {et~al.\@\xspace}%
}
\newcommand*{\ie}{i.e.\@\xspace}
\newcommand*{\eg}{e.g.\@\xspace}
\pgfplotsset{compat=1.14}
\newcommand{\exampleend}{%
	\begin{tikzpicture}%
		\draw (0, 0) -| ++(1.25ex, 1.25ex); %
	\end{tikzpicture}%
}
\tikzset{every edge/.append style={shorten >=1pt}, >=stealth}
\title{From LTL to rLTL Monitoring:\\Improved Monitorability through Robust Semantics}
\titlerunning{From LTL to rLTL Monitoring}
\author
{
	Corto Mascle\inst{1} \and
	Daniel Neider\inst{2} \and
	Maximilian Schwenger\inst{3} \and\\
	Paulo Tabuada\inst{4} \and
	Alexander Weinert\inst{5} \and
	Martin Zimmermann\inst{6}
}
\authorrunning{C.\ Mascle et al.}
\institute{
ENS Paris-Saclay, Cachan, France\\
\and
Safety and Explainability of Learning Systems Group, Carl von Ossietzky University of Oldenburg, Germany\\
(This work was partly conducted at the Max Planck Institute for Software Systems, Kaiserslautern, Germany) \\
\email{daniel.neider@uol.de}
\and
Reactive Systems Group, Saarland University, Saarbrücken, Germany \\
%
\and
Department of Electrical and Computer Engineering, UCLA, Los Angeles, USA\\
%
\and
German Aerospace Center (DLR), Cologne, Germany\\
%
\and
University of Liverpool, Liverpool, United Kingdom\\
}
\begin{document}

\maketitle

\hyphenation{au-to-ma-ta}

\begin{abstract}
Runtime monitoring is commonly used to detect the violation of desired properties in safety critical cyber-physical systems by observing its executions.
Bauer et al.\ introduced an influential framework for monitoring Linear Temporal Logic~(LTL) properties based on a three-valued semantics for a finite execution: the formula is already satisfied by the given execution, it is already violated, or it is still undetermined, i.e., it can still be satisfied and violated by appropriate extensions of the given execution.
However, a wide range of formulas are not monitorable under this approach, meaning that there are executions for which satisfaction and violation will always remain undetermined no matter how it is extended.
In particular, Bauer et al.\ report that 44\% of the formulas they consider in their experiments fall into this category.

Recently, a robust semantics for LTL was introduced to capture different degrees by which a property can be violated. In this paper we introduce a robust semantics for finite strings and show its potential in monitoring: every formula considered by Bauer et al.\ is monitorable under our approach.
Furthermore, we discuss which properties that come naturally in LTL monitoring --- such as the realizability of all truth values --- can be transferred to the robust setting.
We show that LTL formulas with robust semantics can be monitored by deterministic automata, and provide tight bounds on the size of the constructed automaton. Lastly, we report on a prototype implementation and compare it to the LTL monitor of Bauer et al.\ on a sample of examples.
\end{abstract}

\section{Introduction}
\label{sec:intro}

Runtime monitoring is nowadays routinely used to assess the satisfaction of properties of systems during their execution. To this end, a monitor, a finite-state device that runs in parallel to the system during deployment, evaluates it with respect to a fixed property.
This is especially useful for systems that cannot be verified prior to deployment and, for this reason, can contain hidden bugs.  While it is useful to catch and document these bugs during an execution of a system, we find that the current approach to runtime verification based on Linear Temporal Logic~(\ltl)~\cite{BauerLeuckerSchallhart11} is not sufficiently informative, especially in what regards a system's robustness. Imagine that we are monitoring a property~$\varphi$ and that this property is violated during an execution. In addition to be alerted to the presence of a bug, there are several other questions we would like to have answered such as: Although $\varphi$ was falsified, was there a \emph{weaker} version of $\varphi$ that was still satisfied or did the system fail catastrophically?  Similarly, if we consider a property of the form $\varphi\implies \psi$, where $\varphi$ is an environment assumption and $\psi$ is a system guarantee, and the environment violates $\varphi$ \emph{slightly} along an execution can we still guarantee that $\psi$ is only \emph{slightly} violated? 

Answering these questions requires a logical formalism for specifying properties that provides meaning to terms such as \emph{weaker} and \emph{slightly}. Formalizing these notions within temporal logic, so as to be able to reason about the robustness of a system, was the main impetus behind the definition of \emph{robust Linear-time Temporal Logic} (\rltl)~\cite{DBLP:conf/csl/TabuadaN16}. While reasoning in \ltl yields a binary result, \rltl adopts a five-valued semantics representing different \emph{shades of violation}. Consider, for example, the specification $\Box a \implies \Box b$ requiring that $b$ is always satisfied provided $a$ is always satisfied. In \ltl, if the premise~$a$ is violated in a single position of the trace, then the specification is satisfied vacuously, eliminating all requirements on the system regarding $\Box b$. In this case, \rltl detects a mild violation of the premise and thus allows for a mild violation of the conclusion.

While recent work covers the synthesis~\cite{DBLP:conf/csl/TabuadaN16} and verification problem~\cite{DBLP:conf/hybrid/AnevlavisNPT19,DBLP:conf/cdc/AnevlavisPNT18,DBLP:conf/csl/TabuadaN16} for \rltl, the runtime verification problem is yet to be addressed, except for a preliminary version of the results in this paper presented in the 2020 International Conference on Hybrid Systems: Computation and Control~\cite{DBLP:conf/hybrid/MascleNSTW020}.
Since runtime verification can only rely on finite traces by its nature, interesting theoretical questions open up for \rltl with finite semantics. On the practical side, the very same reasons that make runtime verification for \ltl so useful also motivate the need for developing a finite semantics suitable for \rltl runtime verification. To this end, we tackle the problem of evaluating a property over infinite traces based on a finite prefix similarly to Bauer~\etal~\cite{BauerLeuckerSchallhart11}. If the available information is insufficient to declare a specification violated or satisfied, the monitor reports a $\ukno$. This concept is applied to each degree of violation of the \rltl semantics. Thus, the \rltl monitor's verdict consists of four three-valued bits, as the \rltl semantics is based on four two-valued bits. Each bit represents a degree of violation of the specification in increasing order of severity. 

As an example, consider an autonomous drone that may or may not be in a stable state\footnote{By this we mean, \eg, that the error in tracking a desired trajectory is below a certain threshold.}. The specification requires that it remains stable throughout the entire mission. However, if the take-off is shaky due to bad weather, the drone is unstable for the first couple of minutes. An \ltl monitor thus jumps to the conclusion that the specification is violated whereas an \rltl monitor only reports a \emph{partial} violation. As soon as the drone stabilizes, the \ltl monitor does not indicate any improvement while the \rltl monitor refines its verdict to also report a partial satisfaction.

Some interesting properties that come naturally with \ltl monitoring cannot be seamlessly lifted to \rltl monitoring. 
While it is obvious that all three truth values for finite trace \ltl, \ie, satisfied, violated, and unknown, can be realized for some prefix and formula, the same does not hold for \rltl. 
Intuitively, the second and third bit of the \rltl monitor's four-bit output for the property~$\Boxdot a$ represent whether $a$ eventually holds forever or whether it holds infinitely often, respectively. Based on a prefix, a monitor cannot distinguish between these two shades of violation, rendering some monitor outputs unrealizable.

In addition to that, we investigate how the level of informedness of an \ltl monitor relates to the one of an \rltl monitor. The first observation is that a verdict of an \ltl monitor can be refined at most once, from an unknown to either true or false. With \rltl semantics, however, a monitor can refine its output for a given formula up to four times. 
Secondly, an \ltl monitor can only deliver meaningful verdicts for \emph{monitorable}~\cite{Bauer:2010:CLS:1830408.1830410} properties. Intuitively, a property is monitorable if every prefix can be extended by a finite continuation that gives a definite verdict.
We adapt the definition to robust monitoring and show that neither does \ltl monitorability imply \rltl monitorability, nor vice versa. 

Notwithstanding the above, empirical data suggests that \rltl monitoring indeed provides more information than \ltl monitoring: This paper presents an algorithm synthesizing monitors for \rltl specifications. An implementation thereof allows us to validate the approach by replicating the experiments of Bauer~\etal~\cite{Bauer:2010:CLS:1830408.1830410}. As performance metric, we use \ltl and \rltl monitorability. 
While 44\% of the formulas considered by Bauer~\etal~\cite{Bauer:2010:CLS:1830408.1830410} are not \ltl-monitorable, we show all of them to be \rltl-monitorable.
This indicates that \rltl monitoring is an improvement over \ltl monitoring in terms of monitorability and complements the theoretical results with a practical validation.

This paper is an extended version of the work presented in the 2020 International Conference on Hybrid Systems: Computation and Control~\cite{DBLP:conf/hybrid/MascleNSTW020}. 
The main research contributions are a finite trace semantics for \rltl coupled with an investigation of its properties when compared to \ltl, as well as an algorithm to synthesize monitors for \rltl specifications. Our construction is doubly-exponential in the size of the formula, showing that \rltl monitoring is no more costly than \ltl monitoring.
In addition to the original work~\cite{DBLP:conf/hybrid/MascleNSTW020}, this article features
\begin{enumerate*}[label={(\roman*)}]
	\item a more detailed discussion of the properties of our finite trace semantics for rLTL,
	\item a new running example detailing each step of the monitor construction,
	\item a new example illustrating the nesting of rLTL operators,
	\item refined complexity bounds on our monitor construction, and
	\item all proofs omitted from the conference paper, which provide important additional insight into the problem of monitoring \rltl\ properties.
\end{enumerate*}

\subsection*{Related Work}

In runtime verification~\cite{Drunsinski2000,Havelund2004,Lee99runtimeassurance,MRS17} the specification is often given in \ltl~\cite{Manna:1995:TVR:211468}. 
While properties arguing about the past or current state of a system are always monitorable~\cite{DBLP:conf/tacas/HavelundR02}, \ltl can also express assumptions on the future that cannot be validated using only a finite prefix of a word.
Thus, adaptations of \ltl have been proposed which include different notions of a next step on finite words~\cite{Eisner:2003,Maler:1995:TAA:646702.701843}, lifting \ltl to a three- or four-valued domain~\cite{Bauer:2010:CLS:1830408.1830410,BauerLeuckerSchallhart11}, or applying predictive measures to rule out impossible extensions of words~\cite{Zhang:2012}.

Non-binary monitoring has also been addressed by adding quantitative measures such as counting events~\cite{DBLP:conf/fm/BarringerFHRR12,DBLP:conf/rv/MedhatBFJ16}. 
Most notably, Bartocci~\etal~\cite{DBLP:conf/cav/BartocciBNR18} evaluate the ``likelihood'' that a satisfying or violating continuation will occur.  
To this end, for a given prefix, they count how long a continuation needs to be such that the specification is satisfied/violated; these numbers are then compared against each other.
The resulting verdict is quinary: satisfying/violating, presumably satisfying/violating, or inconclusive.
This approach is similar in nature to our work as it assesses the degree of satisfaction or violation of a given prefix.  
However, the motivation and niche of both approaches differs:  
Bartocci~\etal's approach computes --- intuitively speaking --- the amount of work that is required to satisfy or violate a specification, which allows for estimating the likelihood of satisfaction.
Our approach, however, focuses on measuring the extent to which a specification was satisfied or violated.

Apart from that, monitoring tools collecting statistics~\cite{DBLP:conf/cmsb/AbbasRBSG17,DBLP:conf/esop/AlurFR16,Finkbeiner2005} become increasingly popular:
Snort~\cite{Roesch:1999:SLI:1039834.1039864} is a commercial tool for rule-based network monitoring and computing efficient statistics, Beep Beep 3~\cite{DBLP:conf/rv/Halle16} is a tool based on a query language allowing for powerful aggregation functions and statistical measures. On the downside, these tools impose the overhead of running a heavy-weight application on the monitored system. In contrast, we generate monitor automata out of an \rltl formula. 
Such an automaton can easily and automatically be implemented on almost any system with statically determined memory requirements and negligible performance overhead. 
Similarly, the Copilot~\cite{DBLP:conf/rv/PikeGMN10} framework based on synchronous languages~\cite{DBLP:conf/popl/CaspiPHP87,DBLP:conf/time/DAngeloSSRFSMM05} transforms a specification in a declarative data-flow language into a C implementation of a monitor with constant space and time requirements. 
Lola~\cite{Adolf:2017,DBLP:conf/time/DAngeloSSRFSMM05} allows for more involved computations, also incorporating parametrization~\cite{DBLP:conf/rv/FaymonvilleFST16} and real-time capabilities~\cite{Faymonville19} while retaining constant space and time requirements.

Another approach is to enrich temporal logics with quantitative measures such as taking either the edit distance~\cite{DBLP:journals/fmsd/JaksicBGNN18}, counting the number of possible infinite models for \ltl~\cite{DBLP:conf/atva/FinkbeinerT17,DBLP:journals/acta/TorfahZ18}, incorporating aggregation expressions into metric first-order temporal logic~\cite{DBLP:journals/fmsd/BasinKMZ15}, or using averaging temporal operators that quantify the degree of satisfaction of a signal for a specification by integrating the signal w.r.t.\ a constant reference signal~\cite{DBLP:conf/cav/AkazakiH15}.

Rather than enriching temporal logics with such strong quantitative measures, we consider a robust version of \ltl: \rltl\cite{DBLP:conf/hybrid/AnevlavisNPT19,DBLP:conf/cdc/AnevlavisPNT18,DBLP:journals/tocl/AnevlavisPNT22,DBLP:conf/csl/TabuadaN16}. 
Robust semantics yields information about to which degree a trace violates a property. 
We adapt the semantics to work with finite traces by allowing for intermediate verdicts. 
Here, a certain degree of violation can be classified as ``indefinite'' and refined when more information becomes available to the monitor. 
Similarly, for Signal Temporal Logic~\cite{DBLP:conf/birthday/MalerNP08,DBLP:conf/formats/MalerN04}, Fainekos~\etal~\cite{FAINEKOS20094262} introduced a notion of spacial robustness based on interpreting atomic propositions over the real numbers. The sign of the real number provides information about satisfaction/violation while its absolute value provides information about robustness, i.e., how much can this value be altered without changing satisfaction/violation. This approach is complementary to ours since the notion of robustness in \rltl is related to the temporal evolution of atomic propositions which are interpreted classically, i.e, over the Booleans. Donze~\etal~\cite{DBLP:conf/cav/DonzeFM13} introduced a notion of robustness closer to rLTL in the sense that it measures how long we need to wait for the truth value of a formula to change. 
For this, Cralley~\etal~\cite{tltk} presented a convenient toolbox, achieving high efficiency through parallel evaluation.
While the semantics of \rltl does not allow for quantifying the exact delay needed to change the truth value of a formula, it allows for distinguishing between the influence that different temporal evolutions, e.g., delays, persistence, and recurrence, have on the truth value of an LTL formula. 
Closer to \rltl is the work of Radionova~\etal~\cite{Rodionova:2016:TLF:2883817.2883839} (see also~\cite{DBLP:conf/atva/SilvettiNBB18}) that established an unexpected connection between LTL and filtering through a quantitative semantics based on convolution with a kernel. By using different kernels, one can express weaker or stronger interpretations of the same formula. However, this requires the user to choose multiple kernels and to use multiple semantics to reason about how the degradation of assumptions leads to the degradation of guarantees. In contrast, no such choices are required in \rltl.
Finally, it is worth mentioning that extensions similar to \rltl have been proposed for other temporal logics, such as prompt LTL and linear dynamic logic~\cite{DBLP:journals/corr/abs-1909-08538,NEIDER2021104810}.

Another venue for robust monitoring is machine learning.
Cheng~\cite{robustactivation} presents an algorithm for generating monitors evaluating the distance between the input of a neural net and its training data.
While neural nets are prone to fragility, the monitor is provably robust in the sense that minor input deviations invariably lead to minor changes in the output.
Similarly, Finkbeiner~\etal~\cite{robustrtlola} generate monitors for medical cyber-physical systems controlled by machine learned components.
Due to the complexity of the underlying specification language, they opt for the simpler task of analyzing the robustness of the specification instead.
If the specification is robust, then so will be the generated monitors.

\section{Robust Linear Temporal Logic}
\label{sec:defs}
Throughout this work, we assume basic familiarity with classical \ltl and refer the reader to a textbook for more details on the logic (see, \eg,~\cite{DBLP:books/daglib/0020348}).
Moreover, let us fix some finite set~$P$ of atomic propositions throughout the paper and define~$\Sigma = 2^P$.
We denote the set of finite and infinite words over~$\Sigma$ by~$\Sigma^*$ and~$\Sigma^\omega$, respectively. The empty word is denoted by $\varepsilon$ and  $\sqsubseteq$ and $\sqsubset$ denote the non-strict and the strict prefix relation, respectively.
Moreover, we denote the set of Booleans by~$\bools = \set{0,1}$.

The logics \ltl and \rltl share the same syntax  save for a dot superimposed on temporal operators.
More precisely, the syntax of \rltl is given by the grammar
\[
\varphi \coloneqq
	p  \mid
	\neg \varphi \mid
	\varphi \lor \varphi \mid
	\varphi \land \varphi \mid
	\varphi \Rimplies \varphi 
	\mid \Xdot \varphi \mid
	\varphi \Udot \varphi \mid
	\varphi \Rdot \varphi \mid
	\Diamonddot \varphi \mid
	\Boxdot \varphi,
\]
where $p$ ranges over atomic propositions in $P$ and the temporal operators $\Xdot$, $\Udot$, $\Rdot$, $\Diamonddot$ and $\Boxdot$ correspond to ``next'', ``until'', ``release'', ``eventually'', and ``always'', respectively.%
\footnote
{
	Note that we include the operators $\land$, $\Rimplies$, and $\Rdot$ explicitly in the syntax as they cannot be derived from other operators due to the many-valued nature of \rltl.
	Following the original work on \rltl~\cite{DBLP:conf/csl/TabuadaN16}, we also include the operators $\Diamonddot$ and $\Boxdot$ explicitly (which can be derived from $\Udot$ and $\Rdot$, respectively).
}
The size~$\card{\varphi}$ of a formula~$\varphi$ is the number of its distinct subformulas.
Furthermore, we denote the set of all \ltl and \rltl formulas over $P$ by $ \ltlformulas $ and $ \rltlformulas $, respectively.

The development of  \rltl was motivated by the observation that the difference between ``minor'' and ``major'' violations of a formula cannot be adequately described in a two-valued semantics. If an  \ltl  formula~$\varphi$, for example, demands that the property~$p$ holds at all positions of a word~$\sigma \in \Sigma^\omega$, then~$\sigma$ violates~$\varphi$ even if~$p$ does not hold at only a single position, a very minor violation.
The semantics of \ltl, however, does not differentiate between the~$\sigma$ above and a~$\sigma'$ in which the property~$p$ never holds, a major violation of the property~$\varphi$.

In order to alleviate this shortcoming, Tabuada and Neider introduced Robust Linear-time Temporal Logic (\rltl)~\cite{DBLP:conf/csl/TabuadaN16}, whose semantics allows for distinguishing various ``degrees'' to which a word violates a formula.
More precisely, the semantics of  \rltl are defined over the set $\bools_4 = \{ 0000, 0001, 0011, 0111, 1111 \}$ of five \emph{truth values}, each of which is a monotonically increasing sequence of four bits. We order the truth values in $\bools_4$ by $0000 < 0001 < 0011 < 0111 < 1111$.

Intuitively, this order reflects increasingly desirable outcomes. If the specification is $\Boxdot p$, the least desirable outcome, represented by $0000$, is that $p$ never holds on the entire trace. A slightly more desirable outcome is that $p$ at least holds \emph{sometime} but not infinitely often, which results in the value~$0001$.
An even more desirable outcome would be if $p$ holds infinitely often, while also being violated infinitely often, represented by $0011$. 
Climbing up the ladder of desirable outcomes, the next best one requires $p$ to hold infinitely often while being violated only finitely often, represented by the value~$0111$. 
Lastly, the optimal outcome fully satisfies $\Box p$, so $p$ holds the entire time, represented by $1111$. 
Thus, the first bit states whether $\Box p$ is satisfied, the second one stands for $\Diamond\Box p$, the third one for $\Box\Diamond p$, and the fourth one for $\Diamond p$. If all of them are $0$, $\Box\neg p$ holds. The robust release is defined analogously.

The robust eventually-operator considers future positions in the trace and returns the truth value with the least degree of violation, which is a maximization with respect to the order defined above. This closely resembles the \ltl definition. The robust until is defined analogously.

Based on this, the boolean conjunction and disjunction are defined as $\min$ and $\max$, respectively, w.r.t.\ the order defined above, which generalizes the classical definition thereof. 
For the implication, consider a specification $\Boxdot a \Rimplies \Boxdot g$, where $\Boxdot a$ is an assumption on the environment and $\Boxdot g$ is a system guarantee. 
If the truth value of $\Boxdot g$ is greater or equal to the one of $\Boxdot a$, the implication is fully satisfied.
Thus, the \rltl semantics takes the violation of the assumption into account and lowers the requirements on the guarantees. 
However, if the guarantee exhibits a greater violation than the assumptions, the truth value of the implication is the same as the one of the guarantee.
Lastly, the intuition behind the negation is that every truth value that is not $1111$ constitutes a violation of the specification. Thus, the negation thereof is a full satisfaction ($1111$). The negation of the truth value representing a perfect satisfaction ($1111$) is a full violation ($0000$).

To introduce the semantics, we need some additional notation: For a word $\sigma = \sigma(0) \sigma(1) \sigma(2) \cdots \in \Sigma^\omega$ and a natural number~$n$, define $\suff{\sigma}{n} = \sigma(n) \sigma(n+1) \sigma(n+2) \cdots$, (\ie, as the suffix of $\sigma$ obtained by removing the first $n$ letters of $\sigma$). 
To be able to refer to individual bits of an \rltl truth value $\beta \in \bools_4$, we use $\beta[i]$ with $i \in \{ 1, \ldots, 4 \}$ as to denote the $i$-th bit of $\beta$.

For the sake of a simpler presentation, we denote the semantics of both \ltl and \rltl not in terms of satisfaction relations but by means of \emph{valuation functions}.
For \ltl, the valuation function~$\ltleval \colon \Sigma^\omega \times  \ltlformulas  \to \bools$ assigns to each infinite word $\sigma \in \Sigma^\omega$ and each \ltl formula~$\varphi \in  \ltlformulas $ the value $1$ if $\sigma$ satisfies $\varphi$ and the value $0$ if $\sigma$ does not satisfy $\varphi$, and is defined as usual (see, e.g.,~\cite{DBLP:books/daglib/0020348}).
The semantics of \rltl, on the other hand, is more complex and formalized next by an valuation function~$\rltleval \colon \Sigma^\omega \times  \rltlformulas  \to \bools_4$ mapping an infinite word~$\sigma \in \Sigma^\omega$ and an \rltl formula~$\varphi$ to a truth value in~$ \bools_4$.
\begin{itemize}
\item $\rltlalteval(\sigma,p) =\begin{cases}
	1111 &\text{if $p \in \sigma(0)$,}\\
	0000 &\text{if $p \notin \sigma(0)$,}
\end{cases}$
\item $\rltlalteval(\sigma,\neg \varphi) = \begin{cases} 
 	1111 &\text{if $\rltlalteval(\sigma,\varphi) \neq 1111$,}\\ 
 	0000 &\text{if $\rltlalteval(\sigma,\varphi) = 1111$,} 
 \end{cases}$	

\item
$\rltlalteval(\sigma, \varphi_1 \wedge \varphi_2) = \min\set{\rltlalteval(\sigma, \varphi_1), \rltlalteval(\sigma,\varphi_2) }$, 
\item $\rltlalteval(\sigma, \varphi_1 \vee \varphi_2) = \max\set{\rltlalteval(\sigma, \varphi_1), \rltlalteval(\sigma,\varphi_2) }$, 

\item
$\rltlalteval(\sigma, \varphi_1 \Rimplies \varphi_2) = \begin{cases}
 	1111 &\text{if $\rltlalteval(\sigma,\varphi_1) \le \rltlalteval(\sigma, \varphi_2)$,}\\
 	\rltlalteval(\sigma, \varphi_2) &\text{if $\rltlalteval(\sigma,\varphi_1) > \rltlalteval(\sigma, \varphi_2)$,}
 \end{cases}$
 
 \item $\rltlalteval(\sigma, \Xdot \varphi) = \rltlalteval(\suff{\sigma}{1}, \varphi)$,
 
 \item $\rltlalteval(\sigma, \Diamonddot \varphi ) = \beta$ with $\beta[i] = \max_{n \ge 0} \rltlalteval(\suff{\sigma}{n}, \varphi )[i]$ for $i \in \{ 1, \ldots, 4 \}$,
 \item $\rltlalteval(\sigma, \Boxdot \varphi) = \beta $ with 
 	\begin{align*}
 		\beta[1] & = \min_{n \ge 0} \rltlalteval(\suff{\sigma}{n}, \varphi )[1],\\
 		\beta[2] & = \max_{m \ge 0} \min_{n \ge m}\rltlalteval(\suff{\sigma}{n}, \varphi )[2],\\
 		\beta[3] & = \min_{m \ge 0} \max_{n \ge m}\rltlalteval(\suff{\sigma}{n}, \varphi )[3],\\
 		\beta[4] & = \max_{n \ge 0} \rltlalteval(\suff{\sigma}{n}, \varphi )[4],
 	\end{align*}

 \item $\rltlalteval(\sigma, \varphi_1 \Udot \varphi_2 ) = \beta$ with 
 	\[ \beta[i] = \max_{n \ge 0} \min\{ \rltlalteval(\suff{\sigma}{n}, \varphi_2 )[i], \min_{0\le n' < n} \rltlalteval(\suff{\sigma}{n'}, \varphi_1)[i] \}, \]
 	for $i \in \{ 1, \ldots, 4 \}$,

 \item $\rltlalteval(\sigma, \varphi_1 \Rdot \varphi_2) = \beta $ with 
 	\begin{align*}
		\beta[1] & = \min_{n \ge 0}  \max\set{ \rltlalteval(\suff{\sigma}{n}, \varphi_2 )[1], \max_{0 \le n' < n} \rltlalteval(\suff{\sigma}{n'}, \varphi_1)[1]  },\\
 		\beta[2] & = \max_{m \ge 0} \min_{n \ge m} \max\set{ \rltlalteval(\suff{\sigma}{n}, \varphi_2 )[2], \max_{0 \le n' < n} \rltlalteval(\suff{\sigma}{n'}, \varphi_1)[2]  },\\
 		\beta[3] & = \min_{m \ge 0} \max_{n \ge m} \max\set{ \rltlalteval(\suff{\sigma}{n}, \varphi_2 )[3], \max_{0 \le n' < n} \rltlalteval(\suff{\sigma}{n'}, \varphi_1)[3]  }, \text{ and} \\
 		\beta[4] & = \max_{n \ge 0} \max\set{ \rltlalteval(\suff{\sigma}{n}, \varphi_2 )[4], \max_{0 \le n' < n} \rltlalteval(\suff{\sigma}{n'}, \varphi_1)[4]  }.
 	\end{align*}
\end{itemize}

So as to not clutter this section too much, we refer the reader to the original work by Tabuada and Neider~\cite{DBLP:conf/csl/TabuadaN16} for a thorough introduction and motivation to the preceding semantics.
However, we here want to illustrate the definition above and briefly argue that it indeed captures the intuition described at the beginning of this section.
To this end, we reconsider the formulas $\Boxdot p$, $\Boxdot a \Rimplies \Boxdot g$, $\Boxdot(q \Rimplies \Diamonddot p)$ in Examples~\ref{ex:rltl-semantics-1}, \ref{ex:rltl-semantics-2}, and \ref{ex:rltl-semantics-3} respectively.

\begin{example} \label{ex:rltl-semantics-1}
Consider the formula $\Boxdot p$ and the following five infinite words over the set $P = \{ p \}$ of atomic propositions:
\begin{align*}
	\sigma_1 & = \{ p \}^\omega & \text{(``$p$ holds always'')} \\
	\sigma_2 &= \emptyset \{ p \}^\omega & \text{(``$p$ holds almost always'')} \\
	\sigma_3 & = (\emptyset \{ p \})^\omega & \text{(``$p$ holds infinitely often'')} \\
	\sigma_4 & = \{ p \} \emptyset^\omega & \text{(``$p$ holds finitely often'')} \\
	\sigma_5 & = \emptyset^\omega & \text{(``$p$ holds never'')}
\end{align*}

Let us begin the example with the word $\sigma_1 =  \{ p \}^\omega$.
It is not hard to verify that $\rltlalteval(\sigma_1, \Boxdot p)[1] = 1$ because $p$ always holds in $\sigma_1$, i.e., $\min_{n \ge 0} \rltlalteval(\suff{\sigma}{n}, p)[1] = 1$ for $n \geq 0$.
Using the same argument, we also have $\rltlalteval(\sigma_1, \Boxdot p)[2] = \rltlalteval(\sigma_1, \Boxdot p)[3] =  \rltlalteval(\sigma_1, \Boxdot p)[4] = 1$.
Thus, $\rltlalteval(\sigma_1, \Boxdot p) = 1111$.

As another example, consider the word $\sigma_2 = \emptyset \{ p \}^\omega$.
In this case, we have $\rltlalteval(\sigma_1, \Boxdot p)[1] = 0$ because $\rltlalteval(\suff{\sigma}{0}, p)[1] = 0$ ($p$ does not hold in the first symbol of $\sigma_2$).
However, $\rltlalteval(\sigma_1, \Boxdot p)[2] = 1$ because $p$ holds almost always, i.e., $\max_{m \ge 0} \min_{n \ge m}\rltlalteval(\suff{\sigma}{n}, p)[2] = 1$.
Moreover, $\rltlalteval(\sigma_1, \Boxdot p)[3] =  \rltlalteval(\sigma_1, \Boxdot a)[4] = 1$ and, therefore, $\rltlalteval(\sigma_2, \Boxdot p) = 0111$.
Similarly, we obtain $\rltlalteval(\sigma_3, \Boxdot p) = 0011$, $\rltlalteval(\sigma_4, \Boxdot p) = 0001$, and $\rltlalteval(\sigma_5, \Boxdot p) = 0000$.

In conclusion, this indeed illustrates that the semantics of the robust always is in accordance with the intuition provided at the beginning of this section.
\hfill\exampleend
\end{example}

\begin{example} \label{ex:rltl-semantics-2}
Let us now consider the more complex formula $\Boxdot a \Rimplies \Boxdot g$, where we interpret $a$ to be an assumption on the environment of a cyber-physical system and $g$ one of its guarantees.
Moreover, let $\sigma$ be an infinite word over $P = \{ a, g \}$ such that $\rltlalteval(\sigma, \Boxdot a \Rimplies \Boxdot g) = 1111$.
We now distinguish various cases.

First, let us assume that $\sigma$ is such that $\rltlalteval(\sigma, \Boxdot a) = 1111$, i.e., $a$ always holds.
By definition of the robust implication and since $\rltlalteval(\sigma, \Boxdot a \Rimplies \Boxdot g) = 1111$, this can only be the case if $\rltlalteval(\sigma, \Boxdot g) = 1111$.
Thus, the formula $\Boxdot a \Rimplies \Boxdot g$ ensures that if the environment assumption $a$ always holds, so does the system guarantee $g$. 

Next, assume that $\sigma$ is such that $\rltlalteval(\sigma, \Boxdot a) = 0111$, i.e., $a$ does not always hold but almost always.
By definition of the robust implication and since $\rltlalteval(\sigma, \Boxdot a \Rimplies \Boxdot g) = 1111$, this can only be the case if $\rltlalteval(\sigma, \Boxdot g) \geq 0111$.
In this case, the formula $\Boxdot a \Rimplies \Boxdot g$ ensures that if the environment assumption $a$ holds almost always, then the system guarantee $g$ holds almost always or---even better---always. 

It is not hard to verify that we obtain similar results for the cases $\rltlalteval(\sigma, \Boxdot a) \in \{ 0011, 0001, 0000 \}$.
In other words, the semantics of \rltl ensures that the violation of the system guarantee~$g$ is always proportional to the violation of the environment assumption~$a$ (given that $\rltlalteval(\sigma, \Boxdot a \Rimplies \Boxdot g)$ evaluates to $1111$).
Again, this illustrates that the semantics of the implication is in accordance with the intuition provided at the beginning of this section.
\hfill\exampleend
\end{example}

\begin{example}\label{ex:rltl-semantics-3}
As a last example, let us discuss the nesting of temporal operators.
Consider the formula~$\varphi = \Boxdot( q \Rimplies \Diamonddot p )$ where we interpret $q$ as a request and $p$ as a response. 

We have $\rltleval(\suff{\sigma}{n},\Diamonddot p) = 1111$ if $\suff{\sigma}{n}$ contains a response, otherwise we have $\rltleval(\suff{\sigma}{n},\Diamonddot p) = 0000$.
Similarly, we have $\rltleval(\suff{\sigma}{n},q \Rimplies\Diamonddot p) = 1111$ if $q \in \sigma(n)$ implies that $\suff{\sigma}{n}$ contains a response.
On the other hand, if $q \in \sigma(n)$ and $\suff{\sigma}{n}$ does not contain a response then we have $\rltleval(\suff{\sigma}{n},q \Rimplies\Diamonddot p) = 0000$.

From these observations, we can deduce $\rltleval(\sigma,\varphi) = 1111$ if every request in $\sigma$ is followed by a response, which is equivalent to the \ltl formula~$\varphi_1 = \Box(q \implies \Diamond p)$ that expresses a request-response property.
Further, we have $\rltleval(\sigma,\varphi) = 0111$ if and only if~$\sigma$ violates~$\varphi_1$ and if from some point onwards, every request in $\sigma$ is followed by a response.
This is equivalent to the \ltl formula~$\neg\varphi_1 \land \varphi_2$ with $\varphi_2 = (\Box\Diamond q) \implies (\Box\Diamond p)$, which expresses strong fairness.
Similarly, we have $\rltleval(\sigma,\varphi) = 0011$ if and only if~$\sigma$ violates~$\varphi_2$ and if for infinitely many positions, if there is a request in~$\sigma$ at that position, then it is followed by a response.
This is equivalent to the \ltl formula~$\neg\varphi_2 \land \varphi_3$ with $\varphi_3 = (\Diamond\Box q) \implies (\Box\Diamond p)$, which expresses weak fairness.
Moreover, we have $\rltleval(\sigma,\varphi) = 0001$ if and only if~$\sigma$ violates~$\varphi_3$ and if there is some position such that if there is a request in~$\sigma$ at that position, then it is followed by a response.
This is equivalent to the \ltl formula~$\neg\varphi_3 \land \varphi_4 $ with $\varphi_4= (\Box q) \implies (\Diamond p)$, which expresses a very weak notion of fairness.
Finally, we have $\rltleval(\sigma,\varphi) = 0000$ if and only if~$\sigma$ violates~$\varphi_4$.

For~$i \in \set{1,2,3}$, the \ltl formula $\varphi_i$ implies $\varphi_{i+1}$.
Thus, if a trace~$\sigma$ violates~$\varphi_{i+1}$, it also violates~$\varphi_i$.
This further illustrates the monotonicity of \rltl.
This monotonicity also allows us to only require that~$\varphi_{i+1}$ violates~$\varphi_i$ in the intuitive explanations above, instead of having to require violations of all~$\varphi_{i'}$ with~$i' \leq i$.
\hfill\exampleend	
\end{example}

It is important to note that \rltl is an extension of \ltl.
In fact, the \ltl semantics can be recovered from the first bit of the \rltl semantics (after every implication $\varphi \Rimplies \psi$ has been replaced with $\lnot \varphi \lor \psi$).%
\footnote{
	It turns out that Tabuada and Neider's original proof~\cite[Proposition~5]{DBLP:conf/csl/TabuadaN16} has a minor mistake.
	Although the first bit of the \rltl semantics coincides with the original \ltl semantics for all formulas that do not contain implications, the formula~$\Boxdot \lnot a \implies \Boxdot a$ is an example witnessing this claim is no longer correct in the presence of implications, e.g., for $\set{a}\emptyset^\omega$.
	However, this issue can be fixed by replacing every implication $\varphi \Rimplies \psi$ with $\lnot \varphi \lor \psi$.
	This substitution results in an equivalent \ltl formula for which the first bit of the \rltl semantics indeed coincides with the \ltl semantics.
}

\begin{lemma}[\cite{DBLP:conf/csl/TabuadaN16}, Proposition~5]
\label{lemma:rltlextendsltl}
Let $\varphi$ be an \ltl formula without implications, and let $\varphi'$ be the corresponding \rltl formula (obtained by dotting all temporal operators). Then, we have $\rltleval(\sigma, \varphi')[1] = \ltleval(\sigma, \varphi)$ for every trace~$\sigma$.
\end{lemma}

To reduce the number of cases we have to consider in our inductive proofs (for instance the one for Lemma~\ref{lem:tech}), we note that the robust eventually and the robust always operator are syntactic sugar.  Formally, we say that two \rltl formulas~$\varphi_1, \varphi_2$ are equivalent if $\rltleval(\sigma, \varphi_1) = \rltleval(\sigma, \varphi_2)$ for every $\sigma \in \Sigma^\omega$. Now, let $ \true =  p \lor \neg p$ and $\false = p \land \neg p$ for some atomic proposition~$p$. Then, the robust eventually and the robust always are, as usual, expressible in terms of the robust until and the robust release, respectively.

\begin{remark}
\label{rem:sugar}
\hfill
\begin{enumerate}
	\item $\Diamonddot \varphi$ and $\true \Udot \varphi$ are equivalent.
	\item $\Boxdot \varphi$ and $\false \Rdot \varphi$ are equivalent.
\end{enumerate}
\end{remark}

\subsection{An Alternative Definition of Robust Semantics for LTL}
\label{subsec:altsem}
Before we introduce \rltl monitoring, we need to introduce an alternative definition of the semantics of \rltl, which is more convenient to prove some of the results from Section~\ref{sec:problem}. This alternative definition has been introduced in later works on \rltl~\cite{DBLP:conf/cdc/AnevlavisPNT18,DBLP:conf/hybrid/AnevlavisNPT19}.

\begin{definition} \label{def:rltl-semantics-alt}
Let the function~$\rltlsem \colon \{ 1, \ldots, 4 \} \times  \rltlformulas  \to  \ltlformulas$
be inductively defined as in Table~\ref{tag:rltl-2-ltl}.
The \rltl semantics is then given as the valuation function
$ \rltleval \colon \Sigma^\omega \times  \rltlformulas  \to \bools_4$,
where for every $\sigma \in \Sigma^\omega$, every \rltl formula~$\varphi$, and every $i \in \{ 1, \ldots, 4 \}$, the $i$-th bit of $\rltleval(\sigma, \varphi)$ is defined as
$\rltleval(\sigma, \varphi)[i] = \ltleval \bigl( \sigma, \rltlsem(i, \varphi) \bigr) $
(\ie, via the semantics of the \ltl formulas~$\rltlsem(i, \varphi)$).
\end{definition}

\begin{table*}[t]
	\centering
	\caption{The function $\rltlsem \colon \{ 1, \ldots, 4 \} \times  \rltlformulas  \to  \ltlformulas $.} \label{tag:rltl-2-ltl}

	\begin{tabular}{lcl}
		\toprule
		\emph{Operator} & \emph{Symbol} & \emph{Semantics ($\varphi, \psi \in  \rltlformulas $)} \\
		\toprule
		Atomic & \multirow{2}{*}{$p \in P$} & \multirow{2}{*}{$1 \leq i \leq 4$: $\rltlsem(i, p) = p$} \\
		proposition \\ \midrule 
		Negation & $\lnot$ & $1 \leq i \leq 4$: $\rltlsem(i, \lnot \varphi) \coloneqq \lnot \rltlsem(1, \varphi)$ \\ \midrule
		Disjunction & $\lor$ & $1 \leq i \leq 4$: $\rltlsem(i, \varphi \lor \psi) \coloneqq \rltlsem(i, \varphi) \lor \rltlsem(i, \psi)$ \\ \midrule
		Conjunction & $\land$ & $1 \leq i \leq 4$: $\rltlsem(i, \varphi \land \psi) \coloneqq \rltlsem(i, \varphi) \land \rltlsem(i, \psi)$ \\ \midrule
		\multirow{2}{*}{Implication} & \multirow{2}{*}{$\Rimplies$} & $1 \leq i \leq 3$: $\rltlsem(i, \varphi \Rimplies \psi) \coloneqq (\rltlsem(i, \varphi) \Rimplies \rltlsem(i, \psi)) \land \rltlsem(i+1, \varphi \Rimplies \psi)$; \\
		& & \phantom{$1 \leq i \leq 3$:} $\rltlsem(4, \varphi \Rimplies \psi) \coloneqq \rltlsem(4, \varphi) \Rimplies \rltlsem(4, \psi)$ \\ \midrule
		Robust next & $\Xdot$ & $1 \leq i \leq 4$: $\rltlsem(i, \Xdot \varphi) \coloneqq \X \rltlsem(i, \varphi)$ \\ \midrule
		Robust & \multirow{2}{*}{$\Diamonddot$} & \multirow{2}{*}{$1 \leq i \leq 4$: $\rltlsem(i, \Diamonddot \varphi) \coloneqq \Diamond \rltlsem(i, \varphi)$} \\
		eventually \\ \midrule
		\multirow{2}{*}{Robust always} & \multirow{2}{*}{$\Boxdot$} & $\rltlsem(1, \Boxdot \varphi) \coloneqq \Box \rltlsem(1, \varphi)$; $\rltlsem(2, \Boxdot \varphi) \coloneqq \Diamond\Box \rltlsem(2, \varphi)$; \\
		& & $\rltlsem(3, \Boxdot \varphi) \coloneqq \Box\Diamond \rltlsem(3, \varphi)$; $\rltlsem(4, \Boxdot \varphi) \coloneqq \Diamond \rltlsem(4, \varphi)$ \\ \midrule
		Robust until & $\Udot$ & $1 \leq i \leq 4$: $\rltlsem(i, \varphi \Udot \psi) \coloneqq \rltlsem(i, \varphi) \U \rltlsem(i, \psi)$ \\ \midrule
		\multirow{4}{*}{Robust release} & \multirow{4}{*}{$\Rdot$} & $\rltlsem(1, \varphi \Rdot \psi) \coloneqq \rltlsem(1, \varphi) \R \rltlsem(1, \psi)$; \\
		& & $\rltlsem(2, \varphi \Rdot \psi) \coloneqq \Diamond\Box \rltlsem(2, \psi) \lor \Diamond \rltlsem(2, \varphi)$; \\
		& & $\rltlsem(3, \varphi \Rdot \psi) \coloneqq \Box\Diamond \rltlsem(3, \psi) \lor \Diamond \rltlsem(3, \varphi)$; \\
		& & $\rltlsem(4, \varphi \Rdot \psi) \coloneqq \Diamond \rltlsem(4, \psi) \lor \Diamond \rltlsem(4, \varphi)$ \\
		\bottomrule
	\end{tabular}
\end{table*}

As a consequence of Lemma~\ref{lemma:rltlextendsltl} (cf.~\cite{DBLP:conf/csl/TabuadaN16}, Proposition~5), we know that \rltl is at least as expressive as \ltl. The latter definition of the semantics of \rltl shows that it is not more expressive than \ltl, in the sense that for all \rltl formulas there exist \ltl formulas giving the truth values of each of the four bits.
However, it is more convenient to work with one formula of \rltl than to work with the four \ltl formulas capturing it.

A useful feature of the alternative semantics is the following property: To determine the truth value of an \rltl formula~$\varphi$ on $\sigma$, it suffices to determine the truth values of the \ltl formulas~$\rltlsem(i, \varphi)$ on $\sigma$. For certain formulas, $\rltlsem(i, \varphi)$ is obtained from $\varphi$ by a very simple rewriting, as shown below.

\begin{remark}
\label{rem:intuitivesemantics}
Let $\varphi$ be an \rltl formula that has no always in the scope of a negation and only uses negation, conjunction, disjunction, next, eventually, and always.
Then,
\begin{itemize}
	\item $\rltlsem(1, \varphi)$ is equivalent to the formula obtained from $\varphi$ by replacing every $\Xdot$ by $\X$, every $\Diamonddot$ by $\Diamond$, and every $\Boxdot$ by $\Box$, 

	\item $\rltlsem(2, \varphi)$ is equivalent to the formula obtained from $\varphi$ by replacing every $\Xdot$ by $\X$, every $\Diamonddot$ by $\Diamond$, and every $\Boxdot$ by $\Diamond\Box$, 

	\item $\rltlsem(3, \varphi)$ is equivalent to the formula obtained from $\varphi$ by replacing every $\Xdot$ by $\X$, every $\Diamonddot$ by $\Diamond$, and every $\Boxdot$ by $\Box\Diamond$, and

	\item $\rltlsem(4, \varphi)$ is equivalent to the formula obtained from $\varphi$ by replacing every $\Xdot$ by $\X$, every $\Diamonddot$ by $\Diamond$, and every $\Boxdot$ by $\Diamond$.

\end{itemize} 	
\end{remark}

\section{Monitoring Robust LTL}
\label{sec:problem}
In their work on \ltl monitoring, Bauer et al.~\cite{BauerLeuckerSchallhart11} define the problem of runtime monitoring as \textit{``check[ing] \ltl properties given finite prefixes of infinite [words]''}.
More formally, given some prefix~$u \in \Sigma^*$ and some \ltl formula~$\varphi$, it asks whether all, some, or no infinite extension~$u\sigma \in \Sigma^\omega$ of $u$ by some $\sigma \in \Sigma^\omega$ satisfies~$\varphi$.
To reflect these three possible results, the authors use the set~$\ternaries = \set{0, \ukno, 1}$ to define a three-valued logic that is syntactically identical to \ltl, but equipped with a semantics in form of an evaluation function~$\ltlmoneval  \colon \Sigma^* \times  \ltlformulas  \rightarrow \ternaries$ over finite prefixes.
This semantics is defined such that~$\ltlmoneval(u, \varphi)$ is equal to $0$ (is equal to $1$) if no (if every) extension~$u \sigma$ of $u$ satisfies $\varphi$. 
If neither is the case, \ie, if there is an extension of $u$ that satisfies $\varphi$ and there is an extension of $u$ that does not satisfy $\varphi$, then $\ltlmoneval (u, \varphi)$ is equal to $\ukno$. 

We aim to extend the approach of Bauer et al.\ to \rltl, whose semantics is based on truth values from the set~$\bools_4$ (containing the sequences of length four in $0^*1^*$).
As a motivating example, let us consider the formula~$\varphi = \Boxdot s$\label{runningexample} for some atomic proposition~$s$ and study which situations can arise when monitoring this formula.
Note that the truth value of $\varphi$ can be obtained by concatenating the truth values of the \ltl formulas $\varphi_1 = \Box s$, $\varphi_2 =  \Diamond\Box s$, $\varphi_3 =  \Box\Diamond s$, and $\varphi_4 =  \Diamond s$. 
 
First, consider the empty prefix and its two extensions~$\emptyset^\omega$ and $\set{s}^\omega$. We have $\rltleval(\emptyset^\omega, \varphi) = 0000$ and $\rltleval(\set{s}^\omega, \varphi) = 1111$. 
Thus, all four bits can both be equal to $0$ and $1$. 
This situation is captured by the sequence~$\ukno \ukno \ukno \ukno $ which signifies that for every position~$i$ and every bit~$b \in \bools$, there exists an extension of $\varepsilon$ that has bit~$b$ in the $i$-th position of the truth value with respect to $\varphi$. 
 
Now, consider the prefix~$\set{s}$ for which we have $\rltleval(\set{s}\sigma, \varphi)[4] = 1$ for every $\sigma \in \Sigma^\omega$ as $\varphi_4 = \Diamond s$ is satisfied on each extension of $\set{s}$ ($s$ has already occurred). 
On the other hand, $\rltleval(\set{s}\emptyset^\omega, \varphi) = 0001$ and $\rltleval(\set{s}\set{s}^\omega, \varphi) = 1111$, \ie, the first three bits can both be $0$ and $1$ by picking an appropriate extension. 
Hence, the situation is captured by the sequence~$\ukno \ukno \ukno 1$, signifying that the last bit is determined by the prefix, but the first three are not.
Using dual arguments, the sequence~$0\ukno \ukno \ukno$ is used for the prefix~$\emptyset{}$, signifying that the  first bit is determined by the prefix as every extension violates $\varphi_1 = \Box s$.
However, the last three bits are not yet determined by the prefix, hence the trailing $\ukno$'s.

Finally, consider the prefix~$\set{s}\emptyset$.
Using the same arguments as for the previous two prefixes, we obtain $\rltleval(\set{s}\emptyset\sigma, \varphi)[1] = 0$ and $\rltleval(\set{s}\emptyset\sigma, \varphi)[4] = 1$ for every $\sigma \in \Sigma^\omega$.
Also, as before, we have $\rltleval(\set{s}\emptyset\emptyset^\omega, \varphi) = 0 0 0 1$ and $\rltleval(\set{s}\emptyset\set{s}^\omega, \varphi) = 0 1 1 1$.
Hence, here we obtain the sequence~$0 \ukno \ukno 1$ signifying that the first and last bit  are determined by the prefix, but the middle two are not.

In general, we use truth values of the form~$0^*?^*1^*$, which follows from the fact that the truth values of  \rltl are in $0^*1^*$. Hence, let $\ternaries_4$ denote the set of sequences of length four in $0^*?^*1^*$.
Based on $\ternaries_4$, we now formally define the \rltl monitoring semantics as a bitwise generalization of the  \ltl definition.

\begin{definition} 
\label{def:monitor:output}
The semantics of the robust monitor $\rltlmoneval  \colon \Sigma^* \times  \rltlformulas  \to \bools_4^{\ukno}$ is defined as $	\rltlmoneval (u, \varphi) = \beta$ with 
\[
  \beta[i] = \begin{cases} 0 & \text{if $\rltleval(u\sigma, \varphi)[i] = 0$ for all $\sigma \in \Sigma^\omega$;} \\ 1 & \text{if $\rltleval(u\sigma, \varphi)[i] = 1$ for all $\sigma \in \Sigma^\omega$; and} \\ \ukno & \text{otherwise,} \end{cases}
\]
for every $i \in \{ 1, \ldots, 4 \}$, every \rltl formula~$\varphi$, and every $u \in \Sigma^*$.
\end{definition}

First, let us remark that our notion of \rltl monitoring indeed refines the notion of \ltl monitoring, which follows immediately from Lemma~\ref{lemma:rltlextendsltl}.

\begin{remark}
\label{remark:rltlmonitoringextendsltlmonitoring}
Let $\varphi$ be an \ltl formula without implications, and let $\varphi'$ be the corresponding \rltl formula (obtained by dotting all temporal operators). Then, we have $\rltlmoneval(u, \varphi')[1] = \ltlmoneval(u, \varphi)$ for every $u \in \Sigma^*$.
	
\end{remark}

Using \rltl monitoring semantics, we are able to infer information about the infinite run of a system after having read only a finite prefix thereof.
In fact, this robust semantics provides far more information about the degree of violation of the specification than classical \ltl monitoring as each bit of the monitoring output represents a degree of violation of the specification:
a $\ukno$ turning into a $0$ or $1$ indicates a deterioration or improvement in the system's state, respectively.
Consider, for instance, an autonomous drone with specification $\varphi = \Boxdot s$ where $s$ denotes a state of stable flight (recall the motivating example on Page~\pageref{runningexample}).
Initially, the monitor would output~$\ukno\ukno\ukno\ukno$ due to a lack of information.
If taking off under windy conditions, the state~$s$ is not reached initially, hence the monitor issues a warning by producing $\rltlmoneval (\emptyset^n,\varphi) = 0\ukno\ukno\ukno$ for every $n > 0$.
Thus, the safety condition is violated temporarily, but not irrecoverably.
Hence, mitigation measures can be initiated.
Upon success, the monitoring output turns into $\rltlmoneval (\emptyset^n \set{s}, \varphi) = 0\ukno\ukno1$ for every $n > 0$, signaling that flight was stable for some time.

Before we continue, let us first state that the new semantics is well-defined, \ie, that the sequence~$\beta[1]\beta[2]\beta[3]\beta[4]$ in Definition~\ref{def:monitor:output} is indeed in $\ternaries_4$.

\begin{lemma}
\label{lem:monitorwelldefined}
$\rltlmoneval (u, \varphi) \in \ternaries_4$ for every \rltl formula~$\varphi $ and every $u \in \Sigma^*$.
\end{lemma}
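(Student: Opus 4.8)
The plan is to reduce everything to the single structural fact that makes $\ternaries_4 = 0^*?^*1^*$ the correct codomain: for every fixed extension $u\sigma$, the truth value $\rltleval(u\sigma,\varphi)$ lies in $\bools_4$, and every element of $\bools_4$ is a bit-wise monotone sequence, i.e. $\rltleval(u\sigma,\varphi)[i] \le \rltleval(u\sigma,\varphi)[i+1]$ for $i \in \{1,2,3\}$. To show that $\beta = \rltlmoneval(u,\varphi)$ belongs to $0^*?^*1^*$, I would establish two monotonicity-transfer claims that move the determined bits outward: (a) if $\beta[i] = 1$ then $\beta[i+1] = 1$, so that the $1$'s form a suffix; and (b) if $\beta[i] = 0$ then $\beta[i-1] = 0$, so that the $0$'s form a prefix. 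Once both hold, any position that is neither in the leading block of $0$'s nor in the trailing block of $1$'s is forced to carry $\ukno$, which is exactly the shape $0^*?^*1^*$.

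For claim (a) I would reason as follows. Suppose $\beta[i] = 1$. By Definition~\ref{def:monitor:output} this means $\rltleval(u\sigma,\varphi)[i] = 1$ for \emph{every} $\sigma \in \Sigma^\omega$. Fixing any such $\sigma$ and invoking the monotonicity of the value in $\bools_4$, we get $\rltleval(u\sigma,\varphi)[i+1] \ge \rltleval(u\sigma,\varphi)[i] = 1$, hence $\rltleval(u\sigma,\varphi)[i+1] = 1$; since $\sigma$ was arbitrary, the first case of Definition~\ref{def:monitor:output} yields $\beta[i+1] = 1$. Claim (b) is the dual argument: if $\beta[i] = 0$ then $\rltleval(u\sigma,\varphi)[i] = 0$ for all $\sigma$, and monotonicity forces $\rltleval(u\sigma,\varphi)[i-1] \le 0$, so the value-$0$ case applies and $\beta[i-1] = 0$.

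From (a) and (b) the conclusion is immediate. Let $a$ be the largest index with $\beta[a] = 0$ (and $a = 0$ if there is none) and $b$ the smallest index with $\beta[b] = 1$ (and $b = 5$ if there is none). Iterating claim (b) shows positions $1,\dots,a$ are all $0$, iterating claim (a) shows positions $b,\dots,4$ are all $1$, and $a < b$ since otherwise position $b$ would lie in the $0$-prefix, contradicting $\beta[b] = 1$. Every position strictly between $a$ and $b$ is then neither a $0$-index nor a $1$-index by maximality of $a$ and minimality of $b$, hence equals $\ukno$; thus $\beta \in 0^*?^*1^*$. I do not expect any genuine obstacle here: the only non-bookkeeping ingredient is the monotonicity of rLTL truth values, and since the excerpt already fixes $\bools_4 \subseteq 0^*1^*$ as the semantic domain, this is available without a structural induction on $\varphi$, so the lemma collapses to the two short transfer arguments above.
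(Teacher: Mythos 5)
Your proposal is correct and matches the paper's own proof essentially verbatim: both arguments derive the $0^*\ukno^*1^*$ shape from the single fact that each $\rltleval(u\sigma,\varphi)$ is a monotone bit sequence in $\bools_4$, propagating a determined $0$ to all earlier positions and a determined $1$ to all later ones via Definition~\ref{def:monitor:output}. No further comment is needed.
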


\begin{proof}
Let $\rltlmoneval(u, \varphi)[i]=0$ and $j<i$. By definition of $\rltlmoneval$, we have $\rltleval(u\sigma, \varphi)[i] = 0$ for every $\sigma \in \Sigma^\omega$. Hence, due to the monotonicity of the truth values from $\bools_4$ used to define $\rltleval$, we obtain $\rltleval(u\sigma, \varphi)[j] = 0$ for every such $\sigma$. Hence, $\rltlmoneval(u, \varphi)[j]=0$.  

A dual argument shows that $\rltlmoneval(u, \varphi)[i]=1$ and $j>i$ implies $\rltlmoneval(u, \varphi)[j]=1$. Combining both properties yields $\rltlmoneval(u, \varphi) \in 0^*\ukno^*1^*$, \ie, $\rltlmoneval(u, \varphi) \in \ternaries_4$.
\qed\end{proof}

After having shown that every possible output of $\rltlmoneval $ is in $\ternaries_4$, the next obvious question is whether $\rltlmoneval $ is surjective, \ie, whether every truth value $\beta \in \ternaries_4$ is \emph{realized} by some prefix~$u \in \Sigma^*$ and some \rltl formula~$\varphi $ in the sense that $\rltlmoneval (u, \varphi) = \beta$. Recall the motivating example above: The formula~$\Boxdot s$ realizes at least the following four truth values: $\ukno \ukno \ukno \ukno $ (on $\varepsilon$), $\ukno \ukno \ukno 1$ (on $\set{s}$), $0 \ukno \ukno \ukno $ (on $\emptyset$), and $0 \ukno \ukno 1$ (on $\set{s}\emptyset$). It is not hard to convince oneself that these are all truth values realized by $\Boxdot s$ as they represent the following four types of prefixes that can be distinguished: the prefix is empty (truth value~$\ukno \ukno \ukno \ukno $), the prefix is in $\set{s}^+$ (truth value~$\ukno \ukno \ukno 1$), the prefix is in $\emptyset^+$ (truth value~$0 \ukno \ukno \ukno $), or the prefix contains both an $\set{s}$ and an $\emptyset$ (truth value~$0 \ukno \ukno 1$). 

For most other truth values, it is straightforward to come up with \rltl formulas and prefixes that realize them.
See Table~\ref{tag:truthvalues} for an overview and recall Remark~\ref{rem:intuitivesemantics}, which is applicable to all these formulas.

\begin{table*}[t]
	\centering
	\caption{Realizable truth values. For every truth value~$\beta$, the next two columns show prefixes~$u$ and formulas~$\varphi$ such that $\rltlmoneval(u,\varphi) = \beta$, or that $\beta$ is unrealizable.} 
	\label{tag:truthvalues}
		\renewcommand{\arraystretch}{1.2}
		\setlength{\tabcolsep}{5pt}
	\begin{tabular}{lll@{\hskip 2cm}llll}

		\toprule
		\emph{Value}&\emph{Prefix}&\emph{Formula}&\hspace{.75cm}&\emph{Value}&\emph{Prefix}&\emph{Formula}\\
		\toprule
$0000$ &  $ \varepsilon $ &  $ a \wedge \neg a $   &&$0\ukno 11$ & $\emptyset\set{a} $  &   $\Boxdot a \vee \Boxdot \neg a $\\
$000\ukno$ &  $  \varepsilon $ &  $\Diamonddot\Boxdot a \wedge \Diamonddot\neg \Diamonddot a $ && $0111$ & $\emptyset\set{a}$ & $a\Rdot a$  \\
$0001$ & \multicolumn{2}{l}{unrealizable}   &&$\ukno \ukno \ukno \ukno$ & $\varepsilon $  &  $\Boxdot a $   \\ 
$00\ukno \ukno$  & $\varepsilon $  &  $ \Boxdot a \wedge \Boxdot \neg a$    &&$\ukno \ukno \ukno 1$ & $ \set{a}$ &   $ \Boxdot a$   \\
$00\ukno 1$ & $ \emptyset\set{a}$  &  $\Boxdot a \wedge \Boxdot \neg a $  &&$\ukno \ukno 11$ & $\varepsilon $  &   $\Boxdot a \vee \Diamonddot\neg \Diamonddot a $  \\
$0011$ & \multicolumn{2}{l}{unrealizable}  &&$\ukno 111$ & $ \varepsilon$  &  $ \Boxdot a \vee \neg \Diamonddot\neg \Diamonddot \neg a$  \\
$0\ukno \ukno \ukno$ & $ \emptyset$  &  $\Boxdot a $   && $1111$ &  $\varepsilon $ & $ a \vee \neg a $ \\
$0\ukno \ukno 1$ & $\emptyset\set{a} $  &   $\Boxdot a $ &&& \\
\bottomrule
\end{tabular}  
\end{table*}
 
For others, such as $0011$, it is much harder.
Intuitively, to realize $0011$, one needs to find an \rltl formula~$\varphi$ and a prefix~$u \in \Sigma^*$ such that
 the formula obtained by replacing all $\Boxdot$ in $\varphi$ by $\Diamond\Box$ is not satisfied by any extension of $u$, but
	the formula obtained by replacing all $\Boxdot$ in $\varphi$ by $\Box\Diamond$ is satisfied by every extension of $u$.\footnote{Note that this intuition breaks down in the presence of implications and negation, due to their non-standard definitions.}
	Thus, intuitively, the prefix has to differentiate between a property holding almost always and holding infinitely often.  
It turns out that no such $u$ and $\varphi$ exist.
A similar argument is true for $0001$, leading to the following theorem.

\begin{theorem} 
\label{thm:realizabletruthvalues}
All truth values except for $0011$ and $0001$ are realizable. 
\end{theorem}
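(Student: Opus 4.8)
The plan is to split the statement into the positive part (thirteen truth values are realizable) and the negative part ($0001$ and $0011$ are not), and to spend almost all effort on the latter. For the positive part I would simply verify the witnesses collected in Table~\ref{tag:truthvalues}: for every realizable $\beta$ the table lists a prefix $u$ and a formula $\varphi$, and checking $\rltlmoneval(u,\varphi)=\beta$ is a finite computation. For the formulas lying in the fragment of Remark~\ref{rem:intuitivesemantics} this reduces to elementary \ltl monitoring of the four companion formulas $\varphi_1,\dots,\varphi_4$, and for the few entries using negation or robust release one evaluates the \rltl semantics directly on the relevant extensions. None of this is difficult.

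For the negative part the key is the identity $\rltlmoneval(u,\varphi)[i]=\ltlmoneval(u,\varphi_i)$, which follows from Definition~\ref{def:monitor:output} together with the bitwise \ltl characterisation (Remark~\ref{rem:intuitivesemantics} and the general translation of Definition~\ref{def:rltl-semantics-alt}), since $\rltleval(u\sigma,\varphi)[i]=\ltleval(u\sigma,\varphi_i)$. Call $u$ a \emph{bad prefix} of an \ltl formula $\psi$ if $\ltleval(u\sigma,\psi)=0$ for all $\sigma\in\Sigma^\omega$, i.e.\ $\ltlmoneval(u,\psi)=0$. Because every \rltl truth value lies in $0^*1^*$, the pattern $0001$ can be the monitor output only if every extension of $u$ evaluates to exactly $0001$; equivalently, $u$ is a bad prefix of $\varphi_3$ while being a good prefix of $\varphi_4$. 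Likewise, output $0011$ forces every extension to evaluate to $0011$, i.e.\ $u$ is a bad prefix of $\varphi_2$ and a good prefix of $\varphi_3$. Thus it suffices to prove the two inclusions $\{u : u\text{ bad for }\varphi_3\}\subseteq\{u: u\text{ bad for }\varphi_4\}$ and $\{u:u\text{ bad for }\varphi_2\}\subseteq\{u:u\text{ bad for }\varphi_3\}$: a bad prefix of $\varphi_4$ (resp.\ $\varphi_3$) cannot be a good prefix of it since extensions exist, so each inclusion contradicts the corresponding pattern. The reverse inclusions hold for free, because $\varphi_1\models\varphi_2\models\varphi_3\models\varphi_4$ (a $1$ in bit $i$ forces a $1$ in all later bits), whence $\lang(\varphi_i)\subseteq\lang(\varphi_{i+1})$; the inclusions above are therefore the genuine content.

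In contrapositive form the two inclusions read: if some extension $u\sigma$ satisfies $\varphi_4$ then some extension satisfies $\varphi_3$, and likewise from $\varphi_3$ to $\varphi_2$. This is exactly the intuition from the introduction that a monitor cannot separate ``eventually'' from ``infinitely often'' from ``almost always'': the companion formulas differ only in how each positive occurrence of $\Boxdot$ is rendered --- as $\Diamond$ in $\varphi_4$, as $\Box\Diamond$ in $\varphi_3$, and as $\Diamond\Box$ in $\varphi_2$ --- while all other operators, and every $\Boxdot$ occurring under a negation (where all four bits coincide), are translated identically. The plan is to prove the contrapositive by pumping: given an ultimately periodic model $u\,x\,y^\omega$ of $\varphi_4$ on the cone $u\Sigma^\omega$, enlarge the loop $y$ so that every eventuality $\Diamond\theta$ stemming from a positive $\Boxdot$ is witnessed inside one period; periodicity then makes each such $\theta$ recur infinitely often, upgrading $\Diamond$ to $\Box\Diamond$ and, by retaining the witnessing behaviour from some point on, to $\Diamond\Box$. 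The safety constraints contributed by negated subformulas are identical in all three formulas and are preserved because the pumped word remains a model.

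The hard part is carrying out this upgrade \emph{simultaneously} through nested operators, through conjunction, and through implication. A naive structural induction on ``the three companion formulas have the same bad prefixes'' breaks at conjunction, since bad-prefix-hood is not compositional: $\varphi_4\wedge\psi_4$ may be unsatisfiable on a cone even when each conjunct is satisfiable, and one must still rule out that the $\varphi_3\wedge\psi_3$ version behaves differently. This is precisely what the global pumping argument buys us --- a single loop of one ultimately periodic model of the whole bit-$4$ formula witnesses the eventualities of all conjuncts at once, so they can all be upgraded together. Implication is the other delicate point: its semantics is not bitwise, so before pumping I would expand $\varphi\Rimplies\psi$ through the general translation of Definition~\ref{def:rltl-semantics-alt} into a combination of the companion formulas and check that the positive/negative occurrence bookkeeping (and hence the pumping) survives this rewriting. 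Showing that the shared-loop upgrade goes through uniformly for every operator is where essentially all the work lies.
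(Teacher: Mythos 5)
Your positive part is fine and matches the paper (verify the witnesses in Table~\ref{tag:truthvalues} via the bitwise \ltl companions, with a direct argument for the release entry). The negative part, however, reduces to two inclusions that are simply false, so the proof cannot be completed along the proposed route. Take $\varphi = \Boxdot a \wedge \Boxdot\lnot a$ (the paper's own witness for $00\ukno\ukno$, which lies in the fragment of Remark~\ref{rem:intuitivesemantics}): here $\varphi_2 = \Diamond\Box a \wedge \Diamond\Box\lnot a$ is unsatisfiable, so \emph{every} prefix is bad for $\varphi_2$, while $\varphi_3 = \Box\Diamond a \wedge \Box\Diamond\lnot a$ is satisfiable from every prefix, so \emph{no} prefix is bad for $\varphi_3$. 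Hence $\set{u : u \text{ bad for } \varphi_2} \not\subseteq \set{u : u \text{ bad for } \varphi_3}$. Similarly, for $\varphi = \Diamonddot\Boxdot a \wedge \Diamonddot\lnot\Diamonddot a$ (the witness for $000\ukno$) one gets $\varphi_3 = \Box\Diamond a \wedge \Diamond\Box\lnot a$ unsatisfiable but $\varphi_4 = \Diamond a \wedge \Diamond\Box\lnot a$ satisfiable from every prefix, refuting the other inclusion. Your equivalent contrapositive form, ``if some extension satisfies $\varphi_4$ then some extension satisfies $\varphi_3$,'' is exactly what these examples refute, and your pumping construction would hit the obstruction concretely: in $\set{a}\emptyset^\omega \models \Diamond a \wedge \Diamond\Box\lnot a$ the eventuality $\Diamond a$ is witnessed only in the transient part and cannot be moved into the loop without destroying the model.

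The statement you actually need is weaker and essentially uses both universal quantifiers at once: it is impossible that \emph{all} extensions violate bit $i$ while \emph{all} extensions satisfy bit $i+1$. Your setup discards the ``all extensions satisfy'' hypothesis and keeps only ``some extension satisfies,'' which is why it overshoots into falsehood. The paper's fix is to exhibit, for every formula $\varphi$, a single canonical extension on which the two adjacent bits provably \emph{coincide}: Lemma~\ref{lem:tech} shows by structural induction that $\rltleval(u\emptyset^\omega,\varphi)[2] = \rltleval(u\emptyset^\omega,\varphi)[3]$ for all $u$, and $\rltleval(u^\omega,\varphi)[3] = \rltleval(u^\omega,\varphi)[4]$ for all non-empty $u$ (the induction works directly on the $\bools_4$-valued semantics, so negation, implication, and release need no special rewriting). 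A monitor output of $0011$ or $0001$ forces that one specific word to take value $0$ in one of these bits and $1$ in the other, an immediate contradiction. If you want to salvage your plan, you should replace the inclusions by this coincidence-on-a-canonical-word lemma; the loop-enlargement machinery is then unnecessary.
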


The unrealizability results for the truth values $0011$ and $0001$ are based on the following technical lemma (the reader might want to skip the proof for now and consult it at a later time). %

\begin{lemma}
\label{lem:tech}
Let $\varphi$ be an $\rltl$ formula.
Then, the following holds:
\begin{enumerate}

	\item\label{lem:tech:2eq3} $\rltleval(u\emptyset^\omega , \varphi)[2] = \rltleval(u\emptyset^\omega , \varphi)[3]$ for all $u \in \Sigma^*$.

	\item\label{lem:tech:3eq4} $\rltleval(u^\omega , \varphi)[3] = \rltleval(u^\omega , \varphi)[4]$ for all non-empty $u \in \Sigma^*$.

	\item\label{lem:tech:1eq2noreleaseonly} If $\varphi$ does not contain the release operator, then $\rltleval(u^\omega , \varphi)[1] = \rltleval(u^\omega , \varphi)[2]$ for all non-empty $u \in \Sigma^*$.

\end{enumerate}
\end{lemma}

\begin{proof} The proofs of all three items proceed by induction over the construction of $\varphi$. The induction start and the induction steps for Boolean connectives can be abstracted into the following closure property, which follows easily from the original  definition of $\rltleval$ in Section~\ref{sec:defs}:

\begin{quote}
Let $T \subseteq \bools_4$ contain $0000$ and $1111$. If $\rltleval(\sigma, \varphi_1)$ and $\rltleval(\sigma, \varphi_2) $ are in  $T$, then so are $\rltleval(\sigma, p)$ for atomic propositions~$p$, $\rltleval(\sigma, \neg \varphi_1)$, $\rltleval(\sigma,  \varphi_1 \land \varphi_2)$, $\rltleval(\sigma, \varphi_1 \lor \varphi_2)$, and $\rltleval(\sigma, \varphi_1 \Rimplies \varphi_2)$.
\end{quote}

\paragraph{Claim~\ref{lem:tech:2eq3})} The induction start and the induction step for the Boolean operators follow from the closure property, where we pick $T$ to be the set of truth values from $\bools_4$ whose second and third bit coincide. Furthermore, due to Remark~\ref{rem:sugar}, we only have to consider the inductive steps for the next, until, and release operator. All three cases rely on the following simple fact: A suffix~$\suff{u\emptyset^\omega}{n}$ for some $n$ is again of the form~$u'\emptyset^\omega$, \ie, the induction hypothesis is applicable to suffixes. Also, if $n \ge \size{u}$, then $\suff{u\emptyset^\omega}{n} = \emptyset^\omega$. In particular, $u\emptyset^\omega$ has only finitely many distinct suffixes. 

So, first consider a formula of the form~$\varphi = \Xdot \varphi_1$. Then, we have, for an arbitrary~$u \in \Sigma^*$,
\begin{align*}
\rltleval(u\emptyset^\omega, \varphi)[2] = &\rltleval(\suff{u\emptyset^\omega}{1}, \varphi_1)[2]\\
=&\rltleval(\suff{u\emptyset^\omega}{1}, \varphi_1)[3]
=\rltleval(u\emptyset^\omega, \varphi)[3],
\end{align*}
where the second equality is due to the induction hypothesis being applied to the suffix~$\suff{u\emptyset^\omega}{1}$.

Next, consider a formula of the form~$\varphi = \varphi_1 \Udot \varphi_2$. Then, we have, for an arbitrary~$u \in \Sigma^*$,
\begin{align*}
&\rltleval(u\emptyset^\omega, \varphi)[2]\\
=&\max_{n \ge 0} \min\set{\rltleval(\suff{u\emptyset^\omega}{n}, \varphi_2 )[2], \min_{0\le n' < n} \rltleval(\suff{u\emptyset^\omega}{n'}, \varphi_1)[2] }\\
=&\max_{n \ge 0} \min\set{\rltleval(\suff{u\emptyset^\omega}{n}, \varphi_2 )[3], \min_{0\le n' < n} \rltleval(\suff{u\emptyset^\omega}{n'}, \varphi_1)[3] }\\
=&\rltleval(u\emptyset^\omega, \varphi)[3], 
\end{align*}
where the second equality follows from an application of the induction hypothesis to the suffixes~$\suff{u\emptyset^\omega}{n}$ and $\suff{u\emptyset^\omega}{n'}$.

It remains to  consider a formula of the form~$\varphi = \varphi_1 \Rdot \varphi_2$. Then, we have, for an arbitrary~$u \in \Sigma^*$, that $\rltleval(u\emptyset^\omega, \varphi)[2]$ is by definition equal to 
\begin{align*}
&\max_{m \ge 0} \min_{n \ge m} \max\{ \rltleval(\suff{u\emptyset^\omega}{n}, \varphi_2 )[2], \max_{0 \le n' < n} \rltleval(\suff{u\emptyset^\omega}{n'}, \varphi_1)[2]  \}\\
=&\max_{m \ge 0} \min_{n \ge m} \max\set{\rltleval(\suff{u\emptyset^\omega}{n}, \varphi_2 )[3],   \max_{0 \le n' < n} \rltleval(\suff{u\emptyset^\omega}{n'}, \varphi_1)[3] }\\
=&\max_{m \ge \size{u}} \min_{n \ge m} \max\set{  \rltleval(\suff{u\emptyset^\omega}{n}, \varphi_2 )[3],  \max_{0 \le n' < n} \rltleval(\suff{u\emptyset^\omega}{n'}, \varphi_1)[3] }\\
=&\max_{m \ge \size{u}} \min_{n \ge m} \max\set{   \rltleval(\emptyset^\omega, \varphi_2 )[3],  \max_{0 \le n' \le \size{u}} \rltleval(\suff{u\emptyset^\omega}{n'}, \varphi_1)[3] }\\
=&\multicolumn{3}{l}{$\displaystyle\max\set{ \rltleval(\emptyset^\omega, \varphi_2 )[3], \max_{0 \le n' \le \size{u}} \rltleval(\suff{u\emptyset^\omega}{n'}, \varphi_1)[3]}, $}
\end{align*}
The first equality follows from twice applying the induction hypothesis. For the second one, observe that 
\[
\min_{n \ge m} \max\set{ \rltleval(\suff{u\emptyset^\omega}{n}, \varphi_2 )[3], \max_{0 \le n' < n} \rltleval(\suff{u\emptyset^\omega}{n'}, \varphi_1)[3] }\]
 is increasing in $m$. For the third one, note that for all $n\geq \size{u}$, $\suff{u\emptyset^\omega}{n} = \emptyset^\omega$, which means that we have eliminated every occurrence of $m$ and $n$. This explains the last equality. 
Similarly, $\rltleval(u\emptyset^\omega, \varphi)[3]$ is by definition equal to 
\begin{align*}
&\min_{m \ge 0} \max_{n \ge m} \max\set{  \rltleval(\suff{u\emptyset^\omega}{n}, \varphi_2 )[3],  \max_{0 \le n' < n} \rltleval(\suff{u\emptyset^\omega}{n'}, \varphi_1)[3]  }\nonumber\\
=&\multicolumn{3}{l}{$\displaystyle\max\set{ \rltleval(\emptyset^\omega, \varphi_2 )[3], \max_{0 \le n' \le \size{u}} \rltleval(\suff{u\emptyset^\omega}{n'}, \varphi_1)[3]  }, $}
\end{align*}
where the equality again follows from all suffixes~$\suff{u\emptyset^\omega}{n}$ with $n \ge \size{u}$ being equal to $\emptyset^\omega$. Thus, we have derived the desired equality between $\rltleval(u\emptyset^\omega, \varphi)[2]$ and $\rltleval(u\emptyset^\omega, \varphi)[3]$.

\paragraph{Claim~\ref{lem:tech:3eq4})} The induction start and the induction steps for Boolean operators follow from the closure property, where we here pick $T$ to be the set of truth values from $\bools_4$ whose third and fourth bit coincide. For $u = u(0) \cdots u(\size{u}-1)$ and $n < \size{u}$, we define $\rot{u}{n} = u(n) \cdots u(\size{u}-1)u(0)\cdots u(n-1)$, \ie, $\rot{u}{n}$ is obtained by ``rotating'' $u$ $n$ times. The induction steps for the temporal operators are based on the following simple fact: The suffix~$\suff{u^\omega}{n}$ is equal to $(\rot{u}{n \bmod \size{u}})^\omega$, \ie, the induction hypothesis is applicable to the suffixes. In particular, $u^\omega$ has only finitely many distinct suffixes, which all appear infinitely often in a cyclic order.

Now, the induction steps for the next and until operator are analogous to their counterparts in Item~\ref{lem:tech:2eq3}, as the only property we require there is that the induction hypothesis is applicable to suffixes. Hence, due to Remark~\ref{rem:sugar}, it only remains to consider the inductive step for the release operator.

So consider a formula of the form~$\varphi = \varphi_1 \Rdot \varphi_2$. Then, we have, for an arbitrary~$u \in \Sigma^*$, that $\rltleval(u^\omega, \varphi)[3]$ is by definition equal to
\begin{align*}
&\min_{m \ge 0} \max_{n \ge m} \max\set{ \rltleval(\suff{u^\omega}{n}, \varphi_2 )[3], \max_{0 \le n' < n} \rltleval(\suff{u^\omega}{n'}, \varphi_1)[3]  }\\
=&\min_{m \ge 0} \max_{n \ge m} \max\set{ \rltleval(\suff{u^\omega}{n}, \varphi_2 )[4], \max_{0 \le n' < n} \rltleval(\suff{u^\omega}{n'}, \varphi_1)[4] }\\
=&\max_{0 \le n < \size{u}}\max\set{ \rltleval((\rot{u}{n})^\omega, \varphi_2 )[4], \max_{0 \le n' < n} \rltleval((\rot{u}{n'})^\omega, \varphi_1)[4]  },\end{align*}
where the first equality follows from twice applying the induction hypothesis and the second one is due to all suffixes~$\suff{u^\omega}{n}$ being equal to $\rot{u}{n \bmod \size{u}}^\omega$, and that there are only finitely many, which all appear infinitely often in a cyclic order among the $(\rot{u}{n})^\omega$ for $0 \le n < \size{u}$.

Similarly, $\rltleval(u^\omega, \varphi)[4]$ is by definition equal to 
\begin{align*}
&\max_{n \ge 0} \max\set{ \rltleval(\suff{u^\omega}{n}, \varphi_2 )[4], \max_{0 \le n' < n} \rltleval(\suff{u^\omega}{n'}, \varphi_1)[4]  }\\
=&\max_{0 \le n < \size{u}}\max\set{ \rltleval((\rot{u}{n})^\omega, \varphi_2 )[4], \max_{0 \le n' < n} \rltleval((\rot{u}{n'})^\omega, \varphi_1)[4]  },
\end{align*}
where the equality again follows from all suffixes~$\suff{u^\omega}{n}$ being equal to $\rot{u}{n \bmod \size{u}}^\omega$, and that there are only finitely many, which appear in a cyclic order: In particular, after the first $\size{u}$ suffixes, we have seen all of them. Thus, we have derived the desired equality between $\rltleval(u^\omega, \varphi)[3]$ and $\rltleval(u^\omega, \varphi)[4]$.

\paragraph{Claim~\ref{lem:tech:1eq2noreleaseonly})} The induction start and the induction steps for Boolean operators are covered by the closure property, where we here pick $T$ to be the set of truth values from $\bools_4$ whose first and second bit coincide. The cases of the next and until operator are again analogous to the first and second item. Hence, we only have to consider the inductive step for the always operator, as we here only consider formulas without release. 

So, consider a formula of the form~$\varphi = \Boxdot \varphi_1$. Here, we again rely on the fact that the suffix~$\suff{u^\omega}{n}$ is equal to $(\rot{u}{n \bmod \size{u}})^\omega$. By definition, $\rltleval(u^\omega, \varphi)[1]$ is equal to
\begin{align*}
\min_{n \ge 0} \rltleval(\suff{u^\omega}{n}, \varphi_1 )[1]
=\min_{n \ge 0} \rltleval(\suff{u^\omega}{n}, \varphi_1 )[2]
=\min_{0 \le n < \size{u}} \rltleval((\rot{u}{n})^\omega, \varphi_1 )[2],
\end{align*}
where the first equality is due to the induction hypothesis and the second one due to the fact that $u^\omega$ has only finitely many suffixes, which are all already realized by some $\suff{u^\omega}{n}$ for $0 \le n < \size{u}$.

Similarly, $\rltleval(u^\omega, \varphi)[2]$ is by definition equal to
\begin{align*}
\max_{m \ge 0}\min_{n \ge m} \rltleval(\suff{u^\omega}{n}, \varphi_1 )[1]
& = \max_{m \ge 0}\min_{n \ge m} \rltleval(\suff{u^\omega}{n}, \varphi_1 )[2] \\
& =\min_{0 \le n < \size{u}} \rltleval((\rot{u}{n})^\omega, \varphi_1 )[2],
\end{align*}
where the two equalities follow as before: the first by induction hypothesis and the second one by the fact that $u^\omega$ has only finitely many suffixes, which all appear infinitely often in a cyclic order and which are all already realized by some $\suff{u^\omega}{n}$ for $0 \le n < \size{u}$.
Thus, we have derived the desired equality between $\rltleval(u^\omega, \varphi)[1]$ and $\rltleval(u^\omega, \varphi)[2]$.
\qed\end{proof}

Now, we are able to prove Theorem~\ref{thm:realizabletruthvalues}.

\begin{proof}
We begin by showing that $0011$ and $0001$ are not realizable. 

First, towards a contradiction, assume there is an $\rltl$ formula~$\varphi$ and a prefix~$u$ such that $\rltlmoneval(u,\varphi) = 0011$, \ie, for every extension~$u\sigma$, we have $\rltleval(u\sigma)[2]=0$ and $\rltleval(u\sigma)[3]=1$. However, by picking $\sigma = \emptyset^\omega$ we obtain the desired contradiction to Lemma~\ref{lem:tech}.\ref{lem:tech:2eq3}. 

The proof for $0001$ is similar. Assume there is an $\rltl$ formula~$\varphi$ and a prefix~$u$ such that $\rltlmoneval(u,\varphi) = 0001$. Due to Lemma~\ref{lem:specificity}, we can assume that $u$ is non-empty. Thus, we  have $\rltleval(u^\omega, \varphi) = 0001$ by definition of $\rltlmoneval$, which contradicts Lemma~\ref{lem:tech}.\ref{lem:tech:3eq4}.

Finally, applying Lemma~\ref{lem:tech}.\ref{lem:tech:1eq2noreleaseonly}, one can show that no \rltl formula without the release operator realizes~$0111$. However, we show below that it is realizable by a formula with the release operator.

Next, we show that every other truth value~$\beta \notin \set{0011,0001}$ is  indeed realizable. The witnessing pairs of prefixes and formulas are presented in Table~\ref{tag:truthvalues}. 

First, consider $\beta = 0111$ with prefix~$u = \emptyset\set{a}$ and formula~$\varphi = a \Rdot a$. We have $\rltlsem(1, \varphi) = a \R a$ and $\rltlsem(2,\varphi) = \Diamond\Box a \vee \Diamond a$. Note that $a \R a$ is violated by $u \sigma$, for every $\sigma \in \Sigma^\omega$. Dually, $\Diamond\Box a \vee \Diamond a$ is satisfied by $u \sigma$, for every $\sigma \in \Sigma^\omega$. Hence, for arbitrary~$\sigma \in \Sigma^\omega$, we have $\rltleval(u\sigma,\varphi)[1] = 0$ and $\rltleval(u\sigma,\varphi)[2] = 1$. Hence, we have $\rltleval(u\sigma,\varphi) = 0111$ for every $\sigma$, as this is the only truth value that matches this pattern. Hence, by definition, we obtain $\rltlmoneval(u, \varphi) = 0111$. 

The verification for all other truth values is based on Remark~\ref{rem:intuitivesemantics}, which is applicable to all formulas~$\varphi$ in the third column witnessing the realization of a truth value~$\beta \neq 0111$.
Now, for every such truth value~$\beta$ and corresponding pair~$(u, \varphi)$, one can easily verify the following:
\begin{itemize}
	\item If $\beta[i] = 0$, then no $u \sigma$ satisfies $\rltlsem(i, \varphi)$. 
	\item If $\beta[i] = 1$, then every $u \sigma$ satisfies $\rltlsem(i, \varphi)$.
	\item If $\beta[i] = ?$, then there are $\sigma,\sigma'$ such that $u \sigma$ satisfies $\rltlsem(i, \varphi)$ and such that $u \sigma'$ violates $\rltlsem(i, \varphi)$. In all such cases, $\sigma,\sigma' \in \set{\emptyset^\omega, \set{a}^\omega, \set{a}\emptyset^\omega, \emptyset\set{a}^\omega, (\set{a}\emptyset)^\omega}$ suffice.
\end{itemize}
We leave the details of this slightly tedious, but trivial, verification to the reader.
\qed\end{proof}

As shown in Table~\ref{tag:truthvalues}, all of the realizable truth values except for $0111$ are realized by formulas using only conjunction, disjunction, negation, eventually, and always. Further, $0111$ can only be realized by a formula with the release operator while the truth values $0011$ and $0001$ are indeed not realizable at all.

Note that the two unrealizable truth values $0011$ and $0001$ both contain a $0$ that is directly followed by a $1$. The proof of unrealizability formalizes the intuition that such an ``abrupt'' transition from definitive violation of a property to definitive satisfaction of the property cannot be witnessed by any finite prefix. Finally, the only other truth value of this form, $0111$, is only realizable by using a formula with the release operator.

Going again back to the motivating example~$\Boxdot s$, consider the evolution of the truth values on the sequence~$\varepsilon, \set{s}, \set{s}\emptyset$: They are $\ukno \ukno \ukno \ukno $, $\ukno \ukno \ukno 1$, and $0 \ukno \ukno  1$, \ie, $0$'s and $1$'s are stable when extending a prefix, only a $\ukno$ may be replaced by a $0$ or a $1$. 
This property holds in general. To formalize this, say that $\beta' \in \ternaries_4$ is more specific than $\beta \in \ternaries_4$, written as $\beta \preceq \beta'$, if, for all $i$, $\beta[i] \neq \ukno$ implies $\beta'[i] = \beta[i]$. 

\begin{lemma}
\label{lem:specificity}
Let $\varphi  $ be an \rltl formula and $u,u' \in \Sigma^*$. If $u \sqsubseteq u'$, then $\rltlmoneval (u, \varphi) \preceq \rltlmoneval (u', \varphi)$.
\end{lemma}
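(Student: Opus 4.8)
The plan is to prove the statement directly from Definition~\ref{def:monitor:output}, which characterizes each bit~$\beta[i]$ of $\rltlmoneval(u,\varphi)$ purely in terms of the possible extensions of $u$. The key observation is that extending the prefix from $u$ to $u'$ (with $u \sqsubseteq u'$) shrinks the set of available continuations: every infinite extension of $u'$ is also an extension of $u$. Formally, write $u' = u w$ for some $w \in \Sigma^*$. Then for every $\sigma \in \Sigma^\omega$, the word $u' \sigma = u (w\sigma)$ is an extension of $u$, so $\{u'\sigma : \sigma \in \Sigma^\omega\} \subseteq \{u\tau : \tau \in \Sigma^\omega\}$.

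The argument then proceeds bitwise. Fix an index $i \in \{1,\ldots,4\}$ and suppose $\rltlmoneval(u,\varphi)[i] \neq \ukno$; by the definition of $\preceq$, I only need to show that in this case $\rltlmoneval(u',\varphi)[i] = \rltlmoneval(u,\varphi)[i]$. Consider the case $\rltlmoneval(u,\varphi)[i] = 0$: by Definition~\ref{def:monitor:output} this means $\rltleval(u\tau,\varphi)[i] = 0$ for all $\tau \in \Sigma^\omega$. Applying this to the particular extensions of the form $\tau = w\sigma$, I get $\rltleval(u'\sigma,\varphi)[i] = \rltleval(u w \sigma,\varphi)[i] = 0$ for all $\sigma \in \Sigma^\omega$, which is exactly the condition under which $\rltlmoneval(u',\varphi)[i] = 0$. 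The case $\rltlmoneval(u,\varphi)[i] = 1$ is entirely symmetric, replacing $0$ by $1$ throughout. Since this holds for every bit~$i$, we conclude $\rltlmoneval(u,\varphi) \preceq \rltlmoneval(u',\varphi)$.

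There is essentially no hard step here: the result is a monotonicity statement that follows from the set-inclusion of continuations, and the monitor semantics was engineered so that $0$ and $1$ verdicts quantify universally over continuations. The only point requiring mild care is the bookkeeping that an extension $u'\sigma$ of $u'$ is genuinely of the form $u\tau$ for an extension $\tau = w\sigma$ of $u$ — this is what transfers a universal statement over continuations of $u$ down to a universal statement over continuations of $u'$. Note that a $\ukno$ verdict at $u$ imposes no constraint (the definition of $\preceq$ allows $\beta'[i]$ to be anything when $\beta[i] = \ukno$), so those bits need no attention, and it is precisely the $\ukno$ bits that may later sharpen to $0$ or $1$ as more of the word is observed, matching the intuition illustrated on the running example~$\Boxdot s$.
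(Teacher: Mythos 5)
Your proof is correct and follows exactly the paper's own argument: decompose $u' = uw$, observe that every extension of $u'$ is an extension of $u$, and transfer the universally quantified $0$/$1$ verdicts bitwise while noting that $\ukno$ bits impose no constraint under $\preceq$. Nothing to add.
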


\begin{proof}
Let $u \sqsubseteq u'$ and assume we have $\rltlmoneval(u,\varphi)[i] \in \set{0,1}$. Thus, by definition, $\rltleval(u\sigma, \varphi)[i] = \rltlmoneval(u,\varphi)[i]$ for every $\sigma \in \Sigma^\omega$. Now, as $u$ is a prefix of $u'$, we can decompose $u'$ into $u' = u v$ for some $v \in\Sigma^*$ and every extension~$u' \sigma'$ of $u'$ is the extension~$u v \sigma'$ of $u$. Hence, we have $\rltleval(u'\sigma', \varphi)[i] = \rltleval(uv\sigma', \varphi)[i] = \rltlmoneval(u,\varphi)[i]$ for every $\sigma' \in \Sigma^\omega$. Thus, $\rltlmoneval(u',\varphi)[i] = \rltlmoneval(u,\varphi)[i] $. 

As this property holds for every~$i$, we obtain $\rltlmoneval (u, \varphi) \preceq \rltlmoneval (u', \varphi)$.
\qed\end{proof}

Let us discuss two properties of the semantics: \emph{impartiality} and \emph{anticipation}~\cite{DBLP:conf/rv/DeckerLT13}.
Impartiality states that a definitive verdict will never be revoked: If $\rltlmoneval (u, \varphi) [i] \neq \ukno$, then for all finite extensions $v \in \Sigma^*$, the verdict will not change, so $\rltlmoneval (uv, \varphi)[i] = \rltlmoneval (u, \varphi) [i]$. 
This property follows immediately from Lemma~\ref{lem:specificity}.
Anticipation requires that a definitive verdict is decided as soon as possible, \ie, if $\rltlmoneval (u, \varphi) [i] = \ukno$, then $u$ can still be extended to satisfy and to violate $\varphi$ with the $i$-th bit.
Formally, there have to exist infinite extensions~$\sigma_0$ and $\sigma_1$ such that $\rltleval(u\sigma_0, \varphi)[i] =0$ and $\rltleval(u\sigma_1, \varphi)[i] =1$.
Anticipation holds by  definition of $\rltlmoneval (u, \varphi)$.
 
Due to Lemma~\ref{lem:specificity}, for a fixed formula, the prefixes of every infinite word can assume at most five different truth values, which are all of increasing specificity. It turns out that this upper bound is tight. To formalize this claim, we denote the strict version of $\preceq$ by $\prec$, \ie, $\beta \prec \beta'$ if and only if $\beta \preceq \beta'$ and $\beta \neq \beta'$. 

\begin{lemma}
\label{lem:specificity:lb}
There is an \rltl~formula~$\varphi$ and prefixes~$u_0 \sqsubset u_1 \sqsubset u_2 \sqsubset u_3 \sqsubset u_4 $ such that 
$ \rltlmoneval (u_0, \varphi) \prec \rltlmoneval (u_1, \varphi) \prec \rltlmoneval (u_2, \varphi) \prec \rltlmoneval (u_3, \varphi) \prec \rltlmoneval (u_4, \varphi)$.
\end{lemma}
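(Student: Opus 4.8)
The plan is to exhibit an explicit formula $\varphi$ together with a single infinite word $w$ and to take $u_0 \sqsubset u_1 \sqsubset u_2 \sqsubset u_3 \sqsubset u_4$ as five prefixes of $w$ of strictly increasing length. By Lemma~\ref{lem:specificity} the induced sequence $\rltlmoneval(u_0,\varphi), \ldots, \rltlmoneval(u_4,\varphi)$ is automatically $\preceq$-increasing, so it suffices to arrange that these five monitor outputs are pairwise distinct. Since every output lies in $\ternaries_4$, i.e.\ has the shape $0^*\ukno^*1^*$, and since $\preceq$ only ever replaces a $\ukno$ by a committed bit, five distinct values means that reading $w$ letter by letter resolves all four bits one at a time, starting from the all-unknown value $\ukno\ukno\ukno\ukno$.

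The key simplification I would use is the bitwise reduction to classical \ltl monitoring: by Remark~\ref{rem:intuitivesemantics} (and its extension to the full syntax via the alternative characterization of $\rltleval$), each bit satisfies $\rltlmoneval(u,\varphi)[i] = \ltlmoneval(u, \varphi_i)$, where $\varphi_i$ is the $i$-th \ltl bit-formula obtained from $\varphi$. Thus the task reduces to finding one \rltl formula whose four bit-formulas $\varphi_1,\ldots,\varphi_4$ are each \ltl-monitorable and become determined at four distinct prefix lengths. Concretely, I would aim for the chain $\ukno\ukno\ukno\ukno \prec 0\ukno\ukno\ukno \prec 0\ukno\ukno1 \prec 0\ukno11 \prec 0111$: every entry is a legal $0^*\ukno^*1^*$ value, none of them is one of the unrealizable values $0001$ or $0011$ from Theorem~\ref{thm:realizabletruthvalues}, and the final value $0111$ is the natural target because it forces a $0$ on the first bit and $1$s on the others. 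This calls for $\varphi_1$ to be a safety property that drops to $0$ once a ``bad'' letter is read, $\varphi_4$ to be a co-safety property that jumps to $1$ once a ``good'' letter is read, and $\varphi_2,\varphi_3$ to jump to $1$ at two further, distinct positions.

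The main obstacle is realizing the two middle bits with delay. For a formula built only from $\Boxdot$ of literals, the second and third bit-formulas are the genuine liveness properties $\Diamond\Box(\cdot)$ and $\Box\Diamond(\cdot)$, which are never \ltl-monitorable on a finite prefix; this is exactly the phenomenon behind the unrealizability of $0001$ and $0011$, and behind the observation that $0111$ can only be realized using the release operator. I would therefore build $\varphi$ from release gadgets (for instance starting from $a \Rdot a$, whose bit-formulas one computes to be $a$ on the first bit and $\Diamond a$ on the remaining three) and combine them by disjunction and conjunction so that the bit-$2$ and bit-$3$ formulas collapse to distinct co-safety conditions, each triggered by a different letter, while the bit-$1$ formula stays a pure safety property. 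The delicate point is to stagger these triggers --- to make bit $3$ commit strictly before bit $2$ and to keep both genuinely undetermined on the empty and short prefixes --- rather than letting them collapse to tautologies, which would fix them already on $\varepsilon$, as happens with $\Box\Diamond a \vee \Box\Diamond\neg a$.

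Once such a formula and word are fixed, the remaining work is routine verification: I would compute $\ltlmoneval(u_k, \varphi_i)$ for each of the five prefixes and each bit, reading off the five targeted values; Lemma~\ref{lem:specificity} then supplies the $\preceq$-ordering for free, so only the distinctness of the endpoints and of the three intermediate values has to be checked explicitly.
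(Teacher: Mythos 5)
Your reduction to the four \ltl{} bit-formulas is sound, and your observation that five pairwise distinct, $\preceq$-increasing values force exactly one bit to be resolved per step starting from $\ukno\ukno\ukno\ukno$ is correct. But the proposal stops short of the actual mathematical content of the lemma: the formula is never exhibited. Worse, the chain you target, $\ukno\ukno\ukno\ukno \prec 0\ukno\ukno\ukno \prec 0\ukno\ukno1 \prec 0\ukno11 \prec 0111$, routes you through precisely the hardest configuration. It requires bit~$3$ to commit to $1$ strictly before bit~$2$ does, while bit~$1$ remains a live (already-falsified) safety condition. As you yourself note, for a single release gadget $\varphi_1 \Rdot \varphi_2$ the ``trigger'' disjunct $\Diamond\,\rltlsem(i,\varphi_1)$ appears uniformly in bits $2$, $3$, and $4$, so with an atomic $\varphi_1$ all three commit simultaneously (for $a \Rdot a$ they are all $\Diamond a$); staggering them needs nested robust operators inside $\varphi_1$, and you give no candidate and no argument that the staggering can be done without collapsing bit~$2$ or bit~$3$ to a tautology on $\varepsilon$. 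Flagging this as ``the delicate point'' and deferring it to ``routine verification'' leaves the existence claim unproved.

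The paper avoids this difficulty entirely by choosing the chain $\beta_j = 0^j\ukno^{4-j}$, i.e.\ resolving the bits one at a time to $0$ rather than to $1$, ending at $0000$. Each intermediate value $0\ukno\ukno\ukno$, $00\ukno\ukno$, $000\ukno$ is realizable by a fixed formula $\psi_{\beta_j}$ \emph{on every prefix} (no release operator needed), and the formula is assembled as $\varphi = \psi_0 \vee \bigvee_{j=1}^{3}(\psi_{\beta_j}\wedge\psi_j)$, where $\psi_j = \bigwedge_{0\le j'<j}\Xdot^{j'}\neg a \wedge \Xdot^{j}a$ is a guard that is killed exactly by the prefix $\emptyset^{j+1}$. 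Reading $\emptyset^j$ then extinguishes the disjuncts one by one, and the surviving disjunct dictates the monitor output $\beta_j$. If you want to salvage your approach, the cleanest fix is to adopt this guarded-disjunction template with a descending chain of targets; if you insist on ending at $0111$, you must actually construct and verify the staggered release formula, which is a substantial piece of work that your proposal does not contain.
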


\begin{proof}
Consider the sequence~$\beta_0, \ldots, \beta_4$ with $\beta_j = 0^j \ukno^{4-j}$ and note that we have $\beta_j \prec \beta_{j+1}$ for every $j<4$. Furthermore, let $u_j = \emptyset^j$ for $j \in \set{0, \ldots, 4}$. We construct a formula~$\varphi$ such that $\rltlmoneval(u_j, \varphi) = \beta_j$ for every $j \in \set{0, \ldots, 4}$.

To this end, let
\begin{itemize}
	\item $\psi_{\beta_1} = \Diamonddot(a \wedge \Boxdot\neg \Diamonddot a) $,
	
	\item $\psi_{\beta_2} = \Boxdot(a \wedge \Xdot \neg a) \wedge \neg\Diamonddot\neg \Diamonddot a $, and 
	
	\item $\psi_{\beta_3} = \Diamonddot \Boxdot a \wedge \Diamonddot\neg \Diamonddot a $.
\end{itemize}
Later, we rely on the following fact about these formulas, which can easily be shown by applying Remark~\ref{rem:intuitivesemantics}: We have $\rltlmoneval(u, \psi_{\beta_j}) = \beta_j$ for every prefix~$u$.

Further, for $j \in \set{0,1,2,3}$, let $\psi_j$ be a formula that requires the proposition~$a$ to be violated at the first $j - 1$ positions, but to hold at the $j$-th position (recall that we start counting at zero), \ie,
$
\psi_j = (\bigwedge_{0 \le j' < j} \Xdot^{j'} \neg a) \wedge \Xdot^j a
$.
Here, we define the nesting of next operators as usual: $\Xdot^0 \xi = \xi$ and $\Xdot^{j+1} \xi= \Xdot\Xdot^j\xi$. By definition, we have $\rltleval(\emptyset^{j+1}\sigma, \psi_j) = 0000$ for every $\sigma\in\Sigma^\omega$ ($\dagger$). 

Now, we define
\[
\varphi = \psi_0 \vee \bigvee_{j=1}^3 \left(\psi_{\beta_j} \wedge \psi_j\right)
\]
and claim that it has the desired properties. To this end, we note that property~($\dagger$) implies $\rltleval(\emptyset^4\sigma, \varphi) = 0000$ for every $\sigma\in\Sigma^\omega$ ($\dagger\dagger$), as every disjunct of $\varphi$ contains a conjunct of the form~$\psi_j$ for some $j \le 3$. Also, let us mention that Remark~\ref{rem:intuitivesemantics} is applicable to $\varphi$. 

It remains to prove $\rltlmoneval(u_j, \varphi) = \beta_j$ for every $j \in \set{0, \ldots ,4}$.
\begin{itemize}
	
	\item For $j = 0$, we have $u_0 = \varepsilon$ and $\beta_0 = \ukno \ukno \ukno \ukno $. Hence, it suffices to present $\sigma_0, \sigma_1 \in \Sigma^\omega$ such that $\rltleval(\sigma_0, \varphi) = 0000$ and $\rltleval(\sigma_1, \varphi) = 1111$. 
	
	Due to property~($\dagger\dagger$), we can pick $\sigma_0 = \emptyset^\omega$. To conclude, we pick $\sigma_1 = \set{a}^\omega$, as we have 
	\[\rltleval(\sigma_1, \varphi) \ge \rltleval(\sigma_1, \psi_0) = \rltleval(\set{a}^\omega, a) = 1111 ,\] where the first inequality follows from $\psi_0$ being a disjunct of $\varphi$. 
	
	\item For $j = 1$, we have $u_1 = \emptyset$ and $\beta_1 = 0 \ukno \ukno \ukno$. To show $\rltlmoneval(u_1, \varphi) = \beta_1$, it suffices to present $\sigma_0,\sigma_1 \in\Sigma^\omega$ such that $\rltleval(u_1\sigma_0, \varphi) = 0000$, $\rltleval(u_1\sigma_1, \varphi) = 0111$, and show that $\rltleval(u_1\sigma, \varphi)[1] = 0$ for every $\sigma\in\Sigma^\omega$. First, we again pick~$\sigma_0 = \emptyset^\omega$ due to property~($\dagger\dagger$). 
	Now, consider $\sigma_1 = \set{a}^\omega$. Then,
	\begin{align*}
		\rltleval(u_1 \set{a}^\omega , \psi_{\beta_j} \wedge \psi_j)  
		& = \min\set{ \rltleval(u_1 \set{a}^\omega , \psi_{\beta_j} ), \rltleval(u_1 \set{a}^\omega ,\psi_j) } \\
		& = \min\set{0111, 1111}= 0111,
	\end{align*}
	where $\rltleval(u_1 \set{a}^\omega , \psi_{\beta_j} ) = 0111$ can easily be verified using Remark~\ref{rem:intuitivesemantics}. 
To conclude, using Remark~\ref{rem:intuitivesemantics}, one can easily verify that $\rltlsem(1, \varphi)$ is not satisfied by $u_1\sigma$ for any $\sigma\in\Sigma^\omega$.
	
\item The reasoning for $j =2,3$ is along the same lines as the one for $j=1$ and is left to the reader.	
	
	\item For $j = 4$, we have $u_4 = \emptyset\emptyset\emptyset\emptyset$ and $\beta_4 = 0000$. Hence, our claim follows directly from property~($\dagger\dagger$), which shows $\rltleval(u_4\sigma, \varphi) = 0000$ for every $\sigma \in \Sigma^\omega$.\hfill\qed
\end{itemize}
\end{proof}

After determining how many different truth values can be assumed by prefixes of a single infinite word, an obvious question is how many truth values can be realized by a fixed formula on \emph{different} prefixes. It is not hard to combine the formulas in Table~\ref{tag:truthvalues} to a formula that realizes all truth values not ruled out by Theorem~\ref{thm:realizabletruthvalues}.\footnote{\label{bingchenfootnote}Note that there are formulas in publicly available repositories that assume \emph{many} truth values. One example is the formula
\[
(((a \wedge d) \vee (\neg a \wedge \neg d)) \wedge \Boxdot (\neg b \vee (\neg a \wedge d))) \vee  (((\neg a \wedge d) \vee (a \wedge \neg d)) \wedge \Diamonddot (b \wedge (a \vee \neg d))) \vee (a \wedge \Boxdot b),
\]
which is taken from the LTLStore~\cite{DBLP:journals/corr/abs-1807-03296} and assumes ten different truth values.
 }

\begin{lemma}
\label{lem:oneformulaalltruthvalues}
There is an \rltl formula~$\varphi$ 	such that for every $\beta \in \ternaries_4 \setminus \set{0011,0001}$ there is a prefix~$u_\beta$ with $\rltlmoneval(u_\beta, \varphi) = \beta$. 
\end{lemma}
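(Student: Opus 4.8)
The plan is to assemble the desired formula from the individual gadget formulas $\varphi_\beta$ of Table~\ref{tag:truthvalues}, one for each realizable $\beta \in \ternaries_4 \setminus \set{0011,0001}$, in a way that lets a prefix \emph{select} exactly one gadget and expose precisely its truth value. The naive candidate $\bigvee_\beta \varphi_\beta$ fails at once: disjunction is interpreted as a bitwise $\max$ over \emph{all} infinite extensions, so the gadget realizing $1111$ (which evaluates to $1111$ on every word) would dominate the maximum and suppress every other verdict. To prevent this I would equip each gadget with a guard. Concretely, I rename propositions so that the gadgets use pairwise disjoint proposition sets $P_\beta$, add one fresh proposition $e_\beta$ per gadget, and set $\psi_\beta = e_\beta \wedge \Xdot \varphi_\beta$ and $\varphi = \bigvee_\beta \psi_\beta$. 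The guard $e_\beta$, read only at position $0$, will switch a gadget on or off; the $\Xdot$ shifts each gadget's window one step into the future so that position $0$ is free for the guards and the gadget's own prefix starts at position $1$.

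The argument then rests on two \emph{pass-through} facts about the monitor semantics. First, if $w \in \Sigma^*$ forces $\rltlmoneval(w,\psi_\beta) = 0000$ for all $\beta$ except one index $\beta^\ast$ --- meaning each inactive $\psi_\beta$ evaluates to $0000$ on \emph{every} extension $w\sigma$ --- then $\rltleval(w\sigma,\varphi) = \max(\rltleval(w\sigma,\psi_{\beta^\ast}),0000,\dots,0000) = \rltleval(w\sigma,\psi_{\beta^\ast})$ for all $\sigma$, so $\rltlmoneval(w,\varphi) = \rltlmoneval(w,\psi_{\beta^\ast})$. Second, if $e_{\beta^\ast}$ is set true at position $0$ of $w$, then $\rltleval(w\sigma,e_{\beta^\ast}) = 1111$ on all extensions, the conjunction degenerates to $\min(1111,\cdot)$, and hence $\rltlmoneval(w,\psi_{\beta^\ast}) = \rltlmoneval(w,\Xdot\varphi_{\beta^\ast})$; conversely, setting $e_\beta$ false forces $\rltleval(w\sigma,\psi_\beta) = \min(0000,\cdot) = 0000$, giving $\rltlmoneval(w,\psi_\beta) = 0000$. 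It remains to evaluate $\rltlmoneval(w,\Xdot\varphi_{\beta^\ast})$: taking $w = c_0\, v$, where $c_0$ encodes the guards and $v$ embeds the table prefix $u_{\beta^\ast}$ over $P_{\beta^\ast}$, the identity $\rltleval(w\sigma,\Xdot\varphi_{\beta^\ast}) = \rltleval(\suff{(w\sigma)}{1},\varphi_{\beta^\ast})$ together with the fact that $\varphi_{\beta^\ast}$ ignores the other propositions yields $\rltlmoneval(w,\Xdot\varphi_{\beta^\ast}) = \rltlmoneval(u_{\beta^\ast},\varphi_{\beta^\ast}) = \beta^\ast$. Assembling: for each target $\beta^\ast$ I choose the prefix $w_{\beta^\ast}$ that turns $e_{\beta^\ast}$ on, all other guards off, and reproduces $u_{\beta^\ast}$ on $P_{\beta^\ast}$ at positions $1,2,\dots$; combining the two facts gives $\rltlmoneval(w_{\beta^\ast},\varphi) = \beta^\ast$, as required.

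The main obstacle is conceptual rather than computational: the monitor semantics $\rltlmoneval$ does \emph{not} commute with the Boolean connectives in general, because quantifying over all infinite extensions does not distribute over $\min$ and $\max$. The whole construction is engineered to sidestep this by driving the relevant subformulas to the \emph{definite} values $0000$ or $1111$ via the prefix, since only then do the $\min$/$\max$ collapse and the pass-through identities become valid; verifying these two identities carefully is where the real work lies. A secondary subtlety is that several table entries realize their truth value on the empty prefix $\varepsilon$, yet the guard forces every combined prefix to have length at least one, and naively padding position $0$ with $P_{\beta^\ast}$-content would change the gadget's monitor value (e.g.\ $\rltlmoneval(\varepsilon,\Boxdot a) = \ukno\ukno\ukno\ukno$ but $\rltlmoneval(\set{a},\Boxdot a) = \ukno\ukno\ukno1$); the $\Xdot$ shift resolves exactly this by decoupling the guard position from the gadget's prefix, so that the case $u_{\beta^\ast} = \varepsilon$ is handled uniformly with the rest.
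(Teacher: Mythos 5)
Your construction is correct and follows essentially the same route as the paper's proof: guard each gadget formula from Table~\ref{tag:truthvalues} with a fresh proposition, take the disjunction, and use the prefix to switch exactly one gadget on (driving the others to the definite value $0000$ so that the $\min$/$\max$ semantics collapses and the monitor value passes through). The one genuine difference is your $\Xdot$-shift $e_\beta \wedge \Xdot\varphi_\beta$, which cleanly handles the gadgets whose table prefix is $\varepsilon$; the paper instead adds the guard $a_\beta$ to the first letter of $u_\beta'$, which is not well-defined when $u_\beta' = \varepsilon$ and, if one pads position~$0$ naively, can perturb the gadget's verdict (e.g.\ $\rltlmoneval(\varepsilon,\Boxdot a) = \ukno\ukno\ukno\ukno$ versus $\rltlmoneval(\set{a},\Boxdot a) = \ukno\ukno\ukno 1$). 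So your version is, if anything, slightly more careful than the paper's on this point; the disjoint renaming of propositions is harmless but not needed, since a deactivated guard already forces its conjunct to $0000$ regardless of what the gadget evaluates to.
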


\begin{proof}
For every $\beta \in \ternaries_4 \setminus \set{0011,0001}$ let $\varphi_\beta$ be an \rltl formula and $u_\beta'$ be a prefix, both over~$\set{a}$, with $\rltlmoneval(u_\beta', \varphi_\beta) = \beta$. 
Such formulas and prefixes exist as shown in Table~\ref{tag:truthvalues}.

Now, consider the formula
\[
\varphi = \bigvee_{ \beta \in \ternaries_4 \setminus \set{0011,0001}} a_\beta \wedge \varphi_\beta
\]
over the propositions $\set{a} \cup \set{a_\beta \mid \beta \in \ternaries_4 \setminus \set{0011,0001}}$.

By construction, we have $\rltlmoneval(u_\beta, \varphi) = \beta$ for every $\beta$, where 
\[u_\beta = (u_\beta'(0) \cup \set{a_\beta}) u_\beta'(1) \cdots u_\beta'(\size{u_\beta'}-1),\] i.e, we obtain $u_\beta$ from $u_\beta'$ by adding the proposition~$a_\beta$ to the first letter.
Hence, $\varphi$ has the desired properties.
\qed\end{proof}

Finally, let us consider the notion of \emph{monitorability}~\cite{DBLP:conf/fm/PnueliZ06}, an important concept in the theory of runtime monitoring. As a motivation, consider the  \ltl formula~$\psi = \Box\Diamond s$ and an arbitrary prefix~$u \in \Sigma^*$. Then, the extension~$u \set{s}^\omega$ satisfies $\psi$ while the extension~$u \emptyset^\omega$ does not satisfy $\psi$, \ie, satisfaction of $\psi$ is independent of any prefix~$u$. Hence, we have $\ltlmoneval (u, \psi) = \ukno$ for every prefix~$u$, \ie, monitoring the formula~$\psi$ does not generate any information. 

In general, for a fixed \ltl formula~$\varphi$, a prefix~$u \in \Sigma^*$ is called \emph{ugly} if we have $\ltlmoneval(uv ,\varphi) = \ukno$ for every finite~$v\in\Sigma^*$, \ie, every finite extension of $u$ yields an indefinite verdict.%
\footnote{Note that the good/bad prefixes introduced by Kupfermann and Vardi~\cite{DBLP:journals/fmsd/KupfermanV01} can only be extended into infinite words satisfying/unsatisfying the formula, respectively, and thus provide a verdict immediately. On the other hand, no finite extension of an ugly prefix~\cite{BauerLeuckerSchallhart11} allows to conclude on the satisfaction of the formula.}
Now, $\varphi$ is \emph{\ltl-monitorable} if there is no ugly prefix with respect to $\varphi$. 
A wide range of  \ltl formulas (\eg, $\psi = \Box\Diamond s$ as above) are unmonitorable in that sense. In particular, 44\% of the \ltl formulas considered in the experiments of Bauer \etal are not \ltl-monitorable. 

We next generalize the notion of monitorability to \rltl. 
In particular, we answer whether there are unmonitorable \rltl formulas. 
Then, in  Section~\ref{sec:evaluation}, we exhibit that all \ltl formulas considered by Bauer \etal's experimental evaluation, even the unmonitorable ones, are monitorable under \rltl semantics. 
To conclude the motivating example,  note that the \rltl analogue~$\Boxdot\Diamonddot s$ of the \ltl formula~$\psi$ induces two truth values from $\ternaries_4$ indicating whether $s$ has been true at least once (truth value~$\ukno\ukno\ukno1$) or not (truth value~$\ukno\ukno\ukno\ukno$).
Even more so, every prefix inducing the truth value~$\ukno \ukno \ukno \ukno $ can be extended to one inducing the truth value~$\ukno \ukno \ukno 1$.

\begin{definition}
Let $\varphi$ be an \rltl formula.
 A prefix~$u \in \Sigma^*$ is called \emph{ugly} if we have $\ltlmoneval(uv ,\varphi) = \ukno\ukno\ukno\ukno$ for every finite~$v\in\Sigma^*$.
Further, $\varphi$ is \emph{\rltl-monitorable} if it has no ugly prefix. 
\end{definition}

As we have argued above, the formula~$\Boxdot\Diamonddot s$ has no ugly prefix, \ie, it is \rltl-monitorable. Thus, we have found an unmonitorable \ltl formula whose \rltl analogue (the formula obtained by adding dots to all temporal operators) is monitorable. 
The converse statement is also true. There is a monitorable \ltl formula whose \rltl analogue is unmonitorable. To this end, consider the \ltl formula
\[
(\Box s \wedge \Box\neg s) \Rimplies (\Diamond \Box s \wedge \Diamond \neg \Diamond s),
\]
which is a tautology and therefore monitorable. On the other hand, we claim that $\emptyset\set{s}$ is an ugly prefix for the \rltl analogue~$\varphi$ obtained by adding dots to the temporal operators. 
To this end note that we have both $\rltleval(\emptyset\set{s} v \emptyset^\omega, \varphi) = 1111$ and $\rltleval(\emptyset\set{s} v \set{s}^\omega, \varphi) = 0000$ for every~$v \in \Sigma^*$. 
Hence, $\rltlmoneval(\emptyset\set{s}v,\varphi) = \ukno \ukno \ukno \ukno $ for every such $v$, \ie, $\emptyset\set{s}$ is indeed ugly and $\varphi$ therefore not \rltl-monitorable. 

Thus, there are formulas that are unmonitorable under \ltl semantics, but monitorable under \rltl semantics and there are formulas that are unmonitorable under \rltl semantics, but monitorable under \ltl semantics.
Using these formulas one can also construct a formula that is unmonitorable under both semantics. 

To this end, fix \ltl formulas~$\varphi_\ell$ and $\varphi_r$ over disjoint sets of propositions and a fresh proposition~$p$ not used in either formula such that 
\begin{itemize}
\item $\varphi_\ell$ has an ugly prefix~$u_\ell$ under \ltl semantics, and
\item $\varphi_r$ (with dotted operators) has an ugly prefix~$u_r$ under \rltl semantics.
\end{itemize}
We can assume both prefixes to be non-empty, as ugliness is closed under finite extensions.
Let $\varphi = (p \wedge \varphi_\ell) \vee (\neg p \wedge \varphi_r)$.
Then, the prefix obtained from $u_\ell$ by adding the proposition~$p$ to the first letter is ugly for $\varphi$ under \ltl semantics and $u_r$ is ugly for $\varphi$ (with dotted operators) under \rltl semantics.

As a final example, recall that we have shown that $\Boxdot \Diamonddot s$ is \rltl-monitorable and consider its negation~$\neg \Boxdot \Diamonddot s$. 
It is not hard to see that $\rltlmoneval(u, \varphi) = \ukno \ukno \ukno \ukno $ holds for every prefix~$u$.
Hence, $\varepsilon$ is an ugly prefix for the formula, \ie, we have found another unmonitorable \rltl formula. 
In particular, the example shows that, unlike for \ltl, \rltl-monitorability is not preserved under negation.

After having studied properties of rLTL monitorability, we next show our main result: The robust monitoring semantics~$\rltlmoneval$ can be implemented by finite-state machines.

\section{Construction of \rltl Monitors}
\label{sec:monitoring}

An \rltl monitor is an implementation of the robust monitoring semantics $\rltlmoneval $ in form of a finite-state machine with output.
More precisely, an \emph{\rltl monitor} for an \rltl formula~$\varphi$ is a finite-state machine $\mach_\varphi$ that on reading an input $u \in \Sigma^*$ outputs $\rltlmoneval (u, \varphi)$.
In this section, we show how to construct \rltl monitors and that this construction is asymptotically not more expensive than the construction of \ltl monitors.
Let us fix an \rltl formula~$\varphi$ for the remainder of this section.

Our \rltl monitor construction is inspired by Bauer et al.~\cite{BauerLeuckerSchallhart11} and generates a sequence of finite-state machines (\ie, Büchi automata over infinite words, (non)deterministic automata over finite words, and Moore machines).
Underlying these machines are \emph{transition structures} $\tsys = (Q, q_\init, \Delta)$ consisting of a nonempty, finite set $Q$ of states, an initial state $q_\init \in Q$, and a transition relation $\Delta \subseteq Q \times \Sigma \times Q$.
An (infinite) run of~$\tsys$ on a word $\sigma = a_0 a_1 a_2 \cdots \in \Sigma^\omega$ is a sequence $\rho = q_0 q_1 \cdots$ of states such that $q_0 = q_\init$ and $(q_{j}, a_{j}, q_{j+1}) \in \Delta$ for $j \in \mathbb N$.
Finite runs on finite words are defined analogously.
The transition structure~$\tsys$ is \emph{deterministic} if
\begin{enumerate*}[label={(\alph*)}]
	\item $(q, a, q') \in \Delta$ and $(q, a, q'') \in \Delta$ imply $q' = q''$ and
	\item for each $q \in Q$ and $a \in \Sigma$ there exists a $(q, a, q') \in \Delta$.
\end{enumerate*}
We then replace the transition relation~$\Delta$ by a function $\delta\colon Q \times \Sigma \rightarrow Q$.
Finally, we define the \emph{size} of a transition structure $\tsys$ as $\card{\tsys} = \card{Q}$ in order to measure its complexity.

Our construction then proceeds in three steps:
\begin{enumerate}
	\item We bring $\varphi$ into an operational form by constructing Büchi automata $\aut_\beta^\varphi$ for each truth value $\beta \in \bools_4$ that can decide the valuation $\rltleval(\sigma, \varphi)$ of infinite words $\sigma \in \Sigma^\omega$.
	\item Based on these Büchi automata, we then construct nondeterministic automata $\autB_\beta^\varphi$ that can decide whether a finite word $u \in \Sigma^*$ can still be extended to an infinite word $u\sigma \in \Sigma^\omega$ with $\rltleval(u\sigma, \varphi) = \beta$.
	\item We determinize the nondeterministic automata obtained in Step~2 and combine them into a single Moore machine that computes $\rltlmoneval (u, \varphi)$.
\end{enumerate}
Let us now describe each of these steps in detail.

\paragraph{\bfseries Step~1:}
We first translate the \rltl formula~$\varphi$ into several Büchi automata using a construction by Tabuada and Neider~\cite{DBLP:conf/csl/TabuadaN16}, summarized in Theorem~\ref{thm:büchi} below.
A \emph{(nondeterministic) Büchi automaton (NBA)} is a four-tuple $\aut = (Q, q_\init, \Delta, F)$ where $\tsys = (Q, q_\init, \Delta)$ is a transition structure and $F \subseteq Q$ is a set of accepting states.
A run~$\pi$ of~$\aut$ on~$\sigma \in \Sigma^\omega$ is a run of~$\tsys$ on~$\sigma$, and we say that~$\pi$ is accepting if it contains infinitely many states from~$F$.
The automaton~$\aut$ accepts a word~$\sigma$ if there exists an accepting run of~$\aut$ on~$\sigma$.
The language $\lang(\aut)$ is the set of all words accepted by~$\aut$, and the size of~$\aut$ is defined as $\card{\aut} = \card{\tsys}$.

\begin{theorem}[Tabuada and Neider~\cite{DBLP:conf/csl/TabuadaN16}] \label{thm:büchi}
Given a truth value $\beta \in \bools_4$, one can construct a Büchi automaton $\aut_\beta^\varphi$ with $2^{\bigo(|\varphi|)}$ states such that $\lang(\aut_\beta^\varphi) = \{ \sigma \in \Sigma^\omega \mid \rltleval(\sigma, \varphi) = \beta \}$.
This construction can be performed in $2^{\bigo(|\varphi|)}$ time.
\end{theorem}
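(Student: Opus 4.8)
The plan is to reduce the problem to the well-understood translation from LTL to Büchi automata, exploiting the bit-wise characterization of the rLTL semantics. Recall from the alternative semantics (Definition~\ref{def:rltl-semantics-alt}) and Remark~\ref{rem:intuitivesemantics} that for each bit $i \in \{1,\ldots,4\}$ there is an LTL formula $\varphi_i$ with $\rltleval(\sigma,\varphi)[i] = \ltleval(\sigma,\varphi_i)$ for all $\sigma \in \Sigma^\omega$. Measuring size by the number of distinct subformulas, the four formulas $\varphi_1,\ldots,\varphi_4$ jointly contain only $\bigo(\card{\varphi})$ distinct subformulas: each rewriting rule replaces a subformula of $\varphi$ by a bounded-size combination of the bit-formulas of its immediate subformulas, so sharing subformulas across the $\varphi_i$ keeps the closure linear in $\card{\varphi}$.

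Given a truth value $\beta \in \bools_4$, I would first observe that since $\beta$ fixes each bit to a value in $\bools$, the target language decomposes as $\{ \sigma \in \Sigma^\omega \mid \rltleval(\sigma,\varphi) = \beta \} = \bigcap_{i=1}^{4} L_i$, where $L_i = \lang(\varphi_i)$ if $\beta[i]=1$ and $L_i = \Sigma^\omega \setminus \lang(\varphi_i) = \lang(\neg\varphi_i)$ if $\beta[i]=0$. Since $\varphi_i$ is an LTL formula, so is $\neg\varphi_i$, so each $L_i$ is the language of an LTL formula $\psi_i$ with $\bigo(\card{\varphi})$ distinct subformulas; in particular, no complementation of automata is required, which is exactly what keeps the bound single-exponential.

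Next I would apply the standard LTL-to-Büchi construction to each $\psi_i$, yielding Büchi automata of $2^{\bigo(\card{\varphi})}$ states computable in $2^{\bigo(\card{\varphi})}$ time. Finally I would combine the four automata by the usual intersection (product) construction for Büchi automata, which augments the product state space with a cyclic counter over the four acceptance conditions. The resulting automaton $\aut_\beta^\varphi$ recognizes $\bigcap_{i=1}^{4} L_i$ and has size $\bigo(1) \cdot \prod_{i=1}^{4} 2^{\bigo(\card{\varphi})} = 2^{\bigo(\card{\varphi})}$, and the entire pipeline runs in $2^{\bigo(\card{\varphi})}$ time, as claimed.

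The main obstacle is the size analysis of the rLTL-to-LTL rewriting, not the automata-theoretic machinery, which is entirely standard. The delicate cases are the robust implication and release: there the value of bit~$i$ depends on all four bits of the subformulas (for instance, $\varphi_1 \Rimplies \varphi_2$ evaluates to $1111$ exactly when the value of $\varphi_1$ is pointwise dominated by that of $\varphi_2$ across all bits, and equals the bit of $\varphi_2$ otherwise), so the naive tree-size of $\varphi_i$ can grow super-linearly. I would resolve this by working with the distinct-subformula measure together with a DAG representation in which the bit-formulas of a shared subformula are computed once. Since the LTL-to-Büchi state space depends only on the closure, i.e., the set of distinct subformulas, this suffices to retain the bound $2^{\bigo(\card{\varphi})}$.
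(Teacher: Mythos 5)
Your proposal is correct, and it reconstructs the standard argument: the paper itself does not prove this theorem but imports it from Tabuada and Neider, whose construction likewise goes through the per-bit LTL translation $\rltlsem(i,\varphi)$ (with subformula sharing to keep the closure linear, which you correctly identify as the only delicate point for implication and release) followed by an LTL-to-Büchi translation and a product over the four bit constraints. Since LTL is syntactically closed under negation, no automaton complementation is needed for the bits with $\beta[i]=0$, so the single-exponential bound goes through exactly as you argue.
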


The Büchi automata~$\mathcal A_\beta^\varphi$ for $\beta \in \bools_4$ serve as building blocks for the next~steps.
However, before we proceed, let us illustrate this step with an example.

\begin{example} \label{ex:construction-step-1}
Let us consider the formula $\varphi = \Boxdot s$, which already served as a running example in Section~\ref{sec:problem}.
Applying Theorem~\ref{thm:büchi} results in the five nondeterministic Büchi automata~$\aut_\beta^\varphi$, one for each $\beta \in \bools_4$, shown in Figure~\ref{fig:ex-step-1}.
We here use the standard way to represent finite-state machines graphically.
States are drawn as circles and transitions are drawn as arrows.
Moreover, the initial state has an incoming arrow, while accepting states are indicted by double circles.
Finally, note that we use propositional formulas to symbolically define sets of transitions.
For instance, a transition labeled with $s$ in Figure~\ref{fig:ex-step-1:1111} represents all transitions labeled with a symbol from the set $\{ A \subseteq P \mid s \in A \} \subseteq \Sigma$.
In particular, $\mathit{true}$ represents all symbols in $\Sigma$. \hfill\exampleend
\end{example}

\begin{figure}
	\centering
	\begin{subfigure}[b]{.25\textwidth}
		\centering
		\begin{tikzpicture}[auto]
			\node[state, accepting] (0) at (0, 0) {};
			\draw[<-, shorten <=1pt] (0.west) -- +(-.3,0);
			\path[->] (0) edge[loop above] node {$s$} ();
		\end{tikzpicture}
		\caption{The NBA $\aut^{\Boxdot s}_{1111}$}
		\label{fig:ex-step-1:1111}
	\end{subfigure}
	\hskip 2em
	\begin{subfigure}[b]{.3\textwidth}
		\centering
		\begin{tikzpicture}[auto]
			\node[state] (0) at (0, 0) {};
			\node[state, accepting] (1) at (2.25, 0) {};
			\draw[<-, shorten <=1pt] (0.west) -- +(-.3,0);
			\path[->] (0) edge[loop above] node {$\mathit{true}$} () edge node {$\lnot s$} (1) ;
			\path[->] (1) edge[loop above] node {$s$} ();
		\end{tikzpicture}
		\caption{The NBA $\aut^{\Boxdot s}_{0111}$}
		\label{fig:ex-step-1:0111}
	\end{subfigure}
	\hskip 2em
	\begin{subfigure}[b]{.3\textwidth}
		\centering
		\begin{tikzpicture}[auto]
			\node[state] (0) at (0, 0) {};
			\node[state] (1) at (2.25, 0) {};
			\node[state, accepting] (2) at (1.125, 1.75) {};
			\draw[<-, shorten <=1pt] (0.west) -- +(-.3,0);
			\path[->] (0) edge[loop above] node {$\lnot s$} () edge node[swap] {$s$} (1) ;
			\path[->] (1) edge node[swap, near end] {$\lnot s$} (2) edge[loop above] node {$s$} ();
			\path[->] (2) edge node {$\lnot s$} (0) edge[bend right=20] node[swap, near end] {$s$} (1);
		\end{tikzpicture}
		\caption{The NBA $\aut^{\Boxdot s}_{0011}$}
		\label{fig:ex-step-1:0011}
	\end{subfigure}

	\bigskip
	\begin{subfigure}[b]{.3\textwidth}
		\centering
		\begin{tikzpicture}[auto]
			\node[state] (0) at (0, 0) {};
			\node[state, accepting] (1) at (2.25, 0) {};
			\draw[<-, shorten <=1pt] (0.west) -- +(-.3,0);
			\path[->] (0) edge[loop above] node {$\mathit{true}$} () edge node {$s$} (1) ;
			\path[->] (1) edge[loop above] node {$\lnot s$} ();
			
		\end{tikzpicture}
		\caption{The NBA $\aut^{\Boxdot s}_{0001}$}
		\label{fig:ex-step-1:0001}
	\end{subfigure}
	\hskip 2em
	\begin{subfigure}[b]{.25\textwidth}
		\centering
		\begin{tikzpicture}[auto]
			\node[state, accepting] (0) at (0, 0) {};
			\draw[<-, shorten <=1pt] (0.west) -- +(-.3,0);
			\path[->] (0) edge[loop above] node {$\lnot s$} ();
		\end{tikzpicture}
		\caption{The NBA $\aut^{\Boxdot s}_{0000}$}
		\label{fig:ex-step-1:0000}
	\end{subfigure}
	
	\caption{The Büchi automata $\aut^{\Boxdot s}_\beta$ constructed in Step~1 of our monitor construction}
	\label{fig:ex-step-1}
\end{figure}

\paragraph{\bfseries Step 2:}
For each Büchi automaton $\aut^\varphi_\beta$ obtained in the previous step, we now construct a nondeterministic automaton $\autB_\beta^\varphi$ over finite words.
This automaton determines whether a finite word $u \in \Sigma^*$ can be continued to an infinite word $u\sigma \in \lang(\aut_\beta^\varphi)$ (\ie, $\rltleval(u\sigma, \varphi) = \beta$) and is used later to construct the \rltl monitor.

A \emph{nondeterministic finite automaton (NFA)} is a four-tuple $\aut = (Q, q_\init, \Delta, F)$ that is syntactically identical to a Büchi automaton.
The size of~$\aut$ is defined analogously to Büchi automata.
In contrast to Büchi automata, however, NFAs only admit finite runs on finite words, \ie, a run of $\aut $ on $u = a_0 \cdots a_{n-1} \in \Sigma^*$ is a sequence~$q_0 \cdots q_n$ such that $q_0 = q_\init$ and $(q_j, a_j, q_{j+1}) \in \Delta$ for every $j < n$.
A run $q_0 \cdots q_n$ is called \emph{accepting} if $q_n \in F$.
Accepted words as well as the language of~$\aut$ are again defined analogously to the Büchi case.
If~$(Q, q_\init, \Delta)$ is deterministic,~$\aut$ is a \emph{deterministic finite automaton (DFA)}.
It is well-known that for each NFA~$\aut$ one can construct a DFA~$\aut'$ with~$\lang(\aut) = \lang(\aut')$ and $\card{\aut'} \in \bigo(2^{\card{\aut}})$.

Given the Büchi automaton $\aut_\beta^\varphi = (Q_\beta, q_{\init,\beta}, \Delta_\beta, F_\beta)$, we first compute the set $F_\beta^\star = \{ q \in Q_\beta \mid \lang(\aut^\varphi_\beta(q)) \neq \emptyset \}$, where $\aut^\varphi_\beta(q)$ denotes the Büchi automaton $\aut^\varphi_\beta$ but with initial state $q$ instead of $q_\init$.
Intuitively, the set $F_\beta^\star$ contains all states $q \in Q_\beta$ from which there exists an accepting run in $\aut^\varphi_\beta$ and, hence, indicates whether a finite word $u \in \Sigma^*$ reaching a state of $F_\beta^\star$ can be extended to  an infinite word $u\sigma' \in \lang(\aut_\beta^\varphi)$.
The set $F_\beta^\star$ can be computed, for instance, using a nested depth-first search~\cite{DBLP:conf/tacas/SchwoonE05} for each state $q \in Q_\beta$.
Since each such search requires time quadratic in $\card{\aut^\varphi_\beta}$, the set $F_\beta^\star$ can be computed in time $\bigo(\card{\aut^\varphi_\beta}^3)$.

Using $F_\beta^\star$, we define the NFA $\autB^\varphi_\beta = (Q_\beta, q_{\init,\beta}, \Delta_\beta, F_\beta^\star)$. It shares the transition structure of $\aut_\beta^\varphi$ and uses $F_\beta^\star$ as the set of accepting states. Let us illustrate this construction using our running example.

\begin{example} \label{ex:construction-step-2}
Given the NBAs $\aut^\varphi_\beta$ from Step~1 of our construction, we now compute the corresponding NFAs $\autB^\varphi_\beta$, which are depicted in Figure~\ref{fig:ex-step-2}.
Note that the transition structure has remained the same as compared to the preceding step (see Figure~\ref{fig:ex-step-1}).
By contrast, the accepting states have changed according to the definition of $F^\star_\beta$, causing all states to be accepting.
Note, however, that this does not mean that the resulting NFAs accept any finite word.
For instance, the NFA~$\autB^{\Boxdot s}_{1111}$ in Figure~\ref{fig:ex-step-2:1111} is a counterexample to this claim. \hfill\exampleend
\end{example}

\begin{figure}
	\centering
	\begin{subfigure}[b]{.25\textwidth}
		\centering
		\begin{tikzpicture}[auto]
			\node[state, accepting] (0) at (0, 0) {};
			\draw[<-, shorten <=1pt] (0.west) -- +(-.3,0);
			\path[->] (0) edge[loop above] node {$s$} ();
		\end{tikzpicture}
		\caption{The NFA $\autB^{\Boxdot s}_{1111}$}
		\label{fig:ex-step-2:1111}
	\end{subfigure}
	\hskip 2em
	\begin{subfigure}[b]{.3\textwidth}
		\centering
		\begin{tikzpicture}[auto]
			\node[state, accepting] (0) at (0, 0) {};
			\node[state, accepting] (1) at (2.25, 0) {};
			\draw[<-, shorten <=1pt] (0.west) -- +(-.3,0);
			\path[->] (0) edge[loop above] node {$\mathit{true}$} () edge node {$\lnot s$} (1) ;
			\path[->] (1) edge[loop above] node {$s$} ();
		\end{tikzpicture}
		\caption{The NFA $\autB^{\Boxdot s}_{0111}$}
		\label{fig:ex-step-2:0111}
	\end{subfigure}
	\hskip 2em
	\begin{subfigure}[b]{.3\textwidth}
		\centering
		\begin{tikzpicture}[auto]
			\node[state, accepting] (0) at (0, 0) {};
			\node[state, accepting] (1) at (2.25, 0) {};
			\node[state, accepting] (2) at (1.125, 1.75) {};
			\draw[<-, shorten <=1pt] (0.west) -- +(-.3,0);
			\path[->] (0) edge[loop above] node {$\lnot s$} () edge node[swap] {$s$} (1) ;
			\path[->] (1) edge node[swap, near end] {$\lnot s$} (2) edge[loop above] node {$s$} ();
			\path[->] (2) edge node {$\lnot s$} (0) edge[bend right=20] node[swap, near end] {$s$} (1);
		\end{tikzpicture}
		\caption{The NFA $\autB^{\Boxdot s}_{0011}$}
		\label{fig:ex-step-2:0011}
	\end{subfigure}

	\bigskip
	\begin{subfigure}[b]{.3\textwidth}
		\centering
		\begin{tikzpicture}[auto]
			\node[state, accepting] (0) at (0, 0) {};
			\node[state, accepting] (1) at (2.25, 0) {};
			\draw[<-, shorten <=1pt] (0.west) -- +(-.3,0);
			\path[->] (0) edge[loop above] node {$\mathit{true}$} () edge node {$s$} (1) ;
			\path[->] (1) edge[loop above] node {$\lnot s$} ();
		\end{tikzpicture}
		\caption{The NFA $\autB^{\Boxdot s}_{0001}$}
		\label{fig:ex-step-2:0001}
	\end{subfigure}
	\hskip 2em
	\begin{subfigure}[b]{.25\textwidth}
		\centering
		\begin{tikzpicture}[auto]
			\node[state, accepting] (0) at (0, 0) {};
			\draw[<-, shorten <=1pt] (0.west) -- +(-.3,0);
			\path[->] (0) edge[loop above] node {$\lnot s$} ();
		\end{tikzpicture}
		\caption{The NFA $\autB^{\Boxdot s}_{0000}$}
		\label{fig:ex-step-2:0000}
	\end{subfigure}
	
	\caption{The NFAs $\autB^{\Boxdot s}_\beta$ constructed in Step~2 of our monitor construction}
	\label{fig:ex-step-2}
\end{figure}

The next lemma now states that $\autB_\beta^\varphi$ indeed recognizes prefixes of words in~$\lang(\aut_\beta^\varphi)$.

\begin{lemma} \label{lem:B:correct}
Let~$\beta \in \bools_4$ and $u \in \Sigma^*$.
Then, $u \in \lang(\autB_\beta^\varphi)$ if and only if there exists an infinite word $\sigma \in \Sigma^\omega$ with $\rltleval(u\sigma, \varphi) = \beta$.
\end{lemma}

\begin{proof}
We show both directions separately.

\paragraph{From left to right:}
Assume $u \in \lang(\autB_\beta^\varphi)$.
Moreover, let $q \in F^\star_\beta$ be the accepting state reached by $\autB_\beta^\varphi$ on an accepting run on $u$ (which exists since $u \in \lang(\autB^\varphi_\beta)$).
By definition of $F^\star_\beta$, this means that $\lang(\aut^\varphi_\beta(q)) \neq \emptyset$, say $\sigma \in \lang(\aut^\varphi_\beta(q))$.
Since $\aut^\varphi_\beta$ and $\autB^\varphi_\beta$ share the same transition structures, the run of $\autB^\varphi_\beta$ on $u$ is also a run of $\aut^\varphi_\beta$ on $u$, which both lead to state $q$.
Therefore, $u\sigma \in \lang(\aut^\varphi_\beta)$.
By Theorem~\ref{thm:büchi}, this is equivalent to $\rltleval(u\sigma, \varphi) = \beta$.

\paragraph{From right to left:}
Let $u \in \Sigma^*$ and $\sigma \in \Sigma^\omega$ such that $\ltleval(u\sigma, \varphi) = \beta$.
By Theorem~\ref{thm:büchi}, we have $u\sigma \in \lang(\aut^\varphi_\beta)$.
Consider an accepting run of $\aut^\varphi_\beta$ on $u\sigma$, and let $q$ be the state that $\aut^\varphi_\beta$ reaches after reading the finite prefix $u$.
Since $u\sigma \in \lang(\aut^\varphi_\beta)$, this means that $\sigma \in \lang(\aut^\varphi_\beta(q))$.
Thus, $q \in F^\star_\beta$ because $\lang(\aut_\beta^\varphi(q)) \neq \emptyset$.
Moreover, since the run of $\aut^\varphi_\beta$ on $u$ is also a run of $\autB^\varphi_\beta$ on $u$, the NFA $\autB^\varphi_\beta$ can also reach state $q$ after reading $u$.
Therefore, $u \in \lang(\autB^\varphi_\beta)$ since $q \in F^\star_\beta$. 
\qed\end{proof}

Before we continue to the last step in our construction, let us briefly comment on the complexity of computing the NFAs $\autB_\beta^\varphi$.
Since $\autB_\beta^\varphi$ and $\aut_\beta^\varphi$ share the same underlying transition structure, we immediately obtain $|\autB_\beta^\varphi| \in 2^{\bigo(\card{\varphi})}$.
Moreover, the construction of $\autB_\beta^\varphi$ is dominated by the computation of the set $F_\beta^\star$ and, hence, can be done in time $2^{\bigo(\card{\varphi})}$.

\paragraph{\bfseries Step 3:}
In the final step, we construct a Moore machine implementing an \rltl monitor for $\varphi$.
Formally, a \emph{Moore machine} is a five-tuple $\mach = (Q, q_\init, \delta, \Gamma, \lambda)$ consisting of a deterministic transition structure $(Q, q_\init, \delta)$, an output alphabet~$\Gamma$, and an output function $\lambda \colon Q \to \Gamma$.
The size of~$\mach$ as well of runs of~$\mach$ are defined as for DFAs.
In contrast to a DFA, however, a Moore machine $\mach$ computes a function $\lambda_\mach \colon \Sigma^* \to \Gamma$ that is defined by $\lambda_\mach(u) = \lambda(q_n)$ where $q_n$ is the last state reached on the unique finite run~$q_0 \cdots q_n$ of $\mach$ on its input $u \in \Sigma^*$. 

The first step in the construction of the Moore machine is to determinize the NFAs~$\autB^\varphi_\beta$, obtaining equivalent DFAs~$\autC^\varphi_\beta = (Q_\beta',\allowbreak q_{\init,\beta}', \delta_\beta',\allowbreak F_\beta')$ of at most exponential size in~$\card{\autB^\varphi_\beta}$.
Subsequently, we combine these DFAs into a single Moore machine $\mach_\varphi$ implementing the desired \rltl monitor.
Intuitively, this Moore machine is the product of the DFAs $\autC^\varphi_\beta$ for each $\beta \in \bools_4$ and tracks the run of each individual DFA on the given input.
Formally, $\mach_\varphi$ is defined as follows.

\begin{definition} \label{def:moore-machine}
Let $\bools_4 = \{ \beta_1, \beta_2, \beta_3, \beta_4, \beta_5 \}$.
We define $\mach_\varphi = (Q, q_\init, \Gamma, \delta, \lambda)$ by
\begin{itemize}
	\item $Q = Q_{\beta_1}' \times Q_{\beta_2}' \times Q_{\beta_3}' \times Q_{\beta_4}' \times Q_{\beta_5}'$;
	\item $q_\init = (q_{\init,{\beta_1}}', q_{\init,{\beta_2}}', q_{\init,{\beta_3}}', q_{\init,{\beta_4}}', q_{\init,{\beta_5}}')$;
	\item $\delta \bigl( (q_1, q_2, q_3, q_4, q_5), a \bigr) = (q_1', q_2', q_3', q_4', q_5')$ where $q_j' = \delta_{\beta_j}'(q_j, a)$ for each $j \in \{1, \ldots, 5\}$;
	\item $\Gamma = \bools_4^?$; and
	\item $\lambda \bigl( (q_1, q_2, q_3, q_4, q_5) \bigr) = \xi \bigl( \bigl \{ \beta_j \in \bools_4 \mid q_j \in F_{\beta_j}', j \in \{1, \ldots, 5\} \bigr\} \bigr)$,
\end{itemize}
where the surjective function $\xi \colon 2^{\bools_4} \to \bools_4^{\ukno}$ translates sets $B \subseteq \bools_4$ of truth values to the robust monitoring semantics as follows: $\xi(B) = \beta^{\ukno} \in \bools_4^?$ with 
\[
	 \beta^{\ukno}[j] = \begin{cases} 0 & \text{if $\beta[j] = 0$ for each $\beta \in B$;} \\ 1 & \text{if $\beta[j] = 1$ for each $\beta \in B$; and} \\ \ukno & \text{otherwise.} \end{cases}
\]
\end{definition}

Let us illustrate this last step of our construction by means of our running example.

\begin{example} \label{ex:construction-step-3}
Given the NFAs $\autB^{\Boxdot s}_\beta$ from Step~2 of our construction, we first apply a standard determinization step.
This process results in equivalent DFAs $\autC^{\Boxdot s}_\beta$, which are shown in Figure~\ref{fig:ex-step-3a}.

The final, minimized monitor $\mach_{\Boxdot s}$, which results from the Cartesian product of all DFAs, is shown in Figure~\ref{fig:ex-step-3b}.
Note that this monitor has four different verdicts, shown as labels next to each state.
These are four of the verdicts used to prove results in Table~\ref{tag:truthvalues} (on Page~\pageref{tag:truthvalues}). \hfill\exampleend
\end{example}

\begin{figure}
	\centering
	\begin{subfigure}[b]{.3\textwidth}
		\centering
		\begin{tikzpicture}[auto]
			\node[state, accepting] (0) at (0, 0) {};
			\node[state] (1) at (2.25, 0) {};
			\draw[<-, shorten <=1pt] (0.west) -- +(-.3,0);
			\path[->] (0) edge[loop above] node {$s$} () edge node {$\lnot s$} (1);
			\path[->] (1) edge[loop above] node {$\mathit{true}$} ();
		\end{tikzpicture}
		\caption{The DFA $\autC^{\Boxdot s}_{1111}$}
		\label{fig:ex-step-3:1111}
	\end{subfigure}
	\hskip 2em
	\begin{subfigure}[b]{.25\textwidth}
		\centering
		\begin{tikzpicture}[auto]
			\node[state, accepting] (0) at (0, 0) {};
			\draw[<-, shorten <=1pt] (0.west) -- +(-.3,0);
			\path[->] (0) edge[loop above] node {$\mathit{true}$} ();
		\end{tikzpicture}
		\caption{The DFA $\autC^{\Boxdot s}_{0111}$}
		\label{fig:ex-step-3:0111}
	\end{subfigure}
	\hskip 2em
	\begin{subfigure}[b]{.25\textwidth}
		\centering
		\begin{tikzpicture}[auto]
			\node[state, accepting] (0) at (0, 0) {};
			\draw[<-, shorten <=1pt] (0.west) -- +(-.3,0);
			\path[->] (0) edge[loop above] node {$\mathit{true}$} ();
		\end{tikzpicture}
		\caption{The DFA $\autC^{\Boxdot s}_{0011}$}
		\label{fig:ex-step-3:0011}
	\end{subfigure}

	\bigskip
	\begin{subfigure}[b]{.25\textwidth}
		\centering
		\begin{tikzpicture}[auto]
			\node[state, accepting] (0) at (0, 0) {};
			\draw[<-, shorten <=1pt] (0.west) -- +(-.3,0);
			\path[->] (0) edge[loop above] node {$\mathit{true}$} ();
		\end{tikzpicture}
		\caption{The DFA $\autC^{\Boxdot s}_{0001}$}
		\label{fig:ex-step-3:0001}
	\end{subfigure}
	\hskip 2em
	\begin{subfigure}[b]{.3\textwidth}
		\centering
		\begin{tikzpicture}[auto]
			\node[state, accepting] (0) at (0, 0) {};
			\node[state] (1) at (2.25, 0) {};
			\draw[<-, shorten <=1pt] (0.west) -- +(-.3,0);
			\path[->] (0) edge[loop above] node {$\lnot s$} () edge node {$s$} (1);
			\path[->] (1) edge[loop above] node {$\mathit{true}$} ();
		\end{tikzpicture}
		\caption{The DFA $\autC^{\Boxdot s}_{0000}$}
		\label{fig:ex-step-3:0000}
	\end{subfigure}
	
	\caption{The DFAs $\autC^{\Boxdot s}_\beta$ constructed in Step~3 of our monitor construction}
	\label{fig:ex-step-3a}
\end{figure}

\begin{figure}
	\centering
	\begin{tikzpicture}[auto]
		\node[state] (0) at (0, 0) {};
		\node[state] (1) at (2, 1.25) {};
		\node[state] (2) at (2, -1.25) {};
		\node[state] (3) at (4, 0) {};
		\node[anchor=north east, text=black!60] at (0.south west) {$????$};
		\node[anchor=south east, text=black!60] at (1.north west) {$???1$};
		\node[anchor=north east, text=black!60] at (2.south west) {$0???$};
		\node[anchor=north west, text=black!60] at (3.south east) {$0??1$};
		\draw[<-, shorten <=1pt] (0.west) -- +(-.3,0);
		\path[->] (0) edge node {$s$} (1) edge node[swap] {$\lnot s$} (2);
		\path[->] (1) edge[loop right] node {$s$} () edge node {$\lnot s$} (3);
		\path[->] (2) edge[loop right] node {$\lnot s$} () edge node[swap] {$s$} (3);
		\path[->] (3) edge[loop above] node {$\mathit{true}$} ();
	\end{tikzpicture}

	\caption{The final monitor $\mach_{\Boxdot s}$}
	\label{fig:ex-step-3b}
\end{figure}

The main result of this paper now shows that the Moore machine $\mach_\varphi$ implements~$\rltlmoneval$, \ie, we have $\lambda_{\mach_\varphi}(u) = \rltlmoneval (u, \varphi)$ for every prefix~$u$.

\begin{theorem}\label{thm:moore}
For every \rltl formula~$\varphi$, one can construct an \rltl monitor of size~$2^{2^{\bigo(\size{\varphi})}}$.
\end{theorem}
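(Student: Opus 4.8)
The plan is to establish the two claims packed into this statement: that the Moore machine $\mach_\varphi$ of Definition~\ref{def:moore-machine} faithfully implements $\rltlmoneval$, i.e.\ $\lambda_{\mach_\varphi}(u) = \rltlmoneval(u,\varphi)$ for all $u \in \Sigma^*$, and that its size is $2^{2^{\bigo(\size{\varphi})}}$. Both follow by assembling the three construction steps. For correctness I would first invoke that determinization preserves languages, so each DFA $\autC^\varphi_\beta$ recognizes exactly $\lang(\autB^\varphi_\beta)$. Combined with Lemma~\ref{lem:B:correct}, this gives the pivotal fact that a finite word $u$ is accepted by $\autC^\varphi_\beta$ if and only if there is an infinite continuation $\sigma \in \Sigma^\omega$ with $\rltleval(u\sigma,\varphi) = \beta$.

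Next I would follow the unique run of $\mach_\varphi$ on $u$ to its last state $(q_1,\ldots,q_5)$, where $q_j$ is the state reached by $\autC^\varphi_{\beta_j}$ on $u$. By the previous observation, the set $B = \{\beta_j \in \bools_4 \mid q_j \in F_{\beta_j}'\}$ occurring in the output function $\lambda$ equals precisely $\{\rltleval(u\sigma,\varphi) \mid \sigma \in \Sigma^\omega\}$, the set of all truth values realizable by extensions of $u$. A point requiring care is that $B$ is always nonempty: every $u$ admits at least one continuation, whose $\rltleval$-value lies in $\bools_4$ and hence in $B$. With this in hand I would compare $\xi(B)$ to Definition~\ref{def:monitor:output} bitwise. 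For each $i$, $\xi(B)[i] = 0$ exactly when $\beta[i] = 0$ for every $\beta \in B$, that is, when $\rltleval(u\sigma,\varphi)[i] = 0$ for all $\sigma$, which is exactly the condition defining $\rltlmoneval(u,\varphi)[i] = 0$; the cases for $1$ and $\ukno$ match identically. Nonemptiness of $B$ is what guarantees the three cases are mutually exclusive, ruling out a bit being simultaneously forced to $0$ and $1$. Hence $\lambda_{\mach_\varphi}(u) = \xi(B) = \rltlmoneval(u,\varphi)$ for every $u$.

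For the size bound I would chain the complexities of the three steps. Theorem~\ref{thm:büchi} yields Büchi automata $\aut^\varphi_\beta$ of size $2^{\bigo(\size{\varphi})}$; Step~2 reuses the same transition structure, so each $\autB^\varphi_\beta$ again has size $2^{\bigo(\size{\varphi})}$; determinization inflates this to $\card{\autC^\varphi_\beta} \le 2^{2^{\bigo(\size{\varphi})}}$; and since $\mach_\varphi$ is the product of the five DFAs indexed by $\beta \in \bools_4$, its size is $\prod_\beta \card{\autC^\varphi_\beta} \le \bigl(2^{2^{\bigo(\size{\varphi})}}\bigr)^{5} = 2^{5 \cdot 2^{\bigo(\size{\varphi})}} = 2^{2^{\bigo(\size{\varphi})}}$, the constant factor being absorbed into the $\bigo$.

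Given Lemma~\ref{lem:B:correct} and Theorem~\ref{thm:büchi}, I do not expect a genuine obstacle: the argument is essentially bookkeeping over the product construction. The only delicate step is the bitwise correspondence between $\xi$ and Definition~\ref{def:monitor:output}, where the nonemptiness of $B$ must be used to ensure the translation is faithful rather than vacuously collapsing to a definite verdict.
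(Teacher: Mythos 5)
Your proposal is correct and follows essentially the same route as the paper: correctness via the state-correspondence of the product machine, the chain of equivalences through Lemma~\ref{lem:B:correct}, and a bitwise comparison of $\xi(B)$ with Definition~\ref{def:monitor:output}, followed by the same chaining of complexities for the size bound. Your explicit remark that $B$ is nonempty is a small point the paper leaves implicit, but it does not change the argument.
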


\begin{proof}
First observe that $\xi$ indeed produces a valid value of $\bools_4^{\ukno}$ (\ie, a truth value of the form~$0^*?^*1^*$).
This follows immediately from the definition of $\xi$ and the fact that the truth values of  \rltl are sequences in $0^*1^*$.

Next, we observe that $\mach_\varphi$ reaches state $(q_1, q_2, q_3, q_4, q_5)$ after reading a word~$u \in \Sigma^*$ if and only if for each $\beta_j \in \bools_4$ the DFA $\autC^\varphi_{\beta_j}$ reaches state $q_j$ after reading $u$.
A simple induction over the length of inputs fed to $\mach_\varphi$ proves this.

Now, let us fix a word~$u \in \Sigma^*$ and assume that $(q_1, q_2, q_3, q_4, q_5)$ is the state reached by $\mach_\varphi$ after reading $u$.
This means that each individual DFA $\autC^\varphi_{\beta_j} = (Q'_{\beta_j}, q_{\init, \beta_j}', \delta'_{\beta_j}, F'_{\beta_j})$ reaches state $q_j$ after reading $u$.
Let now
\[ B = \bigl \{ \beta_j \in \bools_4 \mid q_j \in F_{\beta_j}', j \in \{1, \ldots, 5\} \bigr\} \]
as in the definition of the output function $\lambda$ of $\mach_\varphi$.
By applying Lemma~\ref{lem:B:correct}, we then obtain
\begin{align*}
	\beta_j \in B \Leftrightarrow q_j \in F'_{\beta_j} \Leftrightarrow u \in L(\mathcal C^\varphi_{\beta_j})  \Leftrightarrow u \in L(\mathcal B^\varphi_{\beta_j}) 
	\Leftrightarrow \exists \sigma \in \Sigma^\omega \colon \rltleval(u\sigma, \varphi) = \beta_j.
\end{align*}

To conclude the proof, it is left to show that $\xi(B) = \rltlmoneval (u, \varphi)$.
We show this for each bit individually using a case distinction over the elements of $\ternaries = \{ 0, ?, 1 \}$.
So as to clutter this proof not too much, however, we only discuss the case of $\ukno$ here, while noting that the remaining two cases can be proven analogously.
Thus, let $i \in \{ 1, \ldots, 4 \}$.
Then,
\begin{align*}
	\xi(B)[i] = \ukno & \Leftrightarrow \exists\beta, \beta' \in B \colon \beta[i] = 0 \text{ and } \beta'[i] = 1 \\
	& \Leftrightarrow \exists \sigma_0, \sigma_1 \in \Sigma^\omega \colon
	 \rltleval(u\sigma_0, \varphi)[i] = 0 \text{ and } \rltleval(u\sigma_1, \varphi)[i] = 1 \\
	& \Leftrightarrow \rltlmoneval (u, \varphi)[i] = \ukno.
\end{align*}

Since $\lambda\bigl( (q_1, q_2, q_3, q_4, q_5) \bigr) = \xi(B)$, the Moore machine $\mach_\varphi$ indeed outputs $\rltlmoneval (u, \varphi)$ for every word $u \in \Sigma^*$.
Moreover, $\mach_\varphi$ has $2^{2^{\bigo(\size{\varphi})}}$ states because the DFAs~$\autC^\varphi_\beta = (Q_\beta', q_{\init,\beta}',\allowbreak \delta_\beta',\allowbreak F_\beta')$ are of at most exponential size in~$\card{\autB^\varphi_\beta}$, which in turn is at most exponential in $\size{\varphi}$.
In total, this proves Theorem~\ref{thm:moore}.
\qed\end{proof}

In a final post-processing step, we minimize $\mach_\varphi$ (e.g., using one of the standard algorithms for deterministic automata).
As a result, we obtain the unique minimal monitor for the given \rltl formula.

It is left to determine the complexity of our \rltl monitor construction.
Since each DFA $\autC^\varphi_\beta$ is in the worst case exponential in the size of the NFA $\autB_\beta^\varphi$, we immediately obtain that $\autC^\varphi_\beta$ is at most of size~$2^{2^{\bigo(\card{\varphi})}}$.
Thus, the Moore machine $\mach_\varphi$ is at most of size $2^{2^{\bigo(\card{\varphi})}}$ as well and can be effectively computed in doubly-exponential time in $\card{\varphi}$.
Note that this matches the complexity bound of Bauer et al.'s approach for \ltl runtime monitoring~\cite{BauerLeuckerSchallhart11}.
Moreover, this bound is tight since \rltl subsumes \ltl (see  Remark~\ref{remark:rltlmonitoringextendsltlmonitoring}): Every monitor for an \rltl formula (without implications) can be turned into a monitor for the corresponding \ltl formula by projecting every output to its first bit. 
Thus, the doubly-exponential bound, which is tight for \ltl~\cite{DBLP:journals/fmsd/KupfermanV01,BauerLeuckerSchallhart11}, is also tight for \rltl.
Hence, robust runtime monitoring asymptotically incurs no extra cost compared to classical \ltl runtime monitoring.
However, it provides more useful information as we demonstrate next in our experimental evaluation.

\section{Experimental Evaluation}
\label{sec:evaluation}

Besides incorporating a notion of robustness into classical \ltl monitoring, our \rltl monitoring approach also promises to provide richer information than its \ltl counterpart.
In this section, we evaluate empirically whether this promise is actually fulfilled.
More precisely, we answer the following two questions on a comprehensive suite of benchmarks:
\begin{enumerate}
	\item \label{itm:research-question:1} How does \rltl monitoring compare to classical \ltl monitoring in terms of monitorability?
	\item \label{itm:research-question:2} For formulas that are both \ltl-monitorable and \rltl-mon\-i\-torable, how do both approaches compare in terms of the size of the resulting monitors and the time required to construct them?
\end{enumerate}

To answer these research questions, we have implemented a prototype, which we named \rltlmon.
Our prototype is written in Java and builds on top of two libraries: Owl~\cite{DBLP:conf/atva/KretinskyMS18}, a library for \ltl and automata over infinite words, as well as AutomataLib (part of LearnLib~\cite{DBLP:conf/cav/IsbernerHS15}), a library for automata over finite words and Moore machines.
For technical reasons (partly due to limitations of the Owl library and partly to simplify the implementation), \rltlmon uses a monitor construction that is slightly different from the one described in Section~\ref{sec:monitoring}: Instead of translating an \rltl formula into nondeterministic Büchi automata, \rltlmon constructs deterministic parity automata. 
These parity automata are then directly converted into DFAs, thus skirting the need for a detour over NFAs and a subsequent determinization step.
Note, however, that this alternative construction produces the same \rltl monitors than the one described in Section~\ref{sec:monitoring}.
Moreover, it has the same asymptotic complexity.
The sources of our prototype are available online under the MIT license.\kern-.06em\footnote{\url{https://github.com/logic-and-learning/rltl-monitoring}}

\subsection*{Benchmarks and Experimental Setup}

The starting point of our evaluation was the original benchmark suite of Bauer et al.~\cite{BauerLeuckerSchallhart11}, which is based on a survey by Dwyer on frequently used software specification patterns~\cite{DBLP:conf/icse/DwyerAC99}.
This benchmark suite consists of $97$ \ltl formulas and covers a wide range of patterns, including safety, scoping, precedence, and response patterns.
For our \rltl monitor construction, we interpreted each \ltl formula in the benchmark suite as an \rltl formula (by treating every operator as a robust operator).

We compared \rltlmon to Bauer et al.'s implementation of their \ltl monitoring approach, which the authors named \ltlmon.
This tool uses LTL2BA~\cite{DBLP:conf/cav/GastinO01} to translate \ltl formulas into Büchi automata and AT\&T's fsmlib as a means to manipulate finite-state machines.
Since LTL2BA's and Owl's input format for \ltl formulas do not match exactly, we have translated all benchmarks into a suitable format using a python script.

We conducted all experiments on an Intel Core i5-6600 @ $3.3$\,GHz in a virtual machine with $4$\,GB of RAM running Ubuntu 18.04~LTS.
As no monitor construction took longer than $600\,s$, we did not impose any time limit.

\subsection*{Results}

Our evaluation shows that \ltlmon and \rltlmon are both able to generate monitors for all $97$ formulas in Bauer et al.'s benchmark suite.\footnote{Note that the tools disagreed on one monitor where \ltlmon constructed a monitor with 1 state whereas \rltlmon constructed an \ltl monitor with 8 states.  The respective formula was removed from the reported results.}  %
Aggregated statistics of this evaluation are visualized in Figure~\ref{fig:results}.%
\footnote{Detailed results can be found in Tables~\ref{tab:res:first} and \ref{tab:res:second} in the appendix.}

\pgfplotsset{
	whisker plot/.style={
		typeset ticklabels with strut,
		x tick label style = {text depth=0pt},
		y tick style = {draw=none},
		tick align = outside,
		axis x line* = bottom,
	        axis y line* = left,
	        y axis line style = {draw=none},
	}
}

\begin{figure*}[!t]
	\begin{subfigure}[t]{\textwidth}
      \centering
		\begin{tikzpicture}
		\begin{axis}
			[
				width = .8\textwidth,
				height = 45mm,
				ybar,
				ymin = 0,
				ymax = 50,
				axis on top,
				ymajorgrids,
				typeset ticklabels with strut,
				x tick label style = {text depth=0pt},
			        major grid style = {draw=white},
			        tick style = {draw=none},
			        xtick distance = 1,
				axis x line* = bottom,
			        axis y line* = left,
			        y axis line style = {draw=none},
			        nodes near coords,
				xlabel = {Number of states},
				ylabel= {Number of monitors \\ ($97$ in total)},
				ylabel style={align = center, yshift=1.5ex},
				legend style = {draw=none, column sep=.75em, row sep=1ex},
				legend image post style = {draw opacity=0, scale=1.25},
				legend cell align = {left},
			]

			\addplot [draw=none, fill=gray!60] coordinates {(1, 43) (2, 20) (3, 21) (4, 11) (5, 1) (6, 1) (7, 0) (8, 0)};
			\addplot [draw=none, fill=gray!25] coordinates {(1, 0) (2, 39) (3, 10) (4, 29) (5, 9) (6, 7) (7, 1) (8, 2)};

			\legend {\ltlmon, \rltlmon};	
		\end{axis}
		\end{tikzpicture}
		\caption{Histogram of the number of monitors with respect to their size} \label{subfig:results:histogram}		
	\end{subfigure}
	\vskip 1.5\baselineskip
	\begin{subfigure}[t]{\textwidth}
		\centering
		\begin{tikzpicture}
			\begin{axis}
				[
					whisker plot,
					width = .35\textwidth,
					height = 11mm,
					scale only axis = true,
					xmin = 0,
					xmax = 9,
					ytick = {1, 2},
					yticklabels = {\rltlmon, \ltlmon},
					xlabel = {Number of states},
				]
		
				\addplot[draw=gray, fill=gray!25, boxplot prepared={lower whisker=2, lower quartile=4, median=4, upper quartile=5, upper whisker=8}] coordinates {};
				\addplot[draw=gray, fill=gray!60, boxplot prepared={lower whisker=2, lower quartile=2, median=3, upper quartile=3, upper whisker=6}] coordinates {};
			\end{axis}
		\end{tikzpicture}
		\hskip 1em
		\begin{tikzpicture}
			\begin{axis}
				[
					whisker plot,
					width = .35\textwidth,
					height = 11mm,
					scale only axis = true,
					xmode = log,
					xmin = 1e-3,
					xmax = 100,
					ymajorticks = false,
					xlabel = {Time in $s$},
				]
		
				\addplot[draw=gray, fill=gray!25, boxplot prepared={lower whisker=0.82, lower quartile=1.125, median=1.47, upper quartile=2.115, upper whisker=35.57}] coordinates {};
				\addplot[draw=gray, fill=gray!60, boxplot prepared={lower whisker=0.01, lower quartile=0.01, median=0.015, upper quartile=0.02, upper whisker=2.11}] coordinates {};
			\end{axis}
		\end{tikzpicture}
		\caption{Analysis of the monitor construction for the $54$ formulas that are both \ltl-monitorable and \rltl-monitorable}
		\label{subfig:results:whisker-plot}	
	\end{subfigure}	
  \caption{Comparison of \rltlmon and \ltlmon on Bauer et al.'s benchmarks~\cite{BauerLeuckerSchallhart11}\label{fig:results}}

\end{figure*}
The histogram in Figure~\ref{subfig:results:histogram} shows the aggregate number of \ltl and \rltl monitors with respect to their number of states.
As Bauer et al.\ already noted in their original work, the resulting \ltl monitors are quite small (none had more than six states), which they attribute to Dwyer et al.'s specific selection of formulas~\cite{DBLP:conf/icse/DwyerAC99}.
A similar observation is also true for the \rltl monitors: None had more than eight states.

To determine which formulas are monitorable and which are not, we used a different, though equivalent definition, which is easy to check on the monitor itself:
an \ltl formula (\rltl formula) is monitorable if and only if the unique minimized \ltl monitor (\rltl monitor) does not contain a sink-state with universal self-loop that outputs ``$\ukno$'' (that outputs ``$\ukno\ukno\ukno\ukno$''). In other words, even if a finite word does not allow us to infer anything about the satisfaction of the \ltl (\rltl) formula by infinite words extending it, it can always be extended into another finite word that does.
Bauer \etal report that $44.3\,\%$ of all \ltl monitors ($43$ out of $97$) have this property (in fact, exactly the $43$ \ltl monitors with a single state), which means that $44.3\,\%$ of all formulas in their benchmark suite are not \ltl-monitorable.
By contrast, all these formulas are \rltl-monitorable.
Moreover, in $78.4\,\%$ of the cases ($76$ out of $97$), the \rltl monitor has more distinct outputs than the \ltl monitor, indicating that the \rltl monitor provides more fine-grained information of the property being monitored; in the remaining $21.6\,\%$, both monitors have the same number of distinct outputs.
These results answer our first research question strongly in favor of \rltl monitoring: \emph{\rltl monitoring did in fact provide more information than its classical \ltl counterpart. In particular, only $55.7\,\%$ of the benchmarks are \ltl-monitorable, whereas $100\,\%$ are \rltl-monitorable}.

Let us now turn to our second research question and compare both monitoring approaches on the $54$ formulas that are both \ltl-monitorable and \rltl-monitorable.
For these formulas, Figure~\ref{subfig:results:whisker-plot} further provides statistical analysis of the generated monitors in terms of their size (left diagram) as well as the time required to generate them (right diagram).
Each box in the diagrams shows the lower and upper quartile (left and right border of the box, respectively), the median (line within the box), and minimum and maximum (left and right whisker, respectively).

Let us first consider the size of the monitors (left diagram of Figure~\ref{subfig:results:whisker-plot}).
The majority of \ltl monitors ($52$) has between two and four states, while the majority of \rltl monitors ($45$) has between two and five states.
For $21$ benchmarks, the \ltl and \rltl monitors are of equal size, while the \rltl monitor is larger for the remaining $33$ benchmarks.
On average, \rltl monitors are about $1.5$ times larger than the corresponding \ltl monitors.

Let us now discuss the time taken to construct the monitors.
As the diagram on the right-hand-side of Figure~\ref{subfig:results:whisker-plot} shows, \ltlmon was considerably faster than \rltlmon on a majority of benchmarks (around $0.1\,s$ and $2.6\,s$ per benchmark, respectively).
For all $54$ benchmarks, the \rltl monitor construction took longer than the construction of the corresponding \ltl monitor (although there are two non-\ltl-monitorable formulas for which the construction of the \rltl monitor was faster).
However, we attribute this large runtime gap partly to the overhead caused by repeatedly starting the Java virtual machine, which is not required in the case of \ltlmon.
Note that this is not a concern in practice as a monitor is only constructed once before it is deployed.

Finally, our analysis answers our second question: \emph{\rltl monitors are only slightly larger than the corresponding \ltl monitors and although they require considerably more time to construct, the overall construction time was negligible for almost all benchmarks}.

\section{Conclusion}

We adapted the three-valued \ltl monitoring semantics of Bauer \etal to \rltl, proved that the construction of monitors is asymptotically no more expensive than the one for \ltl, and validated our approach on the benchmark of Bauer \etal: All formulas are \rltl-monitorable and the \rltl monitor is strictly more informative than its \ltl counterpart for 77\% of their formulas.

Recall Theorem~\ref{thm:realizabletruthvalues}, which states that the truth values $0011$ and $0001$ are not realizable. 
This points to a drawback regarding the two middle bits: 
When considering the formula $\Boxdot a$, the second bit represents $\Diamond\Box a$ and the third bit $\Box\Diamond a$. 
A prefix cannot possibly provide enough information to distinguish these two formulas.
On the other hand, the truth value $??11$ is realizable, which shows that the middle bits can be relevant. In further work, we will investigate the role of the middle bits in \rltl monitoring.

Moreover, the informedness of a monitor can be increased further when attributing a special role to the last position(s) of a prefix. 
Even though a prefix of the form $\emptyset^+\set{a}^+$ does not fully satisfy $\Diamond\Box a$, neither does it fully violate it. 
If the system just now reached a state in which $\set{a}$ always holds, an infinite continuation of the execution would satisfy the specification.
So rather than reporting an undetermined result, the monitor could indicate that an infinite repetition of the last position of the prefix would satisfy the specification. 
Similarly, for a prefix $\set{a}^+\emptyset$, the specification $\Box\Diamond a$ is undetermined. 
While an infinite repetition of the last position ($\set{a}^+\emptyset^\omega$) does not satisfy the specification, an infinite repetition of the last two positions ($\set{a}^+(\emptyset\set{a})^\omega$) would. 
We plan to investigate an extension of \rltl which takes this observation into account.

Bauer~\etal~\cite{DBLP:conf/rv/0002LS07} proposed an orthogonal approach with the logic RV-LTL. 
Here, the specification can contain the strong (weak) next-operator whose operand is consider violated (satisfied) at the last position of the trace.
A formula that is undetermined under the strong semantics and satisfied (violated) under the weak semantics is considered \emph{potentially true (potentially false)}.
Incorporating one of these approaches into \rltl monitoring could refine its output and thus increase its level of informedness.

Moreover, desired properties for cyber-physical systems often include real-time components such as ``touch the ground at most 15 seconds after receiving a landing command''. Monitors for logics taking real-time into account~\cite{DBLP:conf/sosp/BernsteinH81}, such as STL~\cite{DBLP:conf/formats/MalerN04,DBLP:conf/birthday/MalerNP08}, induce high computational overhead at runtime when compared to \ltl and \rltl monitors. Thus, a real-time extension for \rltl retaining its low runtime cost would greatly increase its viability as specification language. 

\subsection*{Acknowledgements}
The authors would like to thank Li Bingchen for discovering the formula mentioned in Footnote \ref{bingchenfootnote}.
The work of Daniel Neider was supported by the {Deutsche Forschungsgemeinschaft (DFG, German Research Foundation)} -- grant no.\ 434592664.
The work of Maximilian Schwen\-ger was supported by the {European Research Council (ERC) Grant OSARES} (No.~{683300}) and the {German Research Foundation} (DFG) as part of the Collaborative Research Center ``Center for Perspicuous Computing'' (TRR 248,~{389792660}).
The work of Paulo Tabuada was partially supported by the {NSF project} {1645824}. 
The work of Alexander Weinert was supported by the {Saarbrücken Graduate School of Computer Science}.
The work of Martin Zimmermann was supported by the {Engineering and Physical Sciences Research Council (EPSRC)} project {EP/S032207/1}.

\bibliographystyle{plainurl}
\bibliography{bib}

\begin{thebibliography}{10}
\providecommand{\url}[1]{{#1}}
\providecommand{\urlprefix}{URL }
\expandafter\ifx\csname urlstyle\endcsname\relax
  \providecommand{\doi}[1]{DOI~\discretionary{}{}{}#1}\else
  \providecommand{\doi}{DOI~\discretionary{}{}{}\begingroup
  \urlstyle{rm}\Url}\fi

\bibitem{DBLP:conf/cmsb/AbbasRBSG17}
Abbas, H., Rodionova, A., Bartocci, E., Smolka, S.A., Grosu, R.: Quantitative
  regular expressions for arrhythmia detection algorithms.
\newblock In: J.~Feret, H.~Koeppl (eds.) {CMSB} 2017, \emph{LNCS}, vol. 10545,
  pp. 23--39. Springer (2017).
\newblock \doi{10.1007/978-3-319-67471-1_2}

\bibitem{Adolf:2017}
Adolf, F., Faymonville, P., Finkbeiner, B., Schirmer, S., Torens, C.: Stream
  runtime monitoring on {UAS}.
\newblock In: S.K. Lahiri, G.~Reger (eds.) RV 2017, \emph{LNCS}, vol. 10548,
  pp. 33--49. Springer (2017).
\newblock \doi{10.1007/978-3-319-67531-2_3}

\bibitem{DBLP:conf/cav/AkazakiH15}
Akazaki, T., Hasuo, I.: Time robustness in {MTL} and expressivity in hybrid
  system falsification.
\newblock In: D.~Kroening, C.S. Pasareanu (eds.) {CAV} 2015, \emph{LNCS}, vol.
  9207, pp. 356--374. Springer (2015).
\newblock \doi{10.1007/978-3-319-21668-3_21}

\bibitem{DBLP:conf/esop/AlurFR16}
Alur, R., Fisman, D., Raghothaman, M.: Regular programming for quantitative
  properties of data streams.
\newblock In: P.~Thiemann (ed.) {ESOP} 2016, \emph{LNCS}, vol. 9632, pp.
  15--40. Springer (2016).
\newblock \doi{10.1007/978-3-662-49498-1_2}

\bibitem{DBLP:conf/hybrid/AnevlavisNPT19}
Anevlavis, T., Neider, D., Phillipe, M., Tabuada, P.: Evrostos: the {rLTL}
  verifier.
\newblock In: N.~Ozay, P.~Prabhakar (eds.) {HSCC} 2019, pp. 218--223. {ACM}
  (2019).
\newblock \doi{10.1145/3302504.3311812}

\bibitem{DBLP:conf/cdc/AnevlavisPNT18}
Anevlavis, T., Philippe, M., Neider, D., Tabuada, P.: Verifying {rLTL}
  formulas: now faster than ever before!
\newblock In: {CDC} 2018, pp. 1556--1561. {IEEE} (2018).
\newblock \doi{10.1109/CDC.2018.8619014}

\bibitem{DBLP:journals/tocl/AnevlavisPNT22}
Anevlavis, T., Philippe, M., Neider, D., Tabuada, P.: Being correct is not
  enough: Efficient verification using robust linear temporal logic.
\newblock {ACM} Trans. Comput. Log. \textbf{23}(2), 8:1--8:39 (2022).
\newblock \doi{10.1145/3491216}.
\newblock \urlprefix\url{https://doi.org/10.1145/3491216}

\bibitem{DBLP:books/daglib/0020348}
Baier, C., Katoen, J.: Principles of model checking.
\newblock {MIT} Press (2008)

\bibitem{DBLP:conf/fm/BarringerFHRR12}
Barringer, H., Falcone, Y., Havelund, K., Reger, G., Rydeheard, D.E.:
  Quantified event automata: Towards expressive and efficient runtime monitors.
\newblock In: D.~Giannakopoulou, D.~M{\'{e}}ry (eds.) {FM} 2012, \emph{LNCS},
  vol. 7436, pp. 68--84. Springer (2012).
\newblock \doi{10.1007/978-3-642-32759-9_9}

\bibitem{DBLP:conf/cav/BartocciBNR18}
Bartocci, E., Bloem, R., Nickovic, D., R{\"{o}}ck, F.: A counting semantics for
  monitoring {LTL} specifications over finite traces.
\newblock In: H.~Chockler, G.~Weissenbacher (eds.) {CAV} 2018, \emph{{LNCS}},
  vol. 10981, pp. 547--564. Springer (2018).
\newblock \doi{10.1007/978-3-319-96145-3\_29}

\bibitem{DBLP:journals/fmsd/BasinKMZ15}
Basin, D.A., Klaedtke, F., Marinovic, S., Zalinescu, E.: Monitoring of temporal
  first-order properties with aggregations.
\newblock Form. Meth. in Sys. Des. \textbf{46}(3), 262--285 (2015).
\newblock \doi{10.1007/s10703-015-0222-7}

\bibitem{DBLP:conf/rv/0002LS07}
Bauer, A., Leucker, M., Schallhart, C.: The good, the bad, and the ugly, but
  how ugly is ugly?
\newblock In: O.~Sokolsky, S.~Tasiran (eds.) {RV} 2007, \emph{LNCS}, vol. 4839,
  pp. 126--138. Springer (2007).
\newblock \doi{10.1007/978-3-540-77395-5_11}

\bibitem{Bauer:2010:CLS:1830408.1830410}
Bauer, A., Leucker, M., Schallhart, C.: Comparing {LTL} semantics for runtime
  verification.
\newblock J. Log. Comput. \textbf{20}(3), 651--674 (2010).
\newblock \doi{10.1093/logcom/exn075}

\bibitem{BauerLeuckerSchallhart11}
Bauer, A., Leucker, M., Schallhart, C.: Runtime verification for {LTL} and
  {TLTL}.
\newblock {ACM} Trans. Softw. Eng. Methodol. \textbf{20}(4), 14:1--14:64
  (2011).
\newblock \doi{10.1145/2000799.2000800}

\bibitem{DBLP:conf/sosp/BernsteinH81}
Bernstein, A.J., Jr., P.K.H.: Proving real-time properties of programs with
  temporal logic.
\newblock In: J.~Howard, D.P. Reed (eds.) {SOSP} 1981, pp. 1--11. {ACM} (1981).
\newblock \doi{10.1145/800216.806585}

\bibitem{DBLP:conf/popl/CaspiPHP87}
Caspi, P., Pilaud, D., Halbwachs, N., Plaice, J.: Lustre: {A} declarative
  language for programming synchronous systems.
\newblock In: POPL 1987, pp. 178--188. {ACM} Press (1987).
\newblock \doi{10.1145/41625.41641}

\bibitem{robustactivation}
Cheng, C.: Provably-robust runtime monitoring of neuron activation patterns.
\newblock In: {DATE} 2021, pp. 1310--1313. {IEEE} (2021).
\newblock \doi{10.23919/DATE51398.2021.9473957}.
\newblock \urlprefix\url{https://doi.org/10.23919/DATE51398.2021.9473957}

\bibitem{tltk}
Cralley, J., Spantidi, O., Hoxha, B., Fainekos, G.: Tltk: {A} toolbox for
  parallel robustness computation of temporal logic specifications.
\newblock In: J.~Deshmukh, D.~Nickovic (eds.) {RV} 2020, \emph{LNCS}, vol.
  12399, pp. 404--416. Springer (2020).
\newblock \doi{10.1007/978-3-030-60508-7\_22}.
\newblock \urlprefix\url{https://doi.org/10.1007/978-3-030-60508-7\_22}

\bibitem{DBLP:conf/time/DAngeloSSRFSMM05}
D'Angelo, B., Sankaranarayanan, S., S{\'{a}}nchez, C., Robinson, W.,
  Finkbeiner, B., Sipma, H.B., Mehrotra, S., Manna, Z.: {LOLA:} runtime
  monitoring of synchronous systems.
\newblock In: {TIME} 2005, pp. 166--174. {IEEE} Computer Society (2005).
\newblock \doi{10.1109/TIME.2005.26}

\bibitem{DBLP:conf/rv/DeckerLT13}
Decker, N., Leucker, M., Thoma, D.: Impartiality and anticipation for
  monitoring of visibly context-free properties.
\newblock In: A.~Legay, S.~Bensalem (eds.) {RV} 2013, \emph{LNCS}, vol. 8174,
  pp. 183--200. Springer (2013).
\newblock \doi{10.1007/978-3-642-40787-1_11}

\bibitem{DBLP:conf/cav/DonzeFM13}
Donz{\'{e}}, A., Ferr{\`{e}}re, T., Maler, O.: Efficient robust monitoring for
  {STL}.
\newblock In: N.~Sharygina, H.~Veith (eds.) {CAV} 2013, \emph{LNCS}, vol. 8044,
  pp. 264--279. Springer (2013).
\newblock \doi{10.1007/978-3-642-39799-8_19}

\bibitem{Drunsinski2000}
Drusinsky, D.: The temporal rover and the {ATG} rover.
\newblock In: K.~Havelund, J.~Penix, W.~Visser (eds.) SPIN 2000, \emph{LNCS},
  vol. 1885, pp. 323--330. Springer (2000).
\newblock \doi{10.1007/10722468_19}

\bibitem{DBLP:conf/icse/DwyerAC99}
Dwyer, M.B., Avrunin, G.S., Corbett, J.C.: Patterns in property specifications
  for finite-state verification.
\newblock In: B.W. Boehm, D.~Garlan, J.~Kramer (eds.) ICSE 1999, pp. 411--420.
  {ACM} (1999).
\newblock \doi{10.1145/302405.302672}

\bibitem{Eisner:2003}
Eisner, C., Fisman, D., Havlicek, J., Lustig, Y., McIsaac, A., Campenhout,
  D.V.: Reasoning with temporal logic on truncated paths.
\newblock In: W.A. Hunt, F.~Somenzi (eds.) {CAV} 2003, \emph{LNCS}, vol. 2725,
  pp. 27--39. Springer (2003).
\newblock \doi{10.1007/978-3-540-45069-6_3}

\bibitem{FAINEKOS20094262}
Fainekos, G.E., Pappas, G.J.: Robustness of temporal logic specifications for
  continuous-time signals.
\newblock Theoretical Computer Science \textbf{410}(42), 4262 -- 4291 (2009).
\newblock \doi{https://doi.org/10.1016/j.tcs.2009.06.021}.
\newblock
  \urlprefix\url{http://www.sciencedirect.com/science/article/pii/S0304397509004149}

\bibitem{DBLP:conf/rv/2016}
Falcone, Y., S{\'{a}}nchez, C. (eds.): {RV} 2016, \emph{LNCS}, vol. 10012.
  Springer (2016).
\newblock \doi{10.1007/978-3-319-46982-9}

\bibitem{DBLP:conf/rv/FaymonvilleFST16}
Faymonville, P., Finkbeiner, B., Schirmer, S., Torfah, H.: A stream-based
  specification language for network monitoring.
\newblock In: Falcone and S{\'{a}}nchez  \cite{DBLP:conf/rv/2016}, pp.
  152--168.
\newblock \doi{10.1007/978-3-319-46982-9_10}

\bibitem{Faymonville19}
Faymonville, P., Finkbeiner, B., Schledjewski, M., Schwenger, M., Tentrup, L.,
  Stenger, M., Torfah, H.: Streamlab: Stream-based monitoring of cyber-physical
  systems.
\newblock In: CAV 2019 (2019).
\newblock To appear.

\bibitem{robustrtlola}
Finkbeiner, B., Keller, A., Schmidt, J., Schwenger, M.: Robust monitoring for
  medical cyber-physical systems.
\newblock In: MCPS 2021, p. 17–22. Association for Computing Machinery, New
  York, NY, USA (2021).
\newblock \doi{10.1145/3446913.3460318}

\bibitem{Finkbeiner2005}
Finkbeiner, B., Sankaranarayanan, S., Sipma, H.: Collecting statistics over
  runtime executions.
\newblock Form. Meth. in Sys. Des. \textbf{27}(3), 253--274 (2005).
\newblock \doi{10.1007/s10703-005-3399-3}

\bibitem{DBLP:conf/atva/FinkbeinerT17}
Finkbeiner, B., Torfah, H.: The density of linear-time properties.
\newblock In: D.~D'Souza, K.N. Kumar (eds.) {ATVA} 2017, \emph{LNCS}, vol.
  10482, pp. 139--155. Springer (2017).
\newblock \doi{10.1007/978-3-319-68167-2_10}

\bibitem{DBLP:conf/cav/GastinO01}
Gastin, P., Oddoux, D.: Fast {LTL} to {B}{\"{u}}chi automata translation.
\newblock In: G.~Berry, H.~Comon, A.~Finkel (eds.) {CAV} 2001, \emph{LNCS},
  vol. 2102, pp. 53--65. Springer (2001).
\newblock \doi{10.1007/3-540-44585-4_6}

\bibitem{DBLP:conf/rv/Halle16}
Hall{\'{e}}, S.: When {RV} meets {CEP}.
\newblock In: Falcone and S{\'{a}}nchez  \cite{DBLP:conf/rv/2016}, pp. 68--91.
\newblock \doi{10.1007/978-3-319-46982-9_6}

\bibitem{DBLP:conf/tacas/HavelundR02}
Havelund, K., Rosu, G.: Synthesizing monitors for safety properties.
\newblock In: J.~Katoen, P.~Stevens (eds.) {TACAS} 2002, \emph{LNCS}, vol.
  2280, pp. 342--356. Springer (2002).
\newblock \doi{10.1007/3-540-46002-0_24}

\bibitem{Havelund2004}
Havelund, K., Rosu, G.: An overview of the runtime verification tool {Java
  PathExplorer}.
\newblock Form. Meth. in Sys. Des. \textbf{24}(2), 189--215 (2004).
\newblock \doi{10.1023/B:FORM.0000017721.39909.4b}

\bibitem{DBLP:conf/cav/IsbernerHS15}
Isberner, M., Howar, F., Steffen, B.: The open-source learnlib - {A} framework
  for active automata learning.
\newblock In: D.~Kroening, C.S. Pasareanu (eds.) {CAV} 2015 ({P}art {I}),
  \emph{LNCS}, vol. 9206, pp. 487--495. Springer (2015).
\newblock \doi{10.1007/978-3-319-21690-4_32}

\bibitem{DBLP:journals/fmsd/JaksicBGNN18}
Jaksic, S., Bartocci, E., Grosu, R., Nguyen, T., Nickovic, D.: Quantitative
  monitoring of {STL} with edit distance.
\newblock Form. Meth. in Sys. Des. \textbf{53}(1), 83--112 (2018).
\newblock \doi{10.1007/s10703-018-0319-x}

\bibitem{DBLP:journals/corr/abs-1807-03296}
Kret{\'{\i}}nsk{\'{y}}, J., Meggendorfer, T., Sickert, S.: {LTL} store:
  Repository of {LTL} formulae from literature and case studies.
\newblock arXiv \textbf{1807.03296} (2018).
\newblock \urlprefix\url{http://arxiv.org/abs/1807.03296}

\bibitem{DBLP:conf/atva/KretinskyMS18}
Kret{\'{\i}}nsk{\'{y}}, J., Meggendorfer, T., Sickert, S.: Owl: {A} library for
  $\omega$-words, automata, and {LTL}.
\newblock In: Lahiri and Wang  \cite{DBLP:conf/atva/2018}, pp. 543--550.
\newblock \doi{10.1007/978-3-030-01090-4_34}

\bibitem{DBLP:journals/fmsd/KupfermanV01}
Kupferman, O., Vardi, M.Y.: Model checking of safety properties.
\newblock Form. Meth. in Sys. Des. \textbf{19}(3), 291--314 (2001).
\newblock \doi{10.1023/A:1011254632723}

\bibitem{DBLP:conf/atva/2018}
Lahiri, S.K., Wang, C. (eds.): {ATVA} 2018, \emph{LNCS}, vol. 11138. Springer
  (2018).
\newblock \doi{10.1007/978-3-030-01090-4}

\bibitem{Lee99runtimeassurance}
Lee, I., Kannan, S., Kim, M., Sokolsky, O., Viswanathan, M.: Runtime assurance
  based on formal specifications.
\newblock In: H.R. Arabnia (ed.) {PDPTA} 1999, pp. 279--287. {CSREA} Press
  (1999)

\bibitem{DBLP:conf/formats/MalerN04}
Maler, O., Nickovic, D.: Monitoring temporal properties of continuous signals.
\newblock In: Y.~Lakhnech, S.~Yovine (eds.) {FORMATS} and {FTRTFT} 2004,
  \emph{LNCS}, vol. 3253, pp. 152--166. Springer (2004).
\newblock \doi{10.1007/978-3-540-30206-3_12}

\bibitem{DBLP:conf/birthday/MalerNP08}
Maler, O., Nickovic, D., Pnueli, A.: Checking temporal properties of discrete,
  timed and continuous behaviors.
\newblock In: A.~Avron, N.~Dershowitz, A.~Rabinovich (eds.) Pillars of Computer
  Science, Essays Dedicated to Boris (Boaz) Trakhtenbrot on the Occasion of His
  85th Birthday, \emph{LNCS}, vol. 4800, pp. 475--505. Springer (2008).
\newblock \doi{10.1007/978-3-540-78127-1_26}

\bibitem{Maler:1995:TAA:646702.701843}
Maler, O., Pnueli, A.: Timing analysis of asynchronous circuits using timed
  automata.
\newblock In: P.~Camurati, H.~Eveking (eds.) {CHARME} 1995, \emph{LNCS}, vol.
  987, pp. 189--205. Springer (1995).
\newblock \doi{10.1007/3-540-60385-9_12}

\bibitem{Manna:1995:TVR:211468}
Manna, Z., Pnueli, A.: Temporal verification of reactive systems - safety.
\newblock Springer (1995)

\bibitem{DBLP:conf/hybrid/MascleNSTW020}
Mascle, C., Neider, D., Schwenger, M., Tabuada, P., Weinert, A., Zimmermann,
  M.: From {LTL} to rltl monitoring: improved monitorability through robust
  semantics.
\newblock In: A.D. Ames, S.A. Seshia, J.~Deshmukh (eds.) {HSCC} 2020, pp.
  7:1--7:12. {ACM} (2020).
\newblock \doi{10.1145/3365365.3382197}.
\newblock \urlprefix\url{https://doi.org/10.1145/3365365.3382197}

\bibitem{DBLP:conf/rv/MedhatBFJ16}
Medhat, R., Bonakdarpour, B., Fischmeister, S., Joshi, Y.: Accelerated runtime
  verification of {LTL} specifications with counting semantics.
\newblock In: Falcone and S{\'{a}}nchez  \cite{DBLP:conf/rv/2016}, pp.
  251--267.
\newblock \doi{10.1007/978-3-319-46982-9_16}

\bibitem{MRS17}
Moosbrugger, P., Rozier, K.Y., Schumann, J.: {R2U2:} monitoring and diagnosis
  of security threats for unmanned aerial systems.
\newblock Form. Meth. in Sys. Des. \textbf{51}(1), 31--61 (2017).
\newblock \doi{10.1007/s10703-017-0275-x}

\bibitem{DBLP:journals/corr/abs-1909-08538}
Neider, D., Weinert, A., Zimmermann, M.: Robust, expressive, and quantitative
  linear temporal logics: Pick any two for free.
\newblock In: J.~Leroux, J.~Raskin (eds.) Proceedings Tenth International
  Symposium on Games, Automata, Logics, and Formal Verification, GandALF 2019,
  Bordeaux, France, 2-3rd September 2019, \emph{{EPTCS}}, vol. 305, pp. 1--16
  (2019).
\newblock \doi{10.4204/EPTCS.305.1}.
\newblock \urlprefix\url{https://doi.org/10.4204/EPTCS.305.1}

\bibitem{NEIDER2021104810}
Neider, D., Weinert, A., Zimmermann, M.: Robust, expressive, and quantitative
  linear temporal logics: Pick any two for free.
\newblock Information and Computation p. 104810 (2021).
\newblock \doi{https://doi.org/10.1016/j.ic.2021.104810}.
\newblock
  \urlprefix\url{https://www.sciencedirect.com/science/article/pii/S0890540121001267}

\bibitem{DBLP:conf/rv/PikeGMN10}
Pike, L., Goodloe, A., Morisset, R., Niller, S.: Copilot: {A} hard real-time
  runtime monitor.
\newblock In: H.~Barringer, Y.~Falcone, B.~Finkbeiner, K.~Havelund, I.~Lee,
  G.J. Pace, G.~Rosu, O.~Sokolsky, N.~Tillmann (eds.) {RV} 2010, \emph{LNCS},
  vol. 6418, pp. 345--359. Springer (2010).
\newblock \doi{10.1007/978-3-642-16612-9_26}

\bibitem{DBLP:conf/fm/PnueliZ06}
Pnueli, A., Zaks, A.: {PSL} model checking and run-time verification via
  testers.
\newblock In: J.~Misra, T.~Nipkow, E.~Sekerinski (eds.) {FM} 2006, \emph{LNCS},
  vol. 4085, pp. 573--586. Springer (2006).
\newblock \doi{10.1007/11813040_38}

\bibitem{Rodionova:2016:TLF:2883817.2883839}
Rodionova, A., Bartocci, E., Nickovic, D., Grosu, R.: Temporal logic as
  filtering.
\newblock In: Proceedings of the 19th International Conference on Hybrid
  Systems: Computation and Control, HSCC '16, pp. 11--20. ACM, New York, NY,
  USA (2016).
\newblock \doi{10.1145/2883817.2883839}.
\newblock \urlprefix\url{http://doi.acm.org/10.1145/2883817.2883839}

\bibitem{Roesch:1999:SLI:1039834.1039864}
Roesch, M.: Snort: Lightweight intrusion detection for networks.
\newblock In: D.W. Parter (ed.) LISA 1999, pp. 229--238. {USENIX} (1999)

\bibitem{DBLP:conf/tacas/SchwoonE05}
Schwoon, S., Esparza, J.: A note on on-the-fly verification algorithms.
\newblock In: N.~Halbwachs, L.D. Zuck (eds.) {TACAS} 2005, \emph{LNCS}, vol.
  3440, pp. 174--190. Springer (2005).
\newblock \doi{10.1007/978-3-540-31980-1_12}

\bibitem{DBLP:conf/atva/SilvettiNBB18}
Silvetti, S., Nenzi, L., Bartocci, E., Bortolussi, L.: Signal convolution
  logic.
\newblock In: Lahiri and Wang  \cite{DBLP:conf/atva/2018}, pp. 267--283.
\newblock \doi{10.1007/978-3-030-01090-4_16}

\bibitem{DBLP:conf/csl/TabuadaN16}
Tabuada, P., Neider, D.: Robust linear temporal logic.
\newblock In: J.~Talbot, L.~Regnier (eds.) {CSL} 2016, \emph{LIPIcs}, vol.~62,
  pp. 10:1--10:21. Schloss Dagstuhl - LZI (2016).
\newblock \doi{10.4230/LIPIcs.CSL.2016.10}

\bibitem{DBLP:journals/acta/TorfahZ18}
Torfah, H., Zimmermann, M.: The complexity of counting models of linear-time
  temporal logic.
\newblock Acta Inf. \textbf{55}(3), 191--212 (2018).
\newblock \doi{10.1007/s00236-016-0284-z}

\bibitem{Zhang:2012}
Zhang, X., Leucker, M., Dong, W.: Runtime verification with predictive
  semantics.
\newblock In: A.~Goodloe, S.~Person (eds.) {NFM} 2012, \emph{LNCS}, vol. 7226,
  pp. 418--432. Springer (2012).
\newblock \doi{10.1007/978-3-642-28891-3_37}

\end{thebibliography}

\clearpage
\appendix

\begin{filecontents}[nosearch, noheader]{resultslow-generated.csv}
Id,Property,Line,rltlstates,rltloutputs,rltlmon,rltltime,ltlstates,ltloutputs,ltlmon,Ltltime
1,Constrained Response-chain 2-1,20,3,2,\cmark,2.06,1,1,\xmark,0.07
2,Constrained 3-2 Response Chain,25,3,2,\cmark,1.79,1,1,\xmark,0.03
3,Constrained 3-2 Response Chain,32,4,2,\cmark,24.21,1,1,\xmark,284.53
4,Constrained 2-1 Response Chain,37,3,2,\cmark,1.94,1,1,\xmark,0.03
5,Existence,42,6,4,\cmark,1.71,3,2,\cmark,0.02
6,2 Bounded Existence,51,8,2,\cmark,3.44,1,1,\xmark,0.02
7,Response,55,2,2,\cmark,2.17,1,1,\xmark,0.02
8,Existence,60,6,4,\cmark,1.59,3,2,\cmark,0.02
9,Existence,64,3,3,\cmark,1.23,3,3,\cmark,0.01
10,Existence,68,2,2,\cmark,1.11,1,1,\xmark,0.02
11,Existence,72,2,2,\cmark,1.12,1,1,\xmark,0.01
12,Existence,76,2,2,\cmark,1.12,1,1,\xmark,0.02
13,Response,80,2,2,\cmark,2.15,1,1,\xmark,0.01
14,Response,84,2,2,\cmark,2.15,1,1,\xmark,0.02
15,Existence,89,5,3,\cmark,1.14,3,2,\cmark,0.01
16,Absence,93,4,3,\cmark,2.34,4,3,\cmark,0.02
17,Absence,98,4,2,\cmark,3.10,4,2,\cmark,0.02
18,Absence,103,5,4,\cmark,5.07,3,2,\cmark,0.02
19,Absence,108,4,2,\cmark,2.08,4,2,\cmark,0.02
20,Response,112,2,2,\cmark,1.64,1,1,\xmark,0.01
21,Response,116,2,2,\cmark,1.61,1,1,\xmark,0.01
22,GlobalResponse,121,2,2,\cmark,6.67,1,1,\xmark,0.04
23,Precedence,132,5,5,\cmark,1.35,3,3,\cmark,0.01
24,Absence,143,4,4,\cmark,1.53,2,2,\cmark,0.03
25,Response,157,2,2,\cmark,1.34,1,1,\xmark,0.02
26,Universal,164,4,4,\cmark,1.67,2,2,\cmark,0.02
27,Absence,169,4,2,\cmark,2.12,4,2,\cmark,0.03
28,Absence,173,4,2,\cmark,2.10,4,2,\cmark,0.03
29,Absence,177,4,3,\cmark,2.67,4,3,\cmark,0.86
30,Absence,181,3,2,\cmark,1.74,3,2,\cmark,0.01
31,Absence,187,3,2,\cmark,3.20,3,2,\cmark,0.42
32,Universal,191,4,4,\cmark,1.44,2,2,\cmark,0.01
33,Response,199,5,2,\cmark,8.33,5,2,\cmark,0.34
34,Precedence,205,5,5,\cmark,1.50,3,3,\cmark,0.01
35,Absence,212,5,4,\cmark,6.30,3,2,\cmark,0.02
36,Absence,219,5,4,\cmark,6.59,3,2,\cmark,0.02
37,Absence,223,4,3,\cmark,1.44,4,3,\cmark,0.02
38,Absence,229,5,4,\cmark,4.17,3,2,\cmark,0.02
39,Constrained Response-Chain 3-1,239,3,2,\cmark,35.40,1,1,\xmark,319.74
40,Absence,245,7,4,\cmark,35.57,4,2,\cmark,2.11
41,Absence,249,4,3,\cmark,1.42,4,3,\cmark,0.01
42,Absence,254,5,4,\cmark,4.04,3,2,\cmark,0.02
43,Universal,258,4,4,\cmark,1.38,2,2,\cmark,0.02
44,Response,262,4,3,\cmark,1.81,4,3,\cmark,0.02
45,Response,266,2,2,\cmark,1.45,1,1,\xmark,0.02
46,Response,271,2,2,\cmark,1.59,1,1,\xmark,0.02
47,Response,275,2,2,\cmark,1.46,1,1,\xmark,0.01
48,Response,288,2,2,\cmark,1.79,1,1,\xmark,0.02
\end{filecontents}

\begin{filecontents}[nosearch, noheader]{resultshigh-generated.csv}
Id,Property,Line,rltlstates,rltloutputs,rltlmon,rltltime,ltlstates,ltloutputs,ltlmon,Ltltime
50,Constrained Response,314,6,4,\cmark,1.52,3,2,\cmark,0.02
51,Absence,318,4,4,\cmark,1.26,2,2,\cmark,0.01
52,Response,322,2,2,\cmark,1.35,1,1,\xmark,0.02
53,Response,326,2,2,\cmark,1.35,1,1,\xmark,0.01
54,Unknown,330,8,4,\cmark,2.54,3,2,\cmark,0.02
55,Existence,334,2,2,\cmark,0.95,2,2,\cmark,0.01
56,Unknown,338,2,2,\cmark,1.34,1,1,\xmark,0.02
57,Unknown,342,2,2,\cmark,1.31,1,1,\xmark,0.01
58,Response,346,2,2,\cmark,550.84,1,1,\xmark,2.63
59,Unknown,350,6,3,\cmark,0.88,6,3,\cmark,0.02
60,Unknown,354,3,2,\cmark,0.95,3,2,\cmark,0.02
61,Existence,358,2,2,\cmark,0.85,2,2,\cmark,0.01
62,Existence,362,2,2,\cmark,0.88,1,1,\xmark,0.02
63,Always,366,2,2,\cmark,1.03,1,1,\xmark,0.01
64,Universal,370,2,2,\cmark,0.89,1,1,\xmark,0.01
65,Universal,374,3,2,\cmark,1.31,3,2,\cmark,0.01
66,Universal,378,2,2,\cmark,0.88,1,1,\xmark,0.01
67,Existence,382,2,2,\cmark,0.82,2,2,\cmark,0.02
68,Absence,386,6,4,\cmark,1.45,3,2,\cmark,0.01
69,Response,390,2,2,\cmark,1.16,1,1,\xmark,0.01
70,Existence,394,3,2,\cmark,1.11,3,2,\cmark,0.01
71,GlobalUniversal GlobalAbsence,398,4,2,\cmark,1.29,1,1,\xmark,0.01
72,Response,402,2,2,\cmark,1.39,1,1,\xmark,0.01
73,Universal,410,4,4,\cmark,1.49,2,2,\cmark,0.01
74,Response,418,2,2,\cmark,1.23,1,1,\xmark,0.01
75,Response,422,2,2,\cmark,1.17,1,1,\xmark,0.01
76,Response Chain 1-2,427,2,2,\cmark,2.40,1,1,\xmark,0.01
77,Universal,431,4,4,\cmark,1.11,2,2,\cmark,0.01
78,Absence,436,4,4,\cmark,2.63,2,2,\cmark,0.01
79,Absence,440,4,4,\cmark,1.01,2,2,\cmark,0.01
80,Universal,444,4,4,\cmark,1.12,2,2,\cmark,0.01
81,Universal,448,4,4,\cmark,1.10,2,2,\cmark,0.01
82,Universal,452,4,4,\cmark,1.12,2,2,\cmark,0.01
83,Response,456,2,2,\cmark,1.17,1,1,\xmark,0.01
84,Absence,460,6,4,\cmark,1.50,3,2,\cmark,0.01
85,Universal,464,4,4,\cmark,1.01,2,2,\cmark,0.01
86,Absence,468,4,2,\cmark,1.63,4,2,\cmark,0.02
87,Response,473,2,2,\cmark,1.30,1,1,\xmark,0.01
88,Universal,477,4,4,\cmark,1.11,2,2,\cmark,0.01
89,Response,481,2,2,\cmark,1.17,1,1,\xmark,0.01
90,Response,486,2,2,\cmark,1.27,1,1,\xmark,0.02
91,Response,490,2,2,\cmark,1.17,1,1,\xmark,0.01
92,Existence,494,6,4,\cmark,1.32,3,2,\cmark,0.01
93,Response,498,2,2,\cmark,1.18,1,1,\xmark,0.01
94,Response,502,2,2,\cmark,1.17,1,1,\xmark,0.01
95,Response,506,2,2,\cmark,1.17,1,1,\xmark,0.01
96,Unknown,510,4,4,\cmark,1.25,2,2,\cmark,0.01
97,Universal,514,4,4,\cmark,0.91,2,2,\cmark,0.01
\end{filecontents}

\section{Experimental Results}

The following two tables provide detailed results of our experimental evaluation.

\begin{table*}
\begin{tabular}{l*{8}{r}}
	\toprule
	& \multicolumn{2}{c}{\# States} & \multicolumn{2}{c}{\# Outputs} & \multicolumn{2}{c}{Monitorable} & \multicolumn{2}{c}{Time in s} \\ \cmidrule(lr){2-3} \cmidrule(lr){4-5} \cmidrule(lr){6-7} \cmidrule(lr){8-9}
	Property & \rltl & \ltl & \rltl & \ltl & \rltl & \ltl & \rltl & \ltl \\ \midrule
	\csvreader[head to column names, after line={\\}]{resultslow-generated.csv}{}{\Property&\rltlstates&\ltlstates&\rltloutputs&\ltloutputs&\rltlmon&\ltlmon&\rltltime&\Ltltime}
	\\[-2ex]\bottomrule
\end{tabular}
\caption{Summary of the result when comparing the monitor construction of \rltl against \ltl; continued in Table~\ref{tab:res:second}}
\label{tab:res:first}
\end{table*}

\begin{table*}
\begin{tabular}{l*{8}{r}}
	\toprule
	& \multicolumn{2}{c}{\# States} & \multicolumn{2}{c}{\# Outputs} & \multicolumn{2}{c}{Monitorable} & \multicolumn{2}{c}{Time in s} \\ \cmidrule(lr){2-3} \cmidrule(lr){4-5} \cmidrule(lr){6-7} \cmidrule(lr){8-9}
	Property & \rltl & \ltl & \rltl & \ltl & \rltl & \ltl & \rltl & \ltl \\ \midrule
	\csvreader[head to column names, after line={\\}]{resultshigh-generated.csv}{}{\Property&\rltlstates&\ltlstates&\rltloutputs&\ltloutputs&\rltlmon&\ltlmon&\rltltime&\Ltltime}
	\\[-2ex]\bottomrule
\end{tabular}
\caption{Summary of the result when comparing the monitor construction of \rltl against \ltl; continuation of Table~\ref{tab:res:first}}
\label{tab:res:second}
\end{table*}

\end{document}